\newcommand{\conf}[1]{}
\newcommand{\full}[1]{#1}
 \theoremstyle{plain}
 \newtheorem{theorem}{Theorem}[section]
 \newtheorem{corollary}[theorem]{Corollary}
 \newtheorem{proposition}[theorem]{Proposition}
 \newtheorem{lemma}[theorem]{Lemma}
 \theoremstyle{definition}
 \newtheorem{remark}[theorem]{Remark}
\newcommand{\ZZ}{{\mathbb{Z}}}
\newcommand{\RR}{{\mathbb{R}}}
\DeclareMathOperator{\preclex}{\prec_{lex}}
\DeclareMathOperator{\preceqlex}{\preceq_{lex}}
\DeclareMathOperator{\tol}{tol}
\newcommand{\remove}[1]{}
\title{A Unified Model of Congestion Games with Priorities: 
Two-Sided Markets with Ties, 
Finite and Non-Affine Delay Functions, 
and 
Pure Nash Equilibria}
 \author{
 Kenjiro Takazawa\thanks{Department of Industrial and Systems Engineering, Faculty of Science and Engineering, 
 Hosei University, Tokyo 184-8584, Japan.  
 \texttt{takazawa@hosei.ac.jp}.  
 Supported by 
 JSPS KAKENHI Grant Numbers 
 JP20K11699, 24K02901, JP24K14828, 
 Japan.
 }}
\date{July 2024}
\begin{document}

\maketitle
\begin{abstract}
The study of equilibrium concepts in congestion games and two-sided markets with ties has been a primary topic in game theory, economics, and computer science.  Ackermann, Goldberg, Mirrokni, R\"oglin, V\"ocking (2008) gave a common generalization of these two models, in which a player more prioritized by a resource produces an infinite delay on less prioritized players.  While presenting several theorems on pure Nash equilibria in this model, Ackermann et al.\ posed an open problem of how to design a model in which more prioritized players produce a large but finite delay on less prioritized players.  In this paper, we present a positive solution to this open problem by combining the model of Ackermann et al.\ with a generalized model of congestion games due to Bil\`o and Vinci (2023).  In the model of Bil\`o and Vinci, the more prioritized players produce a finite delay on the less prioritized players, while the delay functions are of a specific kind of affine function, and all resources have the same priorities. By unifying these two models, we achieve a model in which the delay functions may be finite and non-affine, and the priorities of the resources may be distinct.  We prove some positive results on the existence and computability of pure Nash equilibria in our model, which extend those for the previous models and  support the validity of our model. 
\end{abstract}

\maketitle

\section{Introduction}
\label{SECintro}

The study of equilibrium concepts in 
noncooperative games 
is a primary topic in the fields of game theory, economics, and computer science. 
In particular, 
the models of \emph{congestion games} and \emph{two-sided markets with ties} 
have played important roles in the literature. 

\emph{Congestion games}, introduced by 
Rosenthal \cite{Ros73a} in 1973, 
represent the human behaviour of avoiding congestion.  
Each \emph{player} chooses a strategy, 
which is a set of \emph{resources}. 
If a resource is shared by many players, 
then much delay is imposed on those players. 
The objective of a player is to minimize the total delay of the resources in her strategy. 
Rosenthal \cite{Ros73a} proved that 
every congestion game is a \emph{potential game}. 
A noncooperative game is called a potential game if it admits a \emph{potential function}, 
the existence of which guarantees 
the existence of a pure Nash equilibrium. 
Moreover, 
Monderer and Shapley \cite{MS96} proved the converse: 
every potential game can be represented as a congestion game. 
On the basis of these results, 
congestion games 
are recognized as a fundamental model 
in the study of pure Nash equilibria in noncooperative games  (see, e.g.,\ \cite{BV23book,Rou16}). 

A \emph{two-sided market} consists of \emph{agents} and \emph{markets}, 
which have preferences over the other side. 
Each agent chooses a set of markets. 
On the basis of the choices of the agents, 
the markets determine an assignment of the players to the markets 
according to their preferences over the agents. 
The objective of a player is to maximize her payoff, 
which is determined by the assignment. 
Typical special cases of a two-sided market are the stable matching problem  and the 
Hospitals/Residents problem. 
Since the pioneering work of Gale and Shapley \cite{GS62}, 
analyses on equilibria have been a primary topic in the study of two-sided markets, 
and 
a large number of generalized models have been proposed. 
In particular, 
a typical generalization of allowing \emph{ties} in the preferences \cite{Irv94} 
critically changes the difficulty of the analyses (see \cite{GI89,Man13}), 
and attracts intensive interests  \cite{GMMY22,HMY19,HIMY15,HK15,Kam15,Kam19,Kav22,MS10,Yok21}.

In 2008, 
Ackermann, Goldberg, Mirrokni, R\"oglin, and V\"oking \cite{AGMRV08}
introduced a model 
which commonly generalizes congestion games and two-sided markets with ties, 
and is referred to as \emph{congestion games with priorities}. 
\full{This model is briefly described as follows. }%
Each resource $e$ has priorities (preferences) with ties over the players. 
Among the players choosing $e$ in their strategies, 
only the players most prioritized by $e$ receive a finite delay from $e$, 
and the other players receive an infinite delay. 
In other words, 
only the most prioritized players are accepted. 
It is clear that this model generalizes congestion games, 
and it also generalizes a certain model of two-sided markets with ties, 
\emph{correlated two-sided markets with ties}. 

For several classes of their model, 
Ackermann et al.\ \cite{AGMRV08} presented some positive results on 
the existence and computability of pure Nash equilibria. 
These results are summarized in Table \ref{TABAck} 
and 
will be formally described 
in Section \ref{SECAGMRV}. 
In \emph{player-specific congestion games}, 
each resource $e$ has a specific delay function 
$d_{i,e}$ 
for each player $i$.  
In a \emph{singleton congestion game}, 
every strategy of every player consists of a single resource. 
In a \emph{matroid congestion game}, 
the strategies of each player are the bases of a \emph{matroid}. 
\begin{table}
\centering
\caption{Results of Ackermann et al.\ \cite{AGMRV08}. 
``NPS'' stands for ``Non-Player-Specific,''
while ``PS'' stands for ``Player-Specific.''
``Polynomial BR Dynamics'' means that 
there exists a sequence of a polynomial number of best responses reaching a pure Nash equilibrium. 
}
\label{TABAck}
\small
\begin{tabular}{lll}
\toprule
						&\textbf{Consistent Priorities} 	     & \textbf{Inconsistent Priorities} \\ \midrule
\textbf{NPS} 	& \begin{tabular}{l} 
                            \textbf{Polynomial BR Dynamics} \\ 
                            - Singleton Game (Theorem \ref{THMsingletonCP})  \\ 
                            - Matroid Game (Theorem \ref{THMmatroidCP})
                            \end{tabular}
                                                                    & \begin{tabular}{l}
                                                                    \textbf{Potential Function} \\
                                                                    - Singleton Game (Theorem \ref{THMsingletonIP})\\
                                                                    - Matroid Game (Theorem \ref{THMmatroidIP})
                                                                    \end{tabular}\\ \midrule
\textbf{PS}      	&   ---                                     &  \begin{tabular}{l}
                                                                    \textbf{Potential Function} \\
                                                                    - Two-Sided Singleton Market (Theorem \ref{THMcorrelated}) \\
                                                                    - Two-Sided Matroid Market (Theorem \ref{THMcorrelatedmatroid}) \\
                                                                       \textbf{Polynomial Algorithm}   \\
                                                                       - Singleton Game (Theorem \ref{THMpssingletonIP})\\
                                                                       - Matroid Game (Theorem \ref{THMpsmatroidIP})
                                                                    \end{tabular}
\\ \bottomrule
\end{tabular}
\end{table}

Meanwhile, 
Ackermann et al.\ \cite{AGMRV08} posed an open question of 
how to design a model in which 
the less prioritized players receive a finite delay
caused by 
the more prioritized players. 
We appreciate the importance of this question because 
such a model can include the many-to-many models of the stable matching problem \cite{BB00,BAM07,EMMM12,Fle03,Sot99a,Sot99b} (see also \cite{Man13}). 
In congestion games, 
the fact that 
each resource accepts multiple players is essential, 
since the number of those players determine the cost of the resource. 
In the models of stables matchings in which each market accepts multiple agents, 
the preferences of the markets indeed affect the stability of the matchings, 
but 
it is not the case that only the most preferred agents are accepted. 
Thus, 
such a generalization of congestion games with priorities suggested in \cite{AGMRV08} 
is crucial to attain a more reasonable generalization of the stable matching problem. 

The contributions of the paper are described as follows. 
We first point out that 
a generalized model of congestion games by Bil\`o and Vinci \cite{BV23tcs}
partially answers to the question posed by Ackermann et al.\ \cite{AGMRV08}. 
In their model, 
the players more prioritized by a resource indeed produce a finite delay on the less prioritized players. 
Meanwhile, 
this model only covers a special case of the model of Ackermann et al.\ \cite{AGMRV08} 
in which 
the delay functions are of a specific kind of affine functions 
and all resources have the same priorities over the players. 
We refer to the model of 
\cite{BV23tcs} as 
a \emph{priority-based affine congestion game with consistent priorities}.

A main contribution of this paper is to design a model which gives a positive and full answer to the open problem of 
Ackermann et al.\ \cite{AGMRV08}. 
By 
unifying the models of 
Ackermann et al.\ \cite{AGMRV08} 
and 
Bil\`o and Vinci \cite{BV23tcs}, 
we present a model of congestion games with priorities 
in which the more prioritized players produce a finite delay on 
the less prioritized players, 
the delay function may be non-affine, 
and 
the priorities of the resources may be inconsistent. 
We refer to our model as a \emph{priority-based congestion games with (in)consistent priprities}. 
We then prove some positive results on 
the existence and computability of pure Nash equilibria 
in our model, 
which extend those for the previous models \cite{AGMRV08,BV23tcs} 
and support the validity of our model. 
Our technical results are summarized in Table \ref{TABours}.

\begin{table}
\centering
\caption{Summary of Our Results. 
``NPS'' stands for ``Non-Player-Specific,''
while ``PS'' stands for ``Player-Specific.''
``Polynomial BR Dynamics'' means that 
there exists a sequence of a polynomial number of better responses reaching a pure Nash equilibrium. 
``PNE'' stands for ``Pure Nash Equilibrium.''
}
\label{TABours}
\small
\begin{tabular}{lll}
\toprule
						&\textbf{Consistent Priorities} 	     & \textbf{Inconsistent Priorities} \\ \midrule
\textbf{NPS} 	& \begin{tabular}{l} 
                            \textbf{Polynomial BR Dynamics} \\ 
                            - Singleton Game (Theorem \ref{THMpbsingletonCP}) \\ 
                            - Matroid Game (Theorem \ref{THMpspbmatroidCP}) \\
                            \textbf{Existence of a PNE} \\
                            - General Game (Theorem \ref{THMgeneral})
                            \end{tabular}
                                                                    & \begin{tabular}{l}
                                                                    \textbf{Potential Function} \\
                                                                    - Singleton Game (Theorem \ref{THMpbsingletonIP})\\
                                                                    - Matroid Game (Theorem \ref{THMpbmatroidIP})
                                                                    \end{tabular}\\ \midrule
\textbf{PS}      	& \begin{tabular}{l}
                            \textbf{Polynomial BR Dynamics} \\ 
                            - Singleton Game (Theorem \ref{THMpbsingletonCP})\\ 
                            - Matroid Game (Theorem \ref{THMpspbmatroidCP})
                            \end{tabular}
                                                                    &  \begin{tabular}{l}
                                                                    \textbf{Potential Function} \\
                                                                    - Two-Sided Singleton Market (Theorem \ref{THMtwosidedsingleton})\\
                                                                    - Two-Sided Matroid Market (Theorem \ref{THMtwosidedmatroid}) \\
                                                                       \textbf{Existence of a PNE}   \\
                                                                       - Singleton Game (Theorem \ref{THMpspbsingletonIP})\\
                                                                       - Matroid Game (Theorem \ref{THMpspbmatroidIP})
                                                                    \end{tabular}
\\ \bottomrule
\end{tabular}
\end{table}

The rest of the paper is organized as follows. 
We review previous results in Section \ref{SECpre}. 
Emphases are put on a formal description of the model and results of 
congestion games with priorities \cite{AGMRV08}. 
In Section \ref{SECourmodel}, 
we describe our model of priority-based congestion games. 
In Section \ref{SECsingleton}, 
we present some positive results on pure Nash equilibria in 
priority-based singleton congestion games. 
Section \ref{SECg-corr} is devoted to a description of 
how correlated two-sided markets with ties are generalized in our model. 
Finally, 
in Section \ref{SECbeyond}, 
we deal with priority-based congestion games which are not singleton games.

\section{Preliminaries}
\label{SECpre}

Let $\ZZ$ denote the set of the integers, 
and $\RR$ that of the real numbers. 
Subscripts $+$ and ${++}$ represent that the set consists of 
nonnegative numbers and positive numbers, 
respectively. 
\full{For instance, 
$\RR_+$ denotes the set of the nonnegative real numbers and 
$\ZZ_{++}$ that of the positive integers. }

\subsection{Congestion Games}
\label{SECconge}

A congestion game is described by a tuple 
\full{$$
(N,E, (\mathcal{S}_i)_{i\in N}, (d_e)_{e\in E}). 
$$}%
\conf{$
(N,E, (\mathcal{S}_i)_{i\in N}, (d_e)_{e\in E})
$. }%
Here, 
$N = \{1,\ldots, n\}$ denotes the set of the players 
and $E$ that of the resources. 
Each player $i\in N$ has her \emph{strategy space} $\mathcal{S}_i\subseteq 2^E$, 
and chooses a \emph{strategy} $S_i\in \mathcal{S}_i$. 
The collection $(S_1,\ldots, S_n)$ of the chosen strategies is called a \emph{strategy profile}. 
For a resource $e\in E$ and a strategy profile $S = (S_1,\ldots, S_n)$, 
let $N_e(S)\subseteq N$ denote the set of players whose strategy includes $e$, 
and let $n_e(S) \in \ZZ_+$ denote the size of $N_e(S)$, 
i.e.,\  %
\conf{$N_e(S) = \{ i\in N \colon e \in S_i \}$ and $n_e(S)=|N_e(S)|$.}
\full{$$N_e(S) = \{ i\in N \colon e \in S_i \}, \quad n_e(S)=|N_e(S)|.$$ }

Each resource $e\in E$ has its \emph{delay function} $d_e\colon \ZZ_{++}\to \RR_+$. 
In a strategy profile $S$, 
the function value $d_e(n_e(S))$ represents the delay of a resource $e\in E$. 
\full{The objective of each player is to minimize 
her cost, 
which is the sum of the delays of the resources in her strategy. 
Namely, 
the cost $\gamma_i(S)$ imposed on a player $i\in N$ 
in a strategy profile $S$ 
is defined as 
$\gamma_i(S)=\sum_{e\in S_i}d_e(n_e(S))$, 
which is to be minimized. 
}%
\conf{The objective of each player $i\in N$ is to minimize 
her cost 
$\gamma_i(S)=\sum_{e\in S_i}d_e(n_e(S))$.}

\full{For a strategy profile $S=(S_1,\ldots, S_n)$ and a player $i\in N$, }
\conf{Let $S=(S_1,\ldots, S_n)$ be a strategy profile and $i\in N$.}
Let $S_{-i}$ denote a collection of the strategies in $S$ other than $S_i$, 
namely 
$S_{-i}=(S_1,\ldots,S_{i-1},S_{i+1},\ldots,  S_n)$. 
\full{A \emph{better response} of a player in a strategy profile is 
a change of her strategy so that her cost strictly decreases. 
Namely, 
when $i \in N$ changes her strategy from $S_i$ to $S_i'$ 
in a strategy profile $S$, 
it is a better response if 
$\gamma_i(S_{-i},S_i') < \gamma_i(S)$. }%
\conf{A \emph{better response} of $i$ in $S$ is 
to change her strategy from $S_i$ to $S_i'$ 
such that 
$\gamma_i(S_{-i},S_i') < \gamma_i(S)$.}
In particular, 
a better response from $S_i$ to $S_i'$ is a \emph{best response} if 
$S_i'$ minimizes $\gamma_i(S_{-i},S_i')$. 
A \emph{pure Nash equilibrium} is a strategy profile 
in which no player has a better response. 
\full{
Namely, 
a strategy profile $S$ is a pure Nash equilibrium 
if 
\begin{align*}
    \gamma_i(S) \le \gamma_i(S_{-i},S_i') \quad \mbox{for each player $i\in N$ and each of her strategy $S_i' \in \mathcal{S}_{i}$}. 
\end{align*}
}

A \emph{potential function} $\Phi$ is one 
which is 
defined on the set of the strategy profiles 
and satisfies %
\full{$$
\Phi(S_{-i}, S_i') - \Phi(S) = \gamma_i(S_{-i}, S_i') - \gamma_i(S)
$$}%
\conf{$
\Phi(S_{-i}, S_i') - \Phi(S) = \gamma_i(S_{-i}, S_i') - \gamma_i(S)
$}
for each strategy profile $S$, 
each player $i\in N$, 
and each strategy $S_i' \in \mathcal{S}$. 
The existence of a potential function implies the existence of a pure Nash equilibrium, 
because a strategy profile minimizing the potential function must be a pure Nash equilibrium. 
A game admitting a potential function is referred to as a \emph{potential game}.  
The following theorem is a primary result on congestion games, stating that 
each congestion game is a potential game and vice versa. 
\begin{theorem}[\cite{MS96,Ros73a}]
\label{THMconge}
    A congestion game is a potential game, 
    and hence possesses a pure Nash equilibrium. 
    Moreover, 
    every potential game is represented as a congestion game. 
\end{theorem}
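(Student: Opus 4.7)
My plan is to prove Theorem~\ref{THMconge} in two parts. For the forward direction I would exhibit Rosenthal's explicit potential
\[
\Phi(S) \;:=\; \sum_{e\in E} \sum_{k=1}^{n_e(S)} d_e(k),
\]
and verify the potential identity by a direct comparison of $\Phi(S)$ and $\Phi(S_{-i},S_i')$ for an arbitrary unilateral deviation. Fix $S$, a player $i\in N$, and a strategy $S_i'\in\mathcal{S}_i$, set $S' := (S_{-i},S_i')$, and partition the resources touched by $S_i\cup S_i'$ into $S_i\cap S_i'$, $S_i\setminus S_i'$, and $S_i'\setminus S_i$. For resources in $S_i\cap S_i'$ the occupancy $n_e$ is unchanged, so they contribute equally to $\Phi(S')$ and $\Phi(S)$, and equally to $\gamma_i(S')$ and $\gamma_i(S)$; these contributions cancel on both sides. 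For $e\in S_i\setminus S_i'$ the occupancy drops from $n_e(S)$ to $n_e(S)-1$, contributing $-d_e(n_e(S))$ both to $\Phi(S')-\Phi(S)$ and to $\gamma_i(S')-\gamma_i(S)$; symmetrically, for $e\in S_i'\setminus S_i$ the occupancy rises to $n_e(S)+1=n_e(S')$, contributing $+d_e(n_e(S'))$ to both differences. Summing yields $\Phi(S_{-i},S_i')-\Phi(S)=\gamma_i(S_{-i},S_i')-\gamma_i(S)$, so $\Phi$ is a potential.

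The existence of a pure Nash equilibrium is then immediate: the product strategy space $\prod_i \mathcal{S}_i$ is finite, so $\Phi$ attains a global minimum at some $S^\ast$, and by the potential identity no player can strictly decrease her cost by unilateral deviation at $S^\ast$, so $S^\ast$ is a pure Nash equilibrium. It is worth noting that the same argument shows that every sequence of better responses strictly decreases $\Phi$ and must therefore terminate at a pure Nash equilibrium.

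For the converse direction, I would follow the Monderer--Shapley encoding, which realizes an arbitrary potential game as a congestion game by introducing, for each coalition $T\subseteq N$ and each partial strategy profile $\sigma_T\in\prod_{j\in T}\mathcal{S}_j$, a dedicated resource that is selected precisely by those players in $T$ whose chosen strategy matches $\sigma_T$. The delay functions on these resources are built from the discrete differences of the given potential in such a way that, after summation, the Rosenthal potential of the constructed congestion game reproduces $\Phi$ up to an additive constant and, simultaneously, the induced individual cost functions coincide with those of the original game. The main obstacle is the bookkeeping required to choose the delay functions consistently across all coalitions and partial profiles, and then to check that the resulting congestion cost of each player truly equals her original cost; this is the technical core of the Monderer--Shapley argument, and once it is carried out both equivalences of the theorem follow.
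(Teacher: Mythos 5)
The paper does not actually prove Theorem~\ref{THMconge}; it is quoted as a classical result of Rosenthal \cite{Ros73a} and Monderer--Shapley \cite{MS96}, so there is no in-paper argument to compare against. Judged on its own terms, your forward direction is complete and correct: the potential $\Phi(S)=\sum_{e\in E}\sum_{k=1}^{n_e(S)}d_e(k)$ is exactly Rosenthal's, and your three-way partition of $S_i\cup S_i'$ into $S_i\cap S_i'$, $S_i\setminus S_i'$, and $S_i'\setminus S_i$ correctly matches the change in $\Phi$ against the change in $\gamma_i$ term by term (using that $n_e(S')=n_e(S)+1$ for entering resources and that resources outside $S_i\cup S_i'$ keep their occupancy). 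The deduction of a pure Nash equilibrium from finiteness of $\prod_i\mathcal{S}_i$ and minimality of $\Phi$ is also fine, and the observation about better-response dynamics terminating is a correct bonus.

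The converse half, however, is only a gesture at a proof, and that is where the genuine gap lies. You name the Monderer--Shapley encoding --- one resource per pair $(T,\sigma_T)$ of a coalition and a partial profile, selected by exactly the players in $T$ playing according to $\sigma_T$ --- but you defer precisely the step that makes the construction work: the definition of the delay values. In the actual argument one must set the delays so that the sum over all coalition-resources a player occupies telescopes, via an inclusion--exclusion (M\"obius-type) assignment built from discrete differences of the potential plus the player-specific residuals $\gamma_i-\Phi$ (which in a potential game depend only on $S_{-i}$), and then verify that each player's congestion cost reproduces her original cost. Saying ``the bookkeeping is the technical core and once it is carried out the theorem follows'' concedes that the core is not carried out; as written, the converse direction is an accurate description of a known proof rather than a proof. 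If you intend this statement to be cited rather than reproved --- as the paper does --- that is defensible, but then you should say so explicitly rather than present an incomplete derivation. One further small point: your construction as described introduces a resource for every pair $(T,\sigma_T)$, which also requires you to specify which of these resources belong to each strategy of each player and to check that the resulting set system is consistent with the original strategy spaces; that, too, is part of the unfinished bookkeeping.
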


Hereafter, we assume that each delay function $d_e$ ($e\in E$) is monotonically nondecreasing, 
i.e.,\ 
$d_e(x) \le d_e(x')$ if $x<x'$.

Study on congestion games from the viewpoint of \emph{algorithmic game theory} \cite{BV23book,NRTV07,Rou16} 
has appeared since around 2000. 
For singleton congestion games, 
Ieong, McGrew, Nudelman, Shoham, and Sun \cite{IMNSS05} proved %
\full{that a pure Nash equilibrium in a singleton congestion game can be attained 
after a polynomial number of better responses. }
\conf{the following theorem.}

\begin{theorem}[\cite{IMNSS05}]
\label{THMsingleton}
In a singleton congestion game, 
starting from an arbitrary strategy profile, 
a pure Nash equilibrium is attained after a polynomial number of better %
\full{(hence, best)}%
responses. 
\end{theorem}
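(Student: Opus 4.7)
The plan is to introduce a lexicographic potential on the sorted player-cost vector and show it strictly decreases under any better response, then convert this termination guarantee into a polynomial bound via an accounting that exploits the singleton structure. Given a strategy profile $S$, define $\mathbf{c}(S) = (c_1(S), c_2(S), \ldots, c_n(S))$ to be the non-increasing rearrangement of the $n$ player costs $\gamma_1(S), \ldots, \gamma_n(S)$.

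First I would establish the lex-decrease. Suppose $S'$ is obtained from $S$ by a better response of player $i$ moving from resource $e$ to resource $e'$, and set $a := d_e(n_e(S)) = \gamma_i(S)$. By the better-response condition, $\gamma_i(S') = d_{e'}(n_{e'}(S)+1) < a$. By monotonicity of the delay functions, every remaining player on $e$ has new cost $d_e(n_e(S)-1) \le a$, every other player on $e'$ has new cost $d_{e'}(n_{e'}(S)+1) < a$, and the costs of all other players are unchanged. Hence no player crosses upward to a cost $\ge a$, while player $i$ strictly drops out of $\{j : \gamma_j \ge a\}$. Comparing $\mathbf{c}(S)$ and $\mathbf{c}(S')$ from the top, the two vectors coincide on the prefix of entries strictly greater than $a$, and at the first position whose entry in $\mathbf{c}(S)$ equals $a$, the corresponding entry of $\mathbf{c}(S')$ is strictly smaller. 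Therefore $\mathbf{c}(S') \preclex \mathbf{c}(S)$, which immediately yields termination of every better-response sequence (and hence of every best-response sequence).

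To obtain the polynomial bound, I would observe that every entry of $\mathbf{c}(S)$ belongs to the finite discrete set $\mathcal{V} := \{ d_e(k) : e \in E,\ 1 \le k \le n \}$ of cardinality at most $n|E|$. Combining the lex-decrease with the fact that each better response alters the congestion of only two resources, one can charge each step to a bounded ``critical event'' (a strict drop at some coordinate of $\mathbf{c}$ relative to the stabilized prefix above it) and then bound the total number of such events polynomially in $n$ and $|E|$; equivalently, one can argue via the Rosenthal-type potential $\Phi(S) := \sum_{e \in E} \sum_{k=1}^{n_e(S)} d_e(k)$, which strictly decreases by $\gamma_i(S) - \gamma_i(S') > 0$ at each better response, and combine it with the discrete gap structure of $\mathcal{V}$. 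The main obstacle is precisely this last step: lex-decrease on its own only guarantees termination, and a naive count of the non-increasing vectors over $\mathcal{V}$ is superpolynomial, so the singleton structure must be used to localize the effect of a move and keep each ``critical event'' from being charged more than a polynomial number of times.
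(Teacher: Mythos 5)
Your first paragraph is correct: the sorted-cost-vector argument does establish a strict lexicographic decrease under every better response (the key observations—that the players on $e'$ had old cost $d_{e'}(n_{e'}(S)) \le d_{e'}(n_{e'}(S)+1) < a$, so no cost ever rises to a value $\ge a$, while player $i$'s entry drops strictly below $a$—are all sound). But this only proves termination, and you say so yourself. The theorem's entire content beyond Rosenthal's existence result is the \emph{polynomial} bound, and your second paragraph never delivers it: the ``critical event'' charging scheme is left unspecified, and the appeal to the ``discrete gap structure'' of $\mathcal{V}$ does not work as stated, since the Rosenthal potential $\Phi(S)=\sum_{e\in E}\sum_{k=1}^{n_e(S)}d_e(k)$ decreases by $\gamma_i(S)-\gamma_i(S')$, which can be arbitrarily small relative to the spread of values in $\mathcal{V}$, so no polynomial step count follows from discreteness alone. (For calibration: the paper under review does not prove this statement at all—it imports it from Ieong et al.\ [IMNSS05]—so the comparison here is against that standard proof.)

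The missing idea is a rank-replacement trick, and it makes all of your lexicographic machinery unnecessary. Since a better response depends only on the \emph{order} of delay values, not their magnitudes, replace each value by its rank: let $r(e,k)$ be the number of distinct values in $\mathcal{V}=\{d_{e'}(k') : e'\in E,\ 1\le k'\le n\}$ that are at most $d_e(k)$, so $1\le r(e,k)\le nm$ and $d_e(k)<d_{e'}(k')$ implies $r(e,k)<r(e',k')$. Define the integer potential
\begin{align*}
\Phi'(S)\;=\;\sum_{e\in E}\,\sum_{k=1}^{n_e(S)} r(e,k).
\end{align*}
In a singleton game, a better response of player $i$ from $e$ to $e'$ removes exactly the term indexed $(e,n_e(S))$ and adds exactly the term $(e',n_{e'}(S)+1)$; the better-response condition $d_{e'}(n_{e'}(S)+1)<d_e(n_e(S))$ forces $r(e',n_{e'}(S)+1)<r(e,n_e(S))$, so $\Phi'$ decreases by at least $1$ at every step. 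Since $0\le \Phi'(S)\le n\cdot nm$, every better-response (hence every best-response) sequence has length $O(n^2 m)$, which is the claimed bound. Note where the singleton structure enters—precisely where you suspected it must: each move perturbs the multiset of terms $\{(e,k): k\le n_e(S)\}$ by a single swap whose two values are strictly ordered, which is what fails for general strategy spaces.
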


This theorem is followed by a large number of extensions.  
Recall that a \emph{player-specific congestion game} is one  
in which each resource $e \in E$ has a delay function $d_{i,e}\colon \ZZ_{++}\to \RR_+$ 
specific to each player $i\in N$. 
\full{Milchtaich \cite{Mil96} proved the following theorem for player-specific singleton congestion games. }

\begin{theorem}[\cite{Mil96}]
\label{THMpssingleton}
In a player-specific singleton congestion game, 
there exists a sequences of polynomial number of best responses 
starting from an arbitrary strategy profile and 
reaching a pure Nash equilibrium. 
\end{theorem}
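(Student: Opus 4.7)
The plan is to handle the obstruction that player-specific singleton congestion games need not admit a potential function (a cyclic better-response example with three players, three resources, and player-specific delays is due to Milchtaich), so Theorem~\ref{THMconge} cannot be invoked directly. Instead, I would construct an explicit \emph{schedule} of best responses and bound its length by a polynomial in $n$ and $m := |E|$.

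To drive the schedule, I would associate to every strategy profile $S$ the sorted congestion vector $\sigma(S) = (n_{e_{(1)}}(S), \ldots, n_{e_{(m)}}(S))$, where the resources are listed in weakly decreasing order of $n_{e}(S)$. The rule is: at any non-equilibrium profile $S$, among all pairs $(i,e')$ such that moving player $i$ from her current resource $e$ to $e'$ is a best response, pick one that minimizes the post-move congestion $n_{e'}(S)+1$, breaking ties by smallest player index. The crucial lemma I would prove is that any such minimally chosen best response satisfies $n_{e'}(S)+1 \le n_e(S)$. The argument is an exchange one: if instead $n_{e'}(S) \ge n_e(S)$, then because $d_{i,e'}$ is monotonically non-decreasing and $d_{i,e'}(n_{e'}(S)+1) < d_{i,e}(n_e(S))$, one can locate another profitable deviation (either by the same player to a lower-congestion resource or by a different player whose best response targets a lower-congestion resource) that violates the minimality of $n_{e'}(S)+1$; working through the cases uses only monotonicity of the player-specific delays and the fact that a single move changes exactly two resource counts.

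Given the lemma, a single carefully chosen best response decreases $n_e$ by one and increases $n_{e'}$ by one with $n_{e'}(S)+1 \le n_e(S)$, and the remaining resource counts are untouched; by a standard rearrangement argument this strictly lex-decreases $\sigma(S)$. The scheduled dynamics therefore terminates at a pure Nash equilibrium. Turning termination into a polynomial bound is the main quantitative step: a naive lex-bound gives exponentially many possible vectors, so I would refine the analysis by a potential of the form $\Phi(S) = \sum_{e \in E} f(n_e(S))$ with a convex weight $f$ (for instance $f(k) = k^2$), and show that each selected move decreases $\Phi$ by at least a positive constant; since $\Phi(S) \le n^2$ throughout, this yields $O(n^2)$ moves, and incorporating the per-step overhead of locating the minimizing deviation gives an overall polynomial bound in $n$ and $m$.

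The main obstacle is the key lemma: the exchange argument must carefully handle the case distinctions arising from \emph{player-specific} delays (where two different players may rank the same resources incomparably), and it is exactly here that the minimizing-destination tie-breaking is indispensable. Once the lemma is in hand, both lexicographic termination and the convex-potential polynomial bound follow by routine calculation.
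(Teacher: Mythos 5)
This statement is quoted from Milchtaich \cite{Mil96}; the paper gives no proof of it (it is used as a black box, e.g.\ in the proof of Theorem~\ref{THMpbsingletonCP}), so the comparison can only be against Milchtaich's argument. Your proposal does not follow that argument, and its key lemma is false. Consider a profile $S$ in which player $1$ sits alone on resource $a$ with $d_{1,a}(1)=10$, five other players sit on resource $b$ with $d_{j,e}(k)=0$ for every $j\neq 1$ and every $e,k$ (so none of them ever has a strictly better response), and $d_{1,b}(k)=1$ for all $k$ while $d_{1,c}(k)=100$ for every other resource $c$ in player $1$'s strategy space. All delays are monotone, $S$ is not an equilibrium, and the \emph{only} profitable deviation available to any player is player $1$ moving from $a$ (congestion $1$) to $b$ (congestion $5\to 6$). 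Hence the minimally chosen best response has $n_{e'}(S)+1=6>1=n_e(S)$: the exchange argument cannot locate any alternative deviation because, with player-specific delays, a heavily congested resource can be the uniquely attractive destination for one particular player. At this step the sorted congestion vector lexicographically \emph{increases} and $\sum_e n_e(S)^2$ jumps from $26$ to $36$, so both your termination argument and your polynomial bound collapse. This is not a repairable detail: any profile-independent potential of the form $\sum_e f(n_e)$ that decreases along best responses would make the game a generalized potential game, whereas Milchtaich exhibits best-response cycles in exactly this class (a fact the paper itself flags in the remark following the theorem).

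The actual proof proceeds by induction on the number of players rather than by a global potential: players are inserted one at a time into an equilibrium of the smaller game; the newly inserted player best-responds, and the resulting cascade of displaced players is shown to keep every congestion value within one of its value in the previous equilibrium, with each cascade terminating after polynomially many moves. This ``insert and control the cascade'' structure is also what the paper's own proofs (of Theorems~\ref{THMpbsingletonCP} and \ref{THMpspbsingletonIP}) are modeled on. If you want to salvage your write-up, you should abandon the congestion-vector potential and reconstruct this inductive schedule instead.
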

Note that Theorem \ref{THMpssingleton} differs from Theorem \ref{THMsingleton} 
in that not any sequence of best responses reaches to a pure Nash equilibrium.

A significant work along this line is due to Ackermann, R\"{o}glin, and V\"{o}king \cite{ARV08,ARV09}, 
who employed the discrete structure of \emph{matroids} into congestion games. 
For a finite set $E$ and its subset family $\mathcal{S}\subseteq 2^E$, 
the pair $(E,\mathcal{S})$ is a \emph{matroid} if $\mathcal{S}\neq \emptyset$ and 
\begin{align}
    \label{EQaxiom}
    \mbox{for $S,S'\in \mathcal{S}$ and $e\in S \setminus S'$, 
    there exists $e' \in S'\setminus S$ such that
    $(S \setminus \{e\})\cup \{e'\} \in \mathcal{S}$.} 
\end{align}
A set in $\mathcal{S}$ is referred to as a \emph{base}. 
It follows from \eqref{EQaxiom} that all bases in $\mathcal{S}$ has the same cardinality, 
which 
is referred to as the \emph{rank} of the matroid $(E,\mathcal{S})$. 

A congestion game $(N, E, (\mathcal{S}_i)_{i\in N}, (d_e)_{e\in E} )$ is referred to as a \emph{matroid congestion game}
if 
$(E,\mathcal{S}_i)$ is a matroid for every player $i\in N$. 
It is straightforward to see that a singleton congestion game is a spacial case of a matroid congestion game. 
Ackermann, R\"oglin, and V\"oking \cite{ARV08,ARV09} proved %
\conf{the following extensions of Theorems \ref{THMsingleton} and \ref{THMpssingleton}.}%
\full{the following extensions of Theorems \ref{THMsingleton} and \ref{THMpssingleton} to matroid congestion games.}

\begin{theorem}[\cite{ARV08}]
\label{THMmatroid}
In a matroid congestion game, 
starting from an arbitrary strategy profile, 
a pure Nash equilibrium is attained after a polynomial number of best responses. 
\end{theorem}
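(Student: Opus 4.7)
The plan is to combine two ingredients: a structural lemma about best responses in matroid strategy spaces (a ``lazy best response'' decomposition), and a lexicographic potential argument that bounds the number of improvement steps by a polynomial in $n$ and $|E|$.

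First I would establish the lazy best response lemma: if player $i$ has a best response from $S_i$ to some $S_i^* \in \mathcal{S}_i$, then the symmetric difference can be resolved by a sequence of single-element swaps, each one of which is itself a (strict) better response. Concretely, fixing the rest of the profile $S_{-i}$, each resource $e$ has a fixed marginal delay value $w_e$ seen by $i$ (namely $d_e(n_e(S_{-i},\cdot))$ adjusted for whether $e\in S_i$), so finding a best response amounts to computing a minimum-weight base of the matroid $(E,\mathcal{S}_i)$ under $w$. Using the strong basis exchange property (which follows from iterating the exchange axiom \eqref{EQaxiom}), there is a bijection $\pi\colon S_i\setminus S_i^* \to S_i^*\setminus S_i$ such that $(S_i\setminus\{e\})\cup\{\pi(e)\}\in \mathcal{S}_i$ for every $e$; by the minimum-weight property, every swap $e \mapsto \pi(e)$ individually satisfies $w_{\pi(e)}<w_e$ (otherwise a matroid-theoretic greedy exchange would contradict optimality of $S_i^*$). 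Hence each single-element swap strictly decreases $i$'s cost.

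Next I would analyse such single-element swaps via a lex-order potential. Define the \emph{delay vector} $\Phi(S)$ to be the multiset $\{d_e(j): e\in E,\ 1\le j\le n_e(S)\}$ sorted in nonincreasing order, interpreted as an element of $\RR_+^m$ with $m=\sum_e n_e(S)\le n\cdot\max_i|S_i|$. Suppose a player $i$ performs an improving single swap replacing $e$ by $e'$; then $n_e$ drops by one (removing the entry $d_e(n_e(S))$ from $\Phi$) and $n_{e'}$ rises by one (appending $d_{e'}(n_{e'}(S)+1)$). Because the swap strictly improves $i$'s cost, $d_{e'}(n_{e'}(S)+1) < d_e(n_e(S))$, and monotonicity of the delay functions implies that the newly inserted value is strictly below the largest value that was removed. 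A routine check shows $\Phi$ strictly decreases in lexicographic order under $\preclex$.

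Finally I would bound the number of steps. The coordinates of $\Phi$ come from the finite set $\{d_e(k): e\in E,\ 1\le k\le n\}$, which has at most $n|E|$ distinct values, and $\Phi$ has at most $n\cdot\mathrm{rk}$ coordinates (where $\mathrm{rk}$ is the maximum matroid rank). The number of distinct lex-decreasing vectors with entries from a set of size $M$ and length $L$ is at most $\binom{M+L}{L}$; in fact a much tighter counting (each position can only drop monotonically through the value set before being overridden) gives a polynomial bound in $n,|E|$, and $\mathrm{rk}$. Combining this with the lazy best response lemma, any sequence of best responses can be realized as a sequence of polynomially many lex-improving single swaps, proving the theorem.

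The main obstacle is the lazy best response lemma: verifying that one can simultaneously realize a best response as a sequence of swaps that are each individually improving for the player. This requires more than the bare exchange axiom \eqref{EQaxiom}; it uses the strong exchange property together with the greedy optimality characterization of minimum-weight matroid bases, and the delicate point is ensuring the pairing $\pi$ can be chosen so that \emph{every} intermediate single swap is strictly cost-decreasing rather than merely non-increasing.
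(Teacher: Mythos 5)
First, note that the paper does not prove Theorem~\ref{THMmatroid} at all: it is imported verbatim from \cite{ARV08} as background, so there is no in-paper proof to match yours against. Your two ingredients --- the lazy decomposition of a best response into single-element swaps that are each individually improving, and a potential that strictly drops with every such swap --- are exactly the ingredients of the known proof, and your observation that the weights $w_e$ seen by player $i$ are unchanged by her own swaps (so the whole sequence is governed by one fixed weight function) is the right justification for the decomposition. The strictness issue you flag is real but fixable in the standard way: either choose the best response $S_i^*$ among minimum-weight bases to maximize $|S_i\cap S_i^*|$, after which every swap in the strong-exchange bijection is strict, or simply iterate the single-exchange statement (the paper's Lemma~\ref{LEMmatroid}) to produce one strictly improving swap at a time.

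The genuine gap is in the step that actually delivers the theorem, namely the \emph{polynomial} bound. Your counting argument does not work as written: $\binom{M+L}{L}$ is exponential in $L$, and the claim that ``a much tighter counting gives a polynomial bound'' is false for arbitrary lexicographically decreasing chains --- a lex-decreasing chain over $\{0,1\}^L$ can have length $2^L$, so lexicographic decrease alone proves only finiteness, not polynomiality. What saves you is the specific structure of each step: one multiset entry $d_e(n_e(S))$ is deleted and one strictly smaller entry $d_{e'}(n_{e'}(S)+1)$ is inserted. Replace every delay value by its rank in the sorted list of the at most $nm$ values $\{d_e(k): e\in E,\ 1\le k\le n\}$; then Rosenthal's potential $\sum_{e}\sum_{j=1}^{n_e(S)}\mathrm{rank}(d_e(j))$ is an integer in $[0,\,n^2 m\,\mathrm{rk}]$ that decreases by at least one with every improving swap, giving at most $n^2 m\,\mathrm{rk}$ swaps and hence at most that many best responses. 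This rank-replacement of the exact potential (rather than a lexicographic multiset order) is the argument of \cite{ARV08}, and it is the piece your write-up is missing.
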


\begin{theorem}[\cite{ARV09}]
\label{THMpsmatroid}
In a player-specific matroid congestion game, 
there exists a sequence of polynomial number of better responses 
starting from an arbitrary strategy profile and 
reaching a pure Nash equilibrium. 
\end{theorem}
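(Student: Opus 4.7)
The plan is to extend the proof strategy of Theorem \ref{THMpssingleton} to matroid strategy spaces, using the matroid exchange axiom \eqref{EQaxiom} to decompose complicated strategy changes into sequences of single-element swaps, each of which is itself a better response. Throughout, for a strategy profile $S$ and an element $e\notin S_i$, write $\Delta_{i,e}(S) = d_{i,e}(n_e(S)+1)$ for the player-specific delay $i$ would pay if she added $e$, and $\Delta_{i,e}(S) = d_{i,e}(n_e(S))$ if $e\in S_i$.

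First I would prove a \emph{swap lemma}: if player $i$ has any better response from $S_i$ to $S_i'$, then she has a better response of the form $(S_i\setminus\{e\})\cup\{e'\}$ for some $e\in S_i\setminus S_i'$ and $e'\in S_i'\setminus S_i$. The argument goes by sorting $S_i$ and $S_i'$ in non-increasing order of $\Delta_{i,\cdot}(S)$ and repeatedly applying the exchange axiom \eqref{EQaxiom}: one pairs up the elements of $S_i\setminus S_i'$ with those of $S_i'\setminus S_i$ so that each pair $(e,e')$ yields a feasible single-swap base, and then one observes that if \emph{every} such swap were non-improving, the sum of the $\Delta$-values of the swaps would be nonnegative, contradicting $\gamma_i(S_{-i},S_i')<\gamma_i(S)$.

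Next I would specify the sequence of better responses to be played: at each step, let every player with an available better response perform a \emph{lazy best response}, i.e., pick the single-swap better response $(e,e')$ that minimizes, lexicographically, the sorted vector of player-specific delays $(\Delta_{i,f}(S_{-i},S_i'))_{f\in S_i'}$ in non-increasing order. By the swap lemma, this is always possible when player $i$ is not at a local optimum. To bound the length of this sequence, I would introduce the lex-potential $\Phi(S)$ defined as the sorted (non-increasing) concatenation over all players $i$ of the sorted vectors $(\Delta_{i,f}(S))_{f\in S_i}$, and show, by a case analysis on which coordinate of some player's vector first changes after a lazy swap, that $\Phi$ strictly decreases in the lex-order after each such better response. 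Since the delay functions take values in a set of size at most $|E|\cdot n$ when restricted to the finite domain $\{1,\ldots,n\}$ of congestion levels, and because each player's contribution to $\Phi$ is a vector of length equal to the matroid rank (at most $|E|$), a careful accounting in the style of the analysis of lex-monotone sequences over bounded alphabets bounds the total number of lazy better responses by a polynomial in $n$ and $|E|$.

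The main obstacle is step three: in the player-specific setting one does not have a global potential function in the sense of Theorem \ref{THMconge}, so naively summing player costs does not yield a monovariant, and one must show that the \emph{lexicographic} decrease of $\Phi$ is (a) actually preserved across better responses by \emph{other} players (since a swap by $i$ may increase some $\Delta_{j,e}$ for another player $j$), and (b) strong enough to yield a polynomial bound. Point (a) is the delicate part: one handles it by using the lazy choice to ensure that when $i$ swaps out $e$ and in $e'$, the coordinate of $\Phi$ that strictly decreases lies at a higher position in the lex-order than any coordinate that could be increased by the side effects on players in $N_{e}(S)\cup N_{e'}(S)$. Once this invariant is established, the remaining polynomial count follows along the lines of \cite{ARV09}.
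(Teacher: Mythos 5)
First, note that the paper itself does not prove Theorem \ref{THMpsmatroid}: it is imported verbatim from \cite{ARV09}, so there is no in-paper proof to match your argument against. Judging your proposal on its own terms, the swap lemma in your first step is sound --- it is exactly the content of Lemma \ref{LEMmatroid} applied with the weights $w_e=\Delta_{i,e}(S)$, under which the cost of any base of player $i$ (with $S_{-i}$ fixed) is the sum of the weights, so the existence of an improving base yields an improving single swap.

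The genuine gap is in your third step, and it is the one you flag yourself as ``the delicate part'': the invariant (a) cannot be established, because in the player-specific setting the quantities $\Delta_{i,f}$ and $\Delta_{j,g}$ for distinct players $i\neq j$ are incommensurable. When $i$ swaps in $e'$, the congestion on $e'$ rises and every $j\in N_{e'}(S)$ sees her delay jump from $d_{j,e'}(n_{e'}(S))$ to $d_{j,e'}(n_{e'}(S)+1)$; since $d_{j,e'}$ is arbitrary and unrelated to $d_{i,e'}$, this new value can exceed every entry of $i$'s own vector, so the coordinate of $\Phi$ that increases can sit at a strictly higher lexicographic position than the coordinate that decreases, no matter how ``lazily'' $i$ chooses her swap. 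This is not a repairable technicality: player-specific congestion games (already singleton ones) are provably not potential games and better-response dynamics can cycle, so no state function built by aggregating the players' own delay values can serve as a monovariant for all better responses. The known proofs (\cite{Mil96} for singletons, \cite{ARV09} for matroids) therefore do not argue this way; they insert players one at a time and let a carefully ordered cascade of (lazy) best responses settle after each insertion, with the monovariant being a lexicographic statement about the \emph{congestion vector} $(n_e(S))_{e\in E}$ --- a quantity all players agree on --- rather than about delay values. Your proposal would need to be restructured around such a player-insertion scheme; as written, the claimed lexicographic decrease of $\Phi$ is false.
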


\full{
Since these works, 
matroid congestion games have been recognized as 
a well-behaved class of congestion games, 
and 
study on more generalized and related models followed. 
In the models of \emph{congestion games with mixed objectives} \cite{FLS18} and 
\emph{congestion games with complementarities} \cite{FLS17,Tak24}, 
the cost on a player is not necessarily the sum of the delays in her strategy. 
A \emph{budget game} \cite{DFRS19} is a variant of a congestion game, 
and their common generalization is proposed in \cite{KT22}. 
A \emph{resource buying game} \cite{HP15,Tak19} is another kind of a noncooperative game in which the players share the resources. 
In all of the above models, 
the fact that $(E,\mathcal{S}_i)$ is a matroid for each player $i$ plays a key role 
to guaranteeing the existence of a pure Nash equilibrium. }%
\conf{Since these works, 
matroid congestion games have been recognized as 
a well-behaved class of congestion games, 
and 
study on more generalized and related models followed \cite{DFRS19,FLS17,FLS18,HP15,KT22,Tak19,Tak24}.
In all of theses models, 
the fact that $(E,\mathcal{S}_i)$ is a matroid 
plays a key role 
to guaranteeing the existence of a pure Nash equilibrium. }%
A further generalized model in which the strategy space is represented by a \emph{polymatroid} 
is studied in \cite{HKP18,HT18}. 
A different kind of relation between matroids and congestion games is investigated in  \cite{FGHPZ15}.

\subsection{Congestion Games with Priorities}
\label{SECAGMRV}

\full{Ackermann et al.\ \cite{AGMRV08} offered a model 
which commonly generalizes congestion games and a certain class of two-sided markets with ties. 
This model is described by a tuple} %
\conf{Ackermann et al.\ \cite{AGMRV08} offered a model 
which commonly generalizes congestion games and a certain class of two-sided markets with ties, 
which is 
referred to as a \emph{congestion game with priorities} and 
is described by a tuple $
(N,E, (\mathcal{S}_i)_{i\in N}, (p_e)_{e\in E}, (d_e)_{e\in E})
$. }%
\full{$$
(N,E, (\mathcal{S}_i)_{i\in N}, (p_e)_{e\in E}, (d_e)_{e\in E}), 
$$
in which 
the player set $N$, 
the resource set $E$, 
the strategy spaces $\mathcal{S}_i\subseteq 2^E$ ($i\in N$), 
and 
the delay functions $d_e\colon \ZZ_{++} \to \RR_+$ ($e\in E$) are 
the same as those in the classical model in Section \ref{SECconge}. }%
What is specific to this model is that 
each resource $e\in E$ has 
a \emph{priority function} $p_e\colon N \to \ZZ_{++}$. 
If $p_e(i) < p_e(j)$ for players $i,j\in N$, 
then 
the resource $e$ prefers 
$i$ to $j$. 

In a strategy profile $S=(S_1,\ldots, S_n)$,  
the delay of $e$ imposed on each player in $N_e(S)$ is 
determined in the following way. 
Define $p^*_e(S) \in \ZZ_{++} \cup \{+\infty\}$ by 
\begin{align*}
    p^*_e(S)=
    \begin{cases}
        \min \{ p_e(i) \colon i\in N_e(S)\} & \mbox{if $N_e(S)\neq \emptyset$}, \\
        +\infty                             & \mbox{if $N_e(S)= \emptyset$}. 
    \end{cases}
\end{align*}
For a positive integer $q$, 
define $n_e^q(S)\in \ZZ_+$ by
\begin{align}
    \label{EQeq}
    n_e^q(S) = \left|\{i \in N_e(S) \colon p_e(i)=q \} \right|.
\end{align}
Now the delay imposed on a player $i\in N_e(S)$ by the resource $e$ is defined as 
\begin{align*}
\begin{cases}
    d_e\left(n_e^{p_e^*(S)}(S)\right)   &\mbox{if $p_e(i) = p_e^*(S)$}, \\
    +\infty                             &\mbox{if $p_e(i) > p_e^*(S)$}.
\end{cases}
\end{align*}

\full{This model is referred to as a \emph{congestion game with priorities}. }
A special case in which all resources have the same priority function 
is called a \emph{congestion game with consistent priorities}. 
The general model is often referred to as a \emph{congestion game with inconsistent priorities}.

It is straightforward to see that 
\full{the model of 
congestion games with priorities includes congestion games. 
An instance $(N,E,(\mathcal{S}_i)_{i\in N},(d_e)_{e\in E})$ of a congestion game reduces to }%
a congestion game $(N,E,(\mathcal{S}_i)_{i\in N},(d_e)_{e\in E})$  reduces to 
a congestion game $(N,E,(\mathcal{S}_i)_{i\in N},(p_e)_{e\in E},(d_e)_{e\in E})$ with priorities 
in which 
all resources have the same constant priority 
function. 
As mentioned above, 
the model of congestion games with priorities also includes \emph{correlated two-sided markets with ties}. 
See Section \ref{SECcorrelated} for details.

\subsubsection{Singleton Games}

For singleton congestion games with consistent priorities, 
Ackermann et al.\ \cite{AGMRV08} proved the following theorem 
on the basis of Theorem \ref{THMsingleton}.

\begin{theorem}[\cite{AGMRV08}]
\label{THMsingletonCP}
In a singleton congestion game with consistent priorities, 
there exists a sequence of a polynomial number of best responses 
starting from an arbitrary strategy profile and 
reaching a pure Nash equilibrium. 
\end{theorem}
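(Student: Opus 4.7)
The plan is to exploit the layered structure that consistent priorities impose on the players. Since all resources share a common priority function $p$, the players partition into priority classes, and the cost of a higher-priority player is entirely independent of the choices of the lower-priority players. This allows us to reduce the problem, class by class, to the classical singleton congestion game and invoke Theorem~\ref{THMsingleton}.

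Let $\pi_1 < \pi_2 < \cdots < \pi_k$ be the distinct values taken by $p$, and let $P_\ell = \{i \in N : p(i) = \pi_\ell\}$. The first step is to observe a decoupling property: for any player $i \in P_\ell$ and any strategy profile $S$ with $S_i = \{e\}$, the cost $\gamma_i(S)$ depends only on the strategies of players in $P_1 \cup \cdots \cup P_\ell$. Indeed, if some higher-priority player also uses $e$ then $p^*_e(S) < \pi_\ell$ and $\gamma_i(S) = +\infty$; otherwise $p^*_e(S) = \pi_\ell$ and $\gamma_i(S) = d_e(n_e^{\pi_\ell}(S))$, which counts only the $P_\ell$-players using $e$.

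I would then process the classes in the order $\ell = 1, 2, \ldots, k$. Entering phase $\ell$, let $U_\ell \subseteq E$ be the set of resources currently used by some player in $P_1 \cup \cdots \cup P_{\ell-1}$ (already frozen). For any $i \in P_\ell$, every resource in $\mathcal{S}_i \cap U_\ell$ yields infinite delay and is strictly dominated by any finite-delay option, so every best response of $i$ chooses from $\mathcal{S}_i \setminus U_\ell$ whenever that set is nonempty. Consequently, the induced best-response dynamics on $P_\ell$ coincides with that of a standard singleton congestion game played by $P_\ell$ on the resource set $E \setminus U_\ell$ with strategy spaces $\mathcal{S}_i \setminus U_\ell$ (treating any player whose restricted strategy space is empty as inert, since every one of her options costs $+\infty$). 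By Theorem~\ref{THMsingleton}, a polynomial number of best responses drives $P_\ell$ to equilibrium within this subgame, and by the decoupling observation these coincide with best responses in the full game. Crucially, once $P_\ell$ is stabilised under the frozen higher classes, it remains stabilised throughout all subsequent phases, since later moves by lower-priority players cannot alter any $P_\ell$-player's cost.

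Summing the polynomial bounds over the $k \le n$ classes yields a polynomial bound overall. The only delicate point is the bookkeeping around players whose entire strategy space sits inside $U_\ell$: these players are locked at infinite cost but are trivially in equilibrium, since no alternative improves on $+\infty$; one must verify this does not disrupt the induction. I expect this to be the sole genuine hurdle, as the remainder of the argument is a clean priority-layered reduction to Theorem~\ref{THMsingleton}.
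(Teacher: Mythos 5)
Your proof is correct, and it follows essentially the same route the paper uses for its generalization of this result (Theorem~\ref{THMpbsingletonCP}): partition the players into priority classes, process the classes in increasing priority order with the higher classes frozen, observe that each class then plays an ordinary singleton congestion game on the resources not blocked by higher classes, and invoke the polynomial convergence of Theorem~\ref{THMsingleton} class by class. The only difference is cosmetic --- the paper encodes the frozen higher-priority load into modified delay functions rather than deleting the blocked resources --- and your handling of players whose entire strategy space is blocked (trivially in equilibrium at infinite cost) is the right resolution of that edge case.
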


To the best of our knowledge, 
an extension of Theorem \ref{THMpssingleton} to
player-specific delay functions  
is missing in the literature, 
and will be discussed in a more generalized form in Section \ref{SECCP}. 
\full{

}%
Ackermann et al.\ \cite{AGMRV08} further proved that every singleton congestion game with inconsistent priorities 
is a potential game.

\begin{theorem}[\cite{AGMRV08}]
\label{THMsingletonIP}
A singleton congestion game with inconsistent priorities is a potential game, 
and hence possesses a pure Nash equilibrium. 
\end{theorem}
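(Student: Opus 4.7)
The plan is to construct an exact potential function $\Phi$ on the (finite) set of strategy profiles; since strategy spaces are finite, any minimizer of $\Phi$ will be a pure Nash equilibrium, which yields both conclusions of the theorem simultaneously. The natural first candidate is the Rosenthal-style sum restricted to the top-priority class on each resource,
$$\Phi_R(S) \;=\; \sum_{e\in E}\sum_{k=1}^{n_e^{p_e^*(S)}(S)} d_e(k),$$
since each top-priority player on $e$ pays exactly $d_e(n_e^{p_e^*(S)}(S))$ in the game.

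I would first verify that $\Phi_R$ behaves as an exact potential in the ``generic'' case, namely when player $i$ is top-priority on $e$, her move places her into the existing top-priority class on $e'$ (or she lands alone on an empty $e'$, or at the same top priority level), and no other player's top/dominated status changes on either resource. A telescoping argument as in Rosenthal's original proof then gives $\Phi_R(S')-\Phi_R(S)=d_{e'}(n_{e'}^{p_{e'}^*(S')}(S'))-d_e(n_e^{p_e^*(S)}(S))=\gamma_i(S')-\gamma_i(S)$, matching exactly what the potential condition demands.

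The non-generic cases are the priority-driven ones and constitute the heart of the proof: (i) $i$'s cost changes from $+\infty$ (she was dominated on $e$) to a finite value on $e'$; (ii) $i$ arrives at $e'$ strictly outranking the previous top, displacing them and sending their finite costs to $+\infty$; (iii) $i$ was the sole top-priority player on $e$, so after she leaves, previously dominated players on $e$ become top and acquire finite costs. In each of (i)--(iii) the collateral cost changes of non-deviating players must be absorbed into $\Phi$ so that only $i$'s own cost difference survives in $\Phi(S')-\Phi(S)$. I would augment $\Phi_R$ with a lex-dominant correction term $\Phi_{\mathrm{corr}}$ that counts, with appropriately chosen weights, the ``dominated incidences'' $(i,e)$ with $i\in N_e(S)$ and $p_e(i)>p_e^*(S)$, together with the priority-level bookkeeping needed to match the cost flips of the displaced players. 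Taking $\Phi=(\Phi_{\mathrm{corr}},\Phi_R)$ in lexicographic order, with the usual convention that the $\infty-\infty$ ambiguity is resolved by the correction, then gives a potential function in the sense required.

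The main obstacle is designing $\Phi_{\mathrm{corr}}$ so that its change exactly cancels the collateral promotions and demotions triggered by a single better response, while still matching the sign and magnitude of the deviator's own cost change; a single move can simultaneously promote and demote several players across the two affected resources, so the verification requires a case analysis parameterized by the relative priorities $p_e(i)$, $p_{e'}(i)$, $p_e^*(S)$, and $p_{e'}^*(S)$. The two structural features that keep this analysis tractable are the singleton strategy spaces, which localize each move to exactly the two resources $e$ and $e'$, and the well-founded priority order, which bounds the length of any chain of displacements and guarantees that the correction is finite-valued and strictly monotone in the priority-driven cases.
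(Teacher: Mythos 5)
There is a genuine gap: the correction term $\Phi_{\mathrm{corr}}$, which you yourself identify as ``the heart of the proof,'' is never constructed, and the form you sketch for it cannot work. You propose a lex-dominant scalar that ``counts, with appropriately chosen weights, the dominated incidences'' $(j,e)$ with $p_e(j)>p_e^*(S)$. But a single better response can simultaneously \emph{destroy} dominated incidences on the source resource and \emph{create more of them} on the target resource. Concretely: let $i$ be alone at top priority on $e$ with two dominated players behind her, and let $e'$ carry three players who are currently top on $e'$ but whom $i$ strictly outranks; if $d_{e'}$ evaluated at $i$'s new (singleton) top class is smaller than her current delay, moving to $e'$ is a better response, yet the number of dominated incidences changes by $-2+3=+1$. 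Any fixed weighting faces symmetric counterexamples, so no scalar aggregate of this kind is monotone under better responses; and since in your case (iii) the residual change in $\Phi_R$ is a delay-dependent real number $\sum_{k=1}^{m}d_e(k)-d_e(1)$ rather than a count, the second lexicographic component cannot rescue a first component that moves the wrong way. Your appeal to ``chains of displacements'' bounded by the priority order is also misplaced: within one deviation no one else moves, so there is no chain to bound --- the whole difficulty is exactly the one-step collateral damage your $\Phi_{\mathrm{corr}}$ was supposed to cancel.

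The paper's proof (given for the generalization, Theorem \ref{THMpbsingletonIP}, and specialized to this statement via \eqref{EQreduceA}) avoids scalar aggregation altogether. The potential is the multiset of $n$ per-player vectors $\bigl(\text{delay},\ \text{priority level}\bigr)$ contributed resource by resource as in \eqref{EQpotentialvectors}, sorted lexicographically; a better response is shown to strictly decrease this sorted sequence because the deviator's new vector is lexicographically smaller than \emph{every} vector that the move removes or alters --- the altered vectors all belong to weakly less prioritized players on the two affected resources, and properties \eqref{EQmonotonex}--\eqref{EQplusone} (here, the $+\infty$ on dominated players) place them above the deviator's old, hence new, vector. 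This is why no cancellation of collateral promotions and demotions is ever needed: they only affect entries that are provably larger than the one that strictly drops. If you want to salvage your outline, you should replace the pair $(\Phi_{\mathrm{corr}},\Phi_R)$ by this sorted sequence of $n$ vectors; the Rosenthal-style telescoping you verified in the generic case is then subsumed by the within-resource ordering of the contributed vectors.
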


We remark that the potential function establishing Theorem \ref{THMsingletonIP} 
obeys a generalized definition of potential functions. 
\full{It maps a strategy profile to a sequence of vectors, which lexicographically decreases by a better response. }%
The details will appear in our proof of Theorem \ref{THMpbsingletonIP}, 
which extends Theorem \ref{THMsingletonIP} to priority-based singleton congestion games. 

For player-specific congestion games with inconsistent priorities, 
Ackermann et al.\ \cite{AGMRV08} designed a polynomial-time algorithm 
for constructing a pure Nash equilibrium. 
Let $n$ denote the number of the players 
and $m$ that of the resources. 

\begin{theorem}[\cite{AGMRV08}]
\label{THMpssingletonIP}
A player-specific singleton congestion game with inconsistent priorities possesses a pure Nash equilibrium, 
which can be computed in polynomial time with $O(n^3m)$ strategy changes. 
\end{theorem}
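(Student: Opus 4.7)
The plan is to design a polynomial-time algorithm that converges to a pure Nash equilibrium through a prescribed sequence of best responses, combining the lexicographic-potential idea behind Theorem~\ref{THMsingletonIP} with Milchtaich's ordered-move scheme from Theorem~\ref{THMpssingleton}. The key structural observation is that in any pure Nash equilibrium $S$, every player $i$ choosing a resource $e$ must satisfy $p_e(i)=p_e^*(S)$: otherwise she pays infinite delay and strictly improves by moving to any resource where she becomes tied for top priority, which always yields finite delay in a singleton game. Consequently the algorithm first drives the number of infinite-delay players to zero and then performs finite-cost improvements.

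Concretely, I would maintain a strategy profile $S$ and, at each step, apply whichever of the following is enabled: (i) an \emph{escape move}, in which a player currently paying infinite delay performs a best response, moving to a resource where her cost becomes finite and preferring destinations at which she is already tied for top priority so as to avoid bumping anyone; (ii) an \emph{improvement move}, executed when no infinite-delay player has a better response, in which the finite-delay player with the highest cost performs a best response in the style of Theorem~\ref{THMpssingleton}. Convergence would follow from a lexicographic potential $\Phi(S)=(\Pi(S),k(S),\Psi(S))$, where $\Pi(S)$ is the sorted vector of top-priority values $p_e^*(S)$ across resources (smallest first), $k(S)$ counts infinite-delay players, and $\Psi(S)$ is a Milchtaich-style sorted vector of finite delays. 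Any move that bumps a previously satisfied player strictly improves some entry of $\Pi(S)$, because the incoming player's priority value is strictly smaller than the previous top at her destination; an escape-without-bump move preserves $\Pi(S)$ and strictly decreases $k(S)$; and an improvement-without-bump move preserves $(\Pi(S),k(S))$ while lexicographically improving $\Psi(S)$, exactly as in Theorem~\ref{THMpssingleton}.

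The $O(n^3 m)$ bound would come from the following accounting: each resource's top priority can strictly decrease at most $O(n)$ times, bounding the total number of bumping events by $O(nm)$; between consecutive bumps the number of escape moves is at most $O(n)$, and the number of improvement moves within each fixed-$\Pi$ plateau is polynomially bounded in the style of Theorem~\ref{THMpssingleton}; multiplying through yields $O(n^3 m)$. The main obstacle I anticipate is controlling the cascade triggered by a bump: a single best response may demote several previously satisfied players into infinite delay, inflating $k(S)$, and one must argue that the ensuing escape phase does not undo the progress made on $\Pi(S)$. I would resolve this by showing that any cascade can be repaired within a single plateau of $\Pi(S)$ without inducing further bumps, using the fact that each bumped player has an escape-without-bump option — in particular, to the resource she has just been evicted from, whose top class is now empty of competition from her. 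Once this no-cycling property is established, the lexicographic decrease of $\Phi$ across the three phases gives both existence of a pure Nash equilibrium and the claimed polynomial strategy-change bound.
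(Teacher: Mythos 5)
There is a genuine gap. Your plan runs a modified best-response dynamic on \emph{full} strategy profiles and rests on the first tier $\Pi(S)$ of your potential being monotone, but it is not: $p_e^*(S)$ can \emph{increase} when a player leaves $e$. In particular, a Milchtaich-style improvement move by the unique top-priority player at her current resource raises that resource's top priority, which can re-enable new bumping events there. So the accounting ``each resource's top priority can strictly decrease at most $O(n)$ times, hence $O(nm)$ bumping events'' does not go through; and since a single move can simultaneously decrease $p_{e'}^*$ at the destination and increase $p_e^*$ at the origin, the sorted vector $\Pi(S)$ has no consistent lexicographic direction of progress. The cascade-repair step is also broken as stated: a bumped player is demoted \emph{at} the resource she occupies, so ``escaping to the resource she has just been evicted from'' changes nothing for her, and her genuine escape destinations may contain players she outranks, triggering further bumps. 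This no-cycling property is exactly the hard core of the theorem and is not established. (A smaller issue: your opening claim that every player has finite delay in any pure Nash equilibrium fails when a player is outranked at every resource in her strategy space; then every strategy of hers costs $+\infty$ and she has no better response.)

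The route taken by Ackermann et al.\ --- and mirrored in this paper's proof of Theorem~\ref{THMpspbsingletonIP} --- avoids all of this by never working with full profiles during the dynamics. Players are inserted one at a time into a \emph{partial} profile that is always a partial equilibrium; when an insertion at resource $e$ gives some incumbents a better response, those incumbents do not move to new resources but \emph{discard their strategies entirely} (one tied player in the equal-priority case, all strictly-lower-priority improvers otherwise) and are re-inserted later. Because a resource only ever gains a player at the entrant's priority level and loses players at strictly worse levels, the per-resource vectors of priority-class counts $\bigl(n_e^q(S)\bigr)_q$ are lexicographically monotone, and a secondary ``tolerance'' term handles the tie case; this is what yields both termination and the $O(n^3 m)$ bound. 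If you want to salvage your approach, you would need either to forbid improvement moves that vacate a top class (which sacrifices convergence to an equilibrium) or to adopt the removal-and-reinsertion mechanism, at which point you have reconstructed the original argument.
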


\subsubsection{Correlated Two-Sided Markets with Ties}
\label{SECcorrelated}


\full{Here we describe a \emph{correlated to two-sided market with ties} \cite{AGMRV08},
and see that it can be represented as a player-specific congestion game with inconsistent priorities. }
For unity, 
we apply the terminology of congestion games to two-sided markets. 
For example, 
we use the terms players and resources instead of agents and markets. 
We also assume that the objective of a player is to minimize her delay, 
instead of to maximize her payoff. 

\full{A \emph{correlated two-sided market with ties} is represented by a tuple 
\begin{align*}
    (N,E,(\mathcal{S}_i)_{i\in N}, (c_{i,e})_{i\in N, e\in E}, (d_e)_{e\in E}). 
\end{align*}}%
\conf{A \emph{correlated two-sided market with ties} is represented as 
    $(N,E,(\mathcal{S}_i)_{i\in N}, (c_{i,e})_{i\in N, e\in E}, (d_e)_{e\in E})$. }%
\full{For each pair $(i,e)$ of a player $i\in N$ and a resource $e\in E$, }%
\conf{For each pair $(i,e) \in N\times E$, }%
a \emph{cost} $c_{i,e}\in \RR_+$ is associated. 
\full{The costs implicitly determine 
the preferences of the players, 
since the objective of a player is to minimize her cost. 
Moreover, 
each resource $e$ also prefer players with smaller costs, 
and in particular 
only accepts the players with smallest cost, 
which is formally described in the following way. }%
\conf{Both of the preferences of the players and the resources are determined by the costs in the following manner. }%

Let $S=(S_1,\ldots, S_n)$ be a strategy profile, 
and let $e\in E$ be a resource. 
Let $c^*_e(S)\in \RR_+$ be the minimum cost associated with a player in $N_e(S)$ and $e$, 
i.e.,\ 
$c_e^*(S)=\min\{ c_{i,e} \colon i\in N_e(S) \}$. 
Let $N_e^*(S)\subseteq N_e(S)$ denote the set of the players in $N_e(S)$ with cost $c_e^*(S)$, 
and 
let $|N_e^*(S)| = n_e^*(S)$. 
\full{Namely, 
\begin{align*}
    c^*_e(S) = \min\{c_{i,e} \colon i\in N_e(S)\}, 
    \quad 
    N_e^*(S) = \{i\in N_e(S) \colon c_{i,e}=c^*_e(S)\}, 
    \quad
    n_e^*(S) = |N_e^*(S)|.
\end{align*}}%
Each player in $N_e(S) \setminus N_e^*(S)$ receives an infinite cost from $e$. 
The cost on a player $i \in N_e^*(S)$ 
satisfies that 
\full{it is nonincreasing with respect to $n_e^*(S)$ 
and is equal to $c^*_e(S)$ if $n_e^*(S)=1$, 
i.e.,\ 
$i$ is the only player in $N_e^*(S)$. }%
\conf{it is equal to $c^*_e(S)$ if $n_e^*(S)=1$ 
and 
is nonincreasing with respect to $n_e^*(S)$. }%
This is represented by a bivariate delay function  $d_e:\RR_+\times \ZZ_{++}\to \RR_+$ 
such that, 
for each $x\in \RR_+$, 
$d_e(x, 1) = x$ 
and 
$d_e(x,y)$ is nondecreasing  with respect to $y$. 
In summary, 
the cost imposed on a player $i\in N_e(S)$ by $e$ is equal to 
\begin{align}
    \begin{cases}
        d_e(c_e^*(S), n_e^*(S))     &(i\in N_e^*(S)),\\
        +\infty                           &(i\in N_e(S)\setminus N_e^*(S)). 
    \end{cases}
\end{align}

A correlated two-sided market $(N,E, (\mathcal{S}_i), (c_{i,e}), (d_e))$ with ties 
reduces to a player-specific congestion game with inconsistent priorities. 
For a resource $e\in E$, 
construct a priority function $p_e\colon N\to \ZZ_{++}$ satisfying that 
\begin{align}
\label{EQpriority}
    \mbox{$p_e(i) < p_e(j)$ if and only if $c_{i,e} < c_{j,e}$ for each $i,j\in N$}.
\end{align}
Then, for each pair $(i,e)$ of a player $i\in N$ and a resource $e\in E$, 
define a player-specific delay function $d'_{i,e}\colon \ZZ_{++}\to \RR_+$ by 
\full{\begin{align*}
    d'_{i,e}(y) = d_e(c_{i,e}, y) \quad (y\in \ZZ_{++}). 
\end{align*}}%
\conf{$d'_{i,e}(y) = d_e(c_{i,e}, y)$ for each $y\in \ZZ_{++}$. }%

We refer to a correlated two-sided markets with ties 
in which each strategy of each player is a singleton 
as a 
\emph{correlated two-sided singleton markets with ties}. 
It follows from the above reduction that Theorem \ref{THMpssingletonIP} applies to a 
correlated two-sided singleton markets with ties. 
\full{Ackermann et al.\ \cite{AGMRV08} proved a stronger result that 
a correlated two-sided singleton markets with ties has a potential function. }%
\conf{Ackermann et al.\ \cite{AGMRV08} proved the following stronger result. }%

\begin{theorem}[\cite{AGMRV08}]
\label{THMcorrelated}
A correlated two-sided singleton market with ties is a potential game, 
and hence possesses a pure Nash equilibrium. 
\end{theorem}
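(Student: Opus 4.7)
The plan is to construct a lexicographic potential function in the generalized sense mentioned in the remark after Theorem~\ref{THMsingletonIP}. By the reduction described just above (taking $p_e$ from $c_{i,e}$ via Eq.~\eqref{EQpriority} and $d'_{i,e}(y) = d_e(c_{i,e},y)$), every correlated two-sided singleton market with ties is already a player-specific singleton congestion game with inconsistent priorities, so existence of a pure Nash equilibrium follows from Theorem~\ref{THMpssingletonIP}. The new content of Theorem~\ref{THMcorrelated} is the construction of a potential function itself.

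Since the priorities of all resources in a correlated market descend from the single cost matrix $(c_{i,e})$, I would index the lex potential by the globally sorted distinct cost values $c^{(1)} < c^{(2)} < \cdots < c^{(L)}$ appearing in the instance. Set $\Phi(S) = (\Phi^{(1)}(S), \ldots, \Phi^{(L)}(S))$ with lex comparison from the smallest index, where $\Phi^{(\ell)}(S)$ aggregates the state of the resources $e$ with $c_e^*(S) = c^{(\ell)}$, combining a count of such resources with a Rosenthal-style sum $\sum_{e \colon c_e^*(S) = c^{(\ell)}} \sum_{k=1}^{n_e^*(S)} d_e(c^{(\ell)},k)$ over accepted players at those resources. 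This mirrors the lex construction behind Theorem~\ref{THMsingletonIP}, except that the global indexing by cost levels replaces the resource-local priority classes; the correlation property is exactly what permits this unification.

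To verify that every better response strictly lex-decreases $\Phi$, I case-analyze a deviation $S_i = \{e\} \to S_i' = \{e'\}$ along (i) whether $i$ was accepted at $e$ (so $\gamma_i(S) < +\infty$) and (ii) whether $c_{i,e'}$ strictly undercuts $c_{e'}^*(S)$ or ties it. In the tie sub-case, both $c_e^*(S)$ and $c_{e'}^*(S)$ are preserved, so all changes are confined to the single level $\ell$ with $c^{(\ell)} = c_{e'}^*(S)$, and $\Phi^{(\ell)}$ decreases by exactly $\gamma_i(S') - \gamma_i(S) < 0$ via the standard Rosenthal argument. In the strict-undercut sub-case, the new threshold $c_{i,e'} = c^{(\ell_0)}$ at $e'$ is strictly below all other cost levels affected by the move; the lex structure then ensures that the strict decrease at index $\ell_0$ precedes and thus dominates any changes at indices $\ell' > \ell_0$.

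The main obstacle is the case where $i$ was the unique accepted player at $e$ and strictly undercuts at $e'$, since then the abandoned resource $e$ either becomes empty or shifts to a higher-cost threshold while simultaneously $e'$ drops to $c_{i,e'}$ and displaces incumbents into infinite delay. All such adverse effects, however, are recorded in components $\Phi^{(\ell')}$ with $\ell' > \ell_0$: the displaced incumbents have cost $c_{k,e'} > c_{i,e'}$ by definition of strict undercut, and the new threshold at $e$ satisfies $c_e^*(S') > c_{i,e}$ since $i$ was the unique minimum there, while the better-response condition forces $c_{i,e'} < c_{i,e}$. Careful bookkeeping of the changes at each level confirms the strict lex decrease at $\ell_0$. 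Since the set of strategy profiles is finite, $\Phi$ attains a lex-minimum, which is a pure Nash equilibrium.
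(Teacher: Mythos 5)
Your reduction to a player-specific singleton congestion game with inconsistent priorities and the appeal to Theorem~\ref{THMpssingletonIP} correctly give \emph{existence} of a pure Nash equilibrium, but the content of Theorem~\ref{THMcorrelated} is the potential function itself, and the lexicographic potential you construct does not work. The flaw is the choice of primary sort key: you index the components of $\Phi$ by the global cost levels $c^{(1)}<\cdots<c^{(L)}$ and aggregate a Rosenthal sum over the resources whose current threshold $c_e^*(S)$ equals a given level. In the tie sub-case you assert that all changes are confined to the single level $c_{e'}^*(S)$; this is false, because the abandoned resource $e$ lives at level $c_e^*(S)$, which in general differs from $c_{e'}^*(S)$ (and $c_e^*(S)$ itself changes if $i$ was the unique minimum-cost player at $e$). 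Concretely, take players $i,j,k$ and resources $e,e'$ with $c_{i,e}=c_{j,e}=5$, $c_{i,e'}=c_{k,e'}=2$, $d_e(5,1)=5$, $d_e(5,2)=100$, $d_{e'}(2,1)=2$, $d_{e'}(2,2)=50$, and the profile with $i,j$ on $e$ and $k$ on $e'$. Moving $i$ to $e'$ is a better response ($100$ drops to $50$) and is a tie at $e'$; yet at the most significant level $c^{(1)}=2$ the resource count is unchanged while the Rosenthal sum rises from $2$ to $2+50$, so $\Phi$ lexicographically \emph{increases} under a better response, however you combine the count with the sum. Ordering the levels with the largest cost first fails on a symmetric example, so the level-by-level decomposition cannot be repaired by bookkeeping.

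The construction that works (it is what this paper does for the generalization, Theorem~\ref{THMtwosidedsingleton}, whose proof specializes to the present statement by taking $d_e(c,x,y)$ equal to $d_e(c,y)$ for $x=0$ and $+\infty$ for $x\ge 1$) sorts by the delay values themselves rather than by cost levels: each resource $e$ contributes one pair per player in $N_e(S)$, namely $\bigl(d_e(c_e^*(S),k),\,c_e^*(S)\bigr)$ for $k=1,\ldots,n_e^*(S)$ together with $(+\infty,\cdot)$ entries for the rejected players, and $\Phi(S)$ is the list of all $n$ pairs arranged in lexicographically nondecreasing order. A better response of $i$ from $e$ to $e'$ introduces the single new pair with first coordinate $d_{e'}(c_{e'}^*(S'),n_{e'}^*(S'))$, and the better-response inequality combined with monotonicity of the delay functions shows that this new entry is strictly smaller than every entry it removes or increases. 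Because the comparison is made on the sorted delay values, this immediately yields a strict lexicographic decrease, including in exactly the mixed-level tie situation where your level-indexed potential fails.
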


\subsubsection{Extension to Matroid Games}

Finally, 
Ackermann et al.\ \cite{AGMRV08} provided the following extensions of 
Theorems \ref{THMsingletonCP}--\ref{THMcorrelated}
from singleton games to matroid games. 
For a matroid game, 
define its \emph{rank} $r$ as the 
the maximum rank 
of the matroids forming the strategy spaces of all players.

A 
better response of a player $i\in N$ 
in a strategy profile $S$ 
from a strategy $S_i$ to another strategy $S_i'$ is referred to as 
a \emph{lazy better response} if 
if there exists a 
sequence $(S^0_i , S^1_i,  . . . , S^k_i)$ of strategies of $i$ such that 
$S^0_i = S_i$, 
$S^{k}_i = S_i'$, 
$|S^{k'+1}_i \setminus S^{k'}_i | = 1$ and the cost on $i$ in 
a strategy profile $(S_{-i},S^{k'+1}_i)$ 
is strictly smaller than that in 
$(S_{-i},S^{k'}_i)$ 
for
each $k'=0,1,\ldots, k-1$. 
A \emph{potential game with respect to lazy better responses} 
is a game admitting a potential function 
which strictly decreases by a lazy better response. 

\begin{theorem}[\cite{AGMRV08}]
\label{THMmatroidCP}
    In a matroid congestion game with consistent priorities, 
there exists a sequence of a polynomial number of best responses 
starting from an arbitrary strategy profile and 
reaching a pure Nash equilibrium. 
\end{theorem}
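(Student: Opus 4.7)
The plan is to extend Theorem \ref{THMsingletonCP} to the matroid setting by processing the players priority class by priority class, and applying Theorem \ref{THMmatroid} within each class. The driving observation is that, under consistent priorities, the cost of a player is unaffected by the choices of strictly less prioritized players.

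First, let $p \colon N \to \ZZ_{++}$ denote the common priority function, and partition $N$ into priority classes $N_1, \ldots, N_K$ with $N_k = \{ i \in N : p(i) = k \}$, so that $N_1$ is the most prioritized class. For a resource $e$ and a strategy profile $S$, the delay on any player using $e$ depends only on $p_e^*(S)$ and on $n_e^{p_e^*(S)}(S)$; hence any player in $N_k$ using $e$ is affected only by the players in $N_1 \cup \cdots \cup N_k$ that also use $e$, since players from $N_{k+1}, \ldots, N_K$ neither alter $p_e^*(S)$ nor contribute to $n_e^{p_e^*(S)}(S)$.

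Second, I would process the classes in order $k = 1, 2, \ldots, K$. Suppose inductively that the strategies in $N_1, \ldots, N_{k-1}$ have been fixed and let $B_k \subseteq E$ denote the set of resources used by at least one of those players. Any strategy of $i \in N_k$ that meets $B_k$ incurs an infinite delay, so it is weakly dominated by any strategy avoiding $B_k$, whenever one exists. The bases of $(E, \mathcal{S}_i)$ contained in $E \setminus B_k$ form the bases of the deletion matroid, which is again a matroid of the same ``restricted'' type. Restricted to these bases, the players in $N_k$ play an ordinary matroid congestion game over the common ground set $E \setminus B_k$ with the delay functions $(d_e)_{e \in E \setminus B_k}$; Theorem \ref{THMmatroid} then provides a sequence of best responses within $N_k$, of length polynomial in $|N_k|$ and $|E|$, reaching a pure Nash equilibrium among $N_k$. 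Because the costs of players in $N_1 \cup \cdots \cup N_{k-1}$ are unaffected by the actions of $N_k, \ldots, N_K$, the previously equilibrated classes stay at equilibrium, and after handling $N_K$ the global profile is a pure Nash equilibrium. Summing the polynomial bounds of Theorem \ref{THMmatroid} over the at most $n$ priority classes yields a polynomial overall bound on the number of best responses.

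The main obstacle I anticipate is the degenerate situation in which no base of $\mathcal{S}_i$ avoids $B_k$, so that every strategy of $i$ yields infinite cost. Here $i$ is indifferent among all of her strategies, any base is vacuously a best response, and $i$ can be placed arbitrarily without disrupting the inductive argument; in particular this player contributes at most one ``strategy change'' to the count. A secondary subtle point, easily verified, is that a best response of $i \in N_k$ in the restricted game over $E \setminus B_k$ is also a best response in the original game, since as soon as some base avoids $B_k$ it strictly beats any base meeting $B_k$, and the restricted game uses exactly the same delay functions $d_e$ as the original.
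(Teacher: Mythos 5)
The paper does not actually prove this theorem: it is quoted from Ackermann et al.\ \cite{AGMRV08} without proof, so the fair comparison is with the paper's proofs of its own analogues, namely Theorem \ref{THMpbsingletonCP} (priority layers handled by induction, each layer reduced to a player-specific game via Theorem \ref{THMpssingleton}) and its matroid extension via Lemma \ref{LEMmatroid}. Your argument is the same layered induction, and it is essentially sound. The one genuine difference is how you neutralize the already-fixed higher-priority classes: the paper folds them into modified delay functions $d'_{i,e}(y)=d_{i,e}(n_e(\hat S^{k'}),y)$, whereas you delete the blocked resources $B_k$ and invoke Theorem \ref{THMmatroid} on the restriction of each player's matroid to $E\setminus B_k$. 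In the infinite-delay model of \cite{AGMRV08} these coincide (the modified delay is $+\infty$ on $B_k$ and unchanged off $B_k$), and your deletion trick has the advantage of staying within the \emph{non}-player-specific setting so that the stronger Theorem \ref{THMmatroid} (any best-response sequence, polynomial length) applies; the paper's modified-delay formulation is what generalizes to finite priority effects.

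One step needs a small patch. You claim a degenerate player $i\in N_k$ (no base avoiding $B_k$) ``can be placed arbitrarily without disrupting the inductive argument.'' Player $i$ indeed never has a better response, but her fixed base may contain elements of $E\setminus B_k$, and on such a resource $e$ she is counted in $n_e^{p_e^*(S)}(S)$ for the other class-$k$ players, so the residual game among the non-degenerate players of $N_k$ is \emph{not} the matroid congestion game with the original delays $(d_e)_{e\in E\setminus B_k}$ as you assert. The fix is routine: fix all degenerate players of $N_k$ first and replace $d_e$ by $d_e(\,\cdot\,+c_e)$, where $c_e$ is the number of degenerate class-$k$ players fixed on $e$; this is still a nondecreasing delay function, so Theorem \ref{THMmatroid} applies unchanged. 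A second point worth stating explicitly is that the initial strategies of class-$k$ players need not be bases of the restricted matroid, so each such player should be charged one preliminary best response moving her into $E\setminus B_k$ before the polynomial bound of Theorem \ref{THMmatroid} is invoked. With these two adjustments your proof is complete.
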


\begin{theorem}[\cite{AGMRV08}]
\label{THMmatroidIP}
A matroid congestion game with inconsistent priorities is a potential game with respect to lazy better responses, 
and hence possesses a pure Nash equilibrium. 
\end{theorem}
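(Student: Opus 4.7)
My plan is to extend the lexicographic potential $\Phi^{\text{sing}}$ that Ackermann et al.\ use to establish Theorem \ref{THMsingletonIP} to the matroid setting. Because $\Phi^{\text{sing}}$ is defined entirely in terms of the per-resource incidence data $(N_e(S), p_e, d_e)$, the same formula is well-defined on strategy profiles of a matroid game, and I will take this as the definition of $\Phi$.

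The heart of the argument is to show that $\Phi$ strictly lex-decreases along any single-element strategy change $S_i \to S_i'$ with $|S_i \triangle S_i'| = 2$ that strictly improves the cost of the acting player $i$. Writing $\{e_{\text{out}}\} = S_i\setminus S_i'$ and $\{e_{\text{in}}\} = S_i'\setminus S_i$, the only incidence sets that change are $N_{e_{\text{out}}}$ and $N_{e_{\text{in}}}$, so $\Phi(S_{-i},S_i')$ and $\Phi(S)$ agree in every coordinate not indexed by $e_{\text{out}}$ or $e_{\text{in}}$. Moreover the unchanged resources in $S_i\cap S_i'$ contribute equal summands to $\gamma_i(S)$ and $\gamma_i(S_{-i},S_i')$, so the strict improvement of $i$'s total cost collapses to a strict improvement of her delay on $\{e_{\text{out}},e_{\text{in}}\}$. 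From this perspective the single-element step is identical to a cost-improving switch in a singleton game; I then invoke the lex-decrease argument of Theorem \ref{THMsingletonIP} verbatim to conclude $\Phi(S_{-i},S_i')\preclex \Phi(S)$. The subtlest sub-case, in which $i$ evicts players at $e_{\text{in}}$ because $p_{e_{\text{in}}}(i) < p_{e_{\text{in}}}^*(S)$, is already absorbed by the careful arrangement of lex coordinates in $\Phi^{\text{sing}}$, which places the eviction-induced deterioration in strictly later lex positions than the improvement for $i$.

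Since a lazy better response is by definition a sequence of such single-element strictly-improving swaps, each step strictly lex-decreases $\Phi$; this gives the potential property, and termination of lazy better-response dynamics follows because $\Phi$ takes values in a well-ordered lex set. To conclude that every terminal profile is a pure Nash equilibrium I will adapt the matroid base-exchange argument of Ackermann, R\"oglin, and V\"oking used in Theorem \ref{THMmatroid}: any profitable deviation $S_i\to S_i^*$ in a matroid congestion game with priorities can be decomposed, via the exchange axiom \eqref{EQaxiom}, into a sequence of profitable single-element swaps, i.e.,\ a lazy better response, so if no lazy better response exists at $S$, then $S$ is a pure Nash equilibrium.

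The hardest part, I expect, will be verifying this last decomposition claim in the presence of inconsistent priorities, where a single intermediate swap may evict third parties and introduce infinite delays that destroy monotonicity along the way. To handle this I plan to order the elements of $S_i^*\setminus S_i$ greedily, inserting at each step the resource that causes the smallest marginal damage according to $\Phi^{\text{sing}}$, and pairing it with an element of $S_i\setminus S_i^*$ supplied by the exchange axiom \eqref{EQaxiom}; verifying that such a pairing maintains strict cost improvement for $i$ throughout is where the matroid structure and the lex order of $\Phi^{\text{sing}}$ interact, and this interaction is the principal technical obstacle.
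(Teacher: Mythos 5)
Your overall route is the same as the one the paper takes for its matroid extensions: keep the per-resource lexicographic potential from the singleton case, observe that a single-element swap $e_{\mathrm{out}}\to e_{\mathrm{in}}$ changes only the contributions of those two resources and that the cost improvement localizes to the inequality $w_{e_{\mathrm{in}}}<w_{e_{\mathrm{out}}}$, rerun the singleton lex-decrease argument, and then use the matroid exchange property (Lemma \ref{LEMmatroid}) to connect lazy better responses to arbitrary better responses. The paper states this extension in exactly those terms, so on the potential-decrease side your proposal is faithful.

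The one place you go astray is the step you yourself flag as the principal obstacle, and the issue is that it is not an obstacle at all; the machinery you propose for it (greedily ordering $S_i^*\setminus S_i$ by marginal damage to $\Phi$ and pairing via \eqref{EQaxiom} while maintaining strict improvement at every intermediate profile) is both unnecessary and not obviously realizable. You do not need to decompose a given profitable deviation $S_i\to S_i^*$ into a chain of profitable swaps that actually reaches $S_i^*$. What the theorem needs is only the contrapositive: if no improving single swap exists at $S$, then no improving deviation exists. This is immediate from Lemma \ref{LEMmatroid} once you note that, with $S_{-i}$ fixed, the delay player $i$ incurs at a resource $e$ when she uses it depends only on $N_e(S_{-i})\cup\{i\}$ and not on the rest of her base; so $\gamma_i(S_{-i},\cdot)$ is a modular function $S_i'\mapsto\sum_{e\in S_i'}w_e$ with fixed weights $w_e\in\RR_+\cup\{+\infty\}$, to which the lemma applies verbatim. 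The evictions and infinite delays you worry about are imposed on \emph{third parties}, not on $i$, and hence never enter $i$'s weight vector; they are handled (as you correctly note) by the placement of the deteriorating coordinates later in the lex order of the potential, exactly as in the proof of Theorem \ref{THMpbsingletonIP}. With that correction your argument closes: a lex-minimal profile admits no improving single swap, hence by modularity no better response, hence is a pure Nash equilibrium.
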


\begin{theorem}[\cite{AGMRV08}]
\label{THMpsmatroidIP}
A player-specific matroid congestion game with inconsistent priorities possesses a pure Nash equilibrium, 
which can be computed in polynomial time with $O(n^3mr)$ strategy changes. 
\end{theorem}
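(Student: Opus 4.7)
The plan is to build on the singleton algorithm behind Theorem \ref{THMpssingletonIP} and lift it to matroid strategy spaces via the exchange axiom \eqref{EQaxiom}. First, I would record the key matroid-theoretic tool: for any player $i$ with a matroid $(E, \mathcal{S}_i)$ of rank at most $r$, and any two bases $S_i, S_i' \in \mathcal{S}_i$, the base $S_i$ can be transformed into $S_i'$ through a sequence of at most $r$ single-element exchanges whose intermediate sets all remain in $\mathcal{S}_i$. Combined with the standard greedy characterization of minimum-cost bases in a matroid, this means that any best response of player $i$ decomposes into at most $r$ atomic single-element exchanges, each of which can itself be arranged to be cost-improving — that is, a lazy better response in the sense of the definition preceding Theorem \ref{THMmatroidIP}.

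Next, I would run the algorithm from Theorem \ref{THMpssingletonIP} conceptually, but with each player holding an $r$-element base rather than a single resource. Whenever a player $i$ can profitably move into a resource $e$ on which she has strictly higher priority than some currently accepted player, she performs the corresponding basis exchange, either adding $e$ to her base (removing some $e'$) or displacing a lower-priority occupant of $e$. Any displaced player then re-optimizes via her own sequence of single-element exchanges. The singleton algorithm is recovered as the special case $r = 1$.

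The analysis would adapt the potential argument from the singleton setting. In that case, a lexicographic/counting potential strictly decreases each time a priority-improving displacement occurs, and $O(n^3 m)$ such events suffice. Replacing each singleton move with its matroid analogue multiplies the number of atomic strategy changes by at most $r$, yielding the claimed $O(n^3 m r)$ bound.

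The main obstacle will be to certify that the interaction of matroid exchanges with inconsistent-priority cascades still fits a strictly decreasing lexicographic potential. A single basis exchange may simultaneously displace a lower-priority player from one resource and free another resource for some other player, triggering a chain of further exchanges; one must formalize a potential — most naturally, a sorted vector recording, for each slot occupied by each player, the priority level at which she is currently accepted — and verify that every lazy better response decreases this vector lexicographically. Once this invariant is in place, termination and the polynomial bound follow from the singleton analysis applied \emph{mutatis mutandis}, with the extra factor of $r$ coming from expanding each conceptual strategy update into its atomic exchange sequence.
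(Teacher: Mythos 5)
First, a point of reference: the paper does not prove this statement at all --- Theorem~\ref{THMpsmatroidIP} is quoted verbatim from Ackermann et al.~\cite{AGMRV08} as background in Section~\ref{SECAGMRV}. The closest material in the paper is the proof of the singleton analogue for the generalized model (Theorem~\ref{THMpspbsingletonIP}, in the Appendix) together with Lemma~\ref{LEMmatroid} and the surrounding remark in Section~\ref{SECbeyond} that any better response over a matroid can be implemented as a lazy better response, i.e.\ a sequence of cost-improving single-element exchanges. Your overall route --- insert players one at a time, let displaced players re-optimize, decompose each matroid best response into at most $r$ atomic exchanges via the exchange axiom~\eqref{EQaxiom}, and charge the factor $r$ to this decomposition on top of the $O(n^3m)$ singleton bound --- is exactly the standard lift and matches what both \cite{AGMRV08} and this paper (for its own matroid theorems) rely on.

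That said, your proposal is a plan rather than a proof, and the gap you yourself flag is the actual content of the argument. In the singleton algorithm the potential is a \emph{two-component} lexicographic object: the sorted family of per-resource occupancy vectors $\phi_e(q)=n_e^q(\cdot)$, and, as a tie-breaker needed precisely in Case~B1 (where a same-priority player is displaced and the occupancy profile does not change), the sum of tolerances $\sum_i \tol(i,\cdot)$. Your proposed potential (``a sorted vector recording, for each slot, the priority level at which a player is accepted'') omits this second component, and without it the Case~B1 analogue does not terminate: a displacement that swaps one same-priority occupant for another leaves your vector unchanged. Moreover, in the matroid setting a single exchange by player $i$ touches \emph{two} resources simultaneously (the one entered and the one left), so one must check that the occupancy-vector component cannot lexicographically increase on the vacated resource in a way that cancels the decrease on the entered one, and that the tolerance of a player is well defined and monotone when her remaining $r-1$ resources are held fixed. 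These verifications are routine but they are the proof; as written, your argument asserts that the singleton analysis goes through ``mutatis mutandis'' at exactly the point where the mutation is nontrivial.
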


\begin{theorem}[\cite{AGMRV08}]
\label{THMcorrelatedmatroid}
A correlated two-sided matroid market with ties is a potential game with respect to lazy better responses, 
and hence possesses a pure Nash equilibrium. 
\end{theorem}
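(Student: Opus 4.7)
The plan is to combine the potential function constructed in the proof of Theorem \ref{THMcorrelated} for the singleton case with the lazy-better-response analysis underlying Theorem \ref{THMmatroidIP} for matroid games with inconsistent priorities. Via the reduction described in Section \ref{SECcorrelated}, a correlated two-sided matroid market with ties is a player-specific matroid congestion game with inconsistent priorities in which the priority order at each resource $e$ coincides with the ordering of the base costs $c_{\cdot,e}$, and whose player-specific delay satisfies $d'_{i,e}(y) = d_e(c_{i,e},y)$. In particular $d'_{i,e}(1) = c_{i,e}$ equals the very quantity that determines $i$'s priority at $e$, and this coupling between priority and delay is what will allow us to glue the two constructions together.

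First, for a strategy profile $S$ and each resource $e$ with $N_e(S) \neq \emptyset$, let $c^*_e(S)$ and $n^*_e(S)$ be as in Section \ref{SECcorrelated}, and set $\delta_e(S) = d_e(c^*_e(S), n^*_e(S))$; this is the finite delay imposed on each accepted player, while every $i \in N_e(S)\setminus N_e^*(S)$ receives $+\infty$. I would then define $\Phi(S)$ as the non-increasingly sorted sequence collecting, over all resources $e$, the value $\delta_e(S)$ (when at least one player is accepted) together with the value $c_{i,e}$ for every unaccepted incidence $(i,e)$ with $i \in N_e(S)\setminus N_e^*(S)$, and compare profiles via $\preclex$. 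This is the natural lift of the potential used for Theorem \ref{THMcorrelated} in the singleton setting, in the same lexicographic spirit as the potential employed in the proof of Theorem \ref{THMmatroidIP}.

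Next, I would analyze a lazy better response of some player $i$, which swaps one resource $e$ out of $S_i$ for one resource $e' \notin S_i$ while strictly decreasing $\gamma_i$. Since only the contributions from $e$ and $e'$ are affected, the argument reduces to a local comparison. At $e'$ the player $i$ either becomes the unique top-priority user (contributing $c_{i,e'}$), joins the existing top class (replacing $\delta_{e'}(S)$ by the new $\delta_{e'}$ value), or is unaccepted (contributing $c_{i,e'}$); the removal at $e$ admits symmetric cases. Using $d'_{i,e}(1) = c_{i,e}$ together with the monotonicity of each $d_e(x,\cdot)$, I would argue that the largest entry that changes strictly drops, yielding $\Phi(S_{-i},S_i') \preclex \Phi(S)$. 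Finally, the matroid exchange axiom \eqref{EQaxiom} ensures that whenever $S$ is not a pure Nash equilibrium a lazy better response exists, exactly as in Theorem \ref{THMmatroidIP}, so any $\Phi$-minimizer is a pure Nash equilibrium.

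The main obstacle is verifying the lexicographic decrease uniformly across all types of swap: a top-priority player leaving a resource, a new player entering the top class, a tie being broken or created, and cascades in which removing $i$ from $e$ promotes formerly unaccepted players into $N_e^*$. The critical ingredient that controls these cases is the identification $d'_{i,e}(1) = c_{i,e}$, which synchronises the priority-based and delay-based contributions to $\Phi$ and prevents a cascading promotion at one resource from producing entries larger than those that disappear; making this bookkeeping airtight, rather than any single calculation, is the heart of the proof.
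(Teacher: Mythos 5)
First, note that the paper does not prove this statement itself: Theorem~\ref{THMcorrelatedmatroid} is quoted from \cite{AGMRV08}. The closest in-paper argument is the proof of Theorem~\ref{THMtwosidedsingleton} together with the reduction of arbitrary better responses to lazy ones via Lemma~\ref{LEMmatroid}, which yields the more general Theorem~\ref{THMtwosidedmatroid}. Measured against that argument, your proposal has the right outer structure (reduce to single-element swaps, exhibit a lexicographic potential), but the potential you define is not the one used there, and it does not work.

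The gap is exactly the ``cascade'' case you flag at the end, and the ingredient you claim controls it does not. Your $\Phi(S)$ records \emph{one} aggregated entry $\delta_e(S)=d_e(c^*_e(S),n^*_e(S))$ per resource for the whole accepted class, plus the raw costs $c_{j,e}$ for unaccepted incidences. Consider a rank-one (hence matroid) instance with resources $e,e'$ and players $i,j_1,j_2$, where $c_{i,e}=1$, $c_{j_1,e}=c_{j_2,e}=2$, $c_{i,e'}=0.5$, with $d_e(2,2)=100$ and all three players initially on $e$. The move of $i$ from $e$ to $e'$ is a (lazy) better response, since her delay drops from $d_e(1,1)=1$ to $0.5$. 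Your potential changes from the sorted sequence $(2,2,1)$ to $(100,\,0.5)$: the entries $1,2,2$ disappear and the single new entry $d_e(2,2)=100$ appears, so the non-increasingly sorted sequence \emph{increases} lexicographically. The identity $d_e(c,1)=c$ only pins down the promoted class's delay when that class has one member; for $m\ge 2$ the new entry $d_e(c^{\mathrm{new}},m)$ can strictly exceed all $m$ copies of $c^{\mathrm{new}}$ that it replaces, so nothing ``synchronises'' the bookkeeping. (A symmetric problem occurs at $e'$ when $i$ joins an existing top class: the old entry $\delta_{e'}(S)$ is replaced by a strictly larger one.) The fix, which is what the proof of Theorem~\ref{THMtwosidedsingleton} actually does, is to assign one entry per player--resource incidence, with the accepted class at $e$ contributing the Rosenthal-style prefix $d_e(c^*_e,1),d_e(c^*_e,2),\dots,d_e(c^*_e,n^*_e)$ and each entry carrying its cost level as a tie-breaking second coordinate. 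With that structure, a player joining a class only \emph{appends} an entry (never replaces one), a player leaving only deletes the last entry, and the entries of a promoted class are compared against the same-level entries that disappear using properties of the form \eqref{EQplusone3}; this is what makes the cascade harmless. Your proposal needs to be rebuilt around that per-incidence potential; the one-entry-per-class potential cannot be repaired by choosing a different sort order, as the example above (and its mirror at $e'$) shows.
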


\subsection{Priority-Based Affine Congestion Games}
\label{SECbv}

\full{In this subsection, 
we describe the model of priority-based affine congestion games with consistent priorities \cite{BV23tcs}, by using the terminology of 
congestion games with priorities. }%
\full{A priority-based affine congestion game with consistent priorities is described by a tuple 
$$
(N, E, (\mathcal{S}_i)_{i\in N}, p, (\alpha_e, \beta_e)_{e\in E}). 
$$}%
\conf{A priority-based affine congestion game with consistent priorities \cite{BV23tcs} is described as
$
(N, E, (\mathcal{S}_i)_{i\in N}, p, (\alpha_e, \beta_e)_{e\in E})$. }%
\full{Again, 
$N$ and $E$ denote the set of the players and that of the resources, respectively, 
and 
$\mathcal{S}_i \subseteq 2^E$ is the strategy space of a player $i\in N$. }%
Note that all resources 
have the same priority function $p \colon N \to \ZZ_{++}$. 
Each resource $e\in E$ is associated with two nonnegative real numbers $\alpha_e,\beta_e\in \RR_+$, 
which determine the delay function of $e$ in the following manner. 

Let $S$ be a strategy profile, 
$e\in E$ be a resource, and 
$q \in \ZZ_+$ a positive integer.  
Define 
$n_e^q(S) \in \ZZ_+$ 
as in \eqref{EQeq}, 
in which $p_e$ is replaced by $p$. 
Similarly, 
define 
$n_e^{<q}(S)\in \ZZ_+$ by 
\begin{align}
    \label{EQl}
    n_e^{< q}(S)  \   {}&{}=\ \left|\{i \in N_e(S) \colon p(i) < q \} \right|.
\end{align}
Now the delay imposed on a player $i\in N_e(S)$ by $e$ is 
defined as 
\begin{align}
    \label{EQBV}
    \alpha_e \cdot \left(n_e^{<p(i)}(S) + 
    \frac{n_e^{p(i)}(S)+1}{2}
    \right) + \beta_e, 
\end{align}
which is interpreted in the following way. 
The delay imposed on player $i\in N_e(S)$ by $e\in E$ is affected by 
the $n_e^{<p(i)}(S)$ players in $N_e(S)$ more prioritized than $i$. 
It is also affected by the $n_e^{p(i)}(S)$ players with the same priority as $i$, 
which is reflected to 
$(n_e^{p(i)}(S)+1)/2$ in \eqref{EQBV}. 
This value is the expected 
number of the players more or equally prioritized than $i$ when the ties of the $n_e^{p(i)}(S)$  players are broken 
uniformly at random.  

\full{
Bil\`o and Vinci \cite{BV23tcs} proved that every 
priority-based affine congestion game with consistent priorities has a pure Nash equilibrium, 
and that it 
can be constructed by 
finding a pure Nash equilibrium of the most prioritized players, 
and 
then 
inductively 
extending the pure Nash equilibrium of the players with up to the $k$-th priority 
to those with up to the $(k+1)$-st priority. 
In each step, 
the game restricted to the players with the $(k+1)$-st priority is a potential game. 
}

\begin{theorem}[\cite{BV23tcs}]
    A priority-based affine congestion game with consistent priorities 
    possesses a pure Nash equilibrium. 
\end{theorem}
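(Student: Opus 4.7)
The plan is to construct a pure Nash equilibrium by induction on the priority classes, starting from the most prioritized players and descending to the least prioritized. Enumerate the distinct values taken by $p$ as $q_1 < q_2 < \cdots < q_K$, and for each $k$ set $N_k = \{i \in N : p(i) = q_k\}$. I will build the profile $(S_i^*)_{i \in N}$ class by class: having fixed the strategies of all players of priority strictly higher than $q_k$, I will choose the strategies of $N_k$ to be a pure Nash equilibrium of an appropriately defined subgame, and then argue that this choice is not disturbed when later classes are processed.

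For the inductive step, suppose $(S_i^*)_{i \in N_1 \cup \cdots \cup N_{k-1}}$ has been fixed. Consider the \emph{level-$k$ subgame} in which only the players in $N_k$ choose strategies from their original spaces $\mathcal{S}_i$, while higher-priority players are clamped to $S_i^*$. For each resource $e \in E$, the quantity $n_e^{<q_k}(S)$ is then independent of the level-$k$ choices; denote it by the constant $c_e := |\{i \in N_1 \cup \cdots \cup N_{k-1} : e \in S_i^*\}|$. By \eqref{EQBV}, the delay that any player $i \in N_k$ using $e$ experiences equals
\[
\alpha_e \Bigl( c_e + \tfrac{n_e^{q_k}(S)+1}{2} \Bigr) + \beta_e,
\]
which depends only on $n_e^{q_k}(S)$, the number of players in $N_k$ using $e$. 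Defining $\tilde d_e(x) = \alpha_e(c_e + (x+1)/2) + \beta_e$ recasts the level-$k$ subgame as an ordinary monotone congestion game with common delay function $\tilde d_e$ on the player set $N_k$, so Theorem \ref{THMconge} furnishes a pure Nash equilibrium $(S_i^*)_{i \in N_k}$ for it.

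To see that the assembled profile $S^* = (S_i^*)_{i \in N}$ is a pure Nash equilibrium of the original game, note that for any $i \in N_k$ the cost of $i$ at any resource $e \in S_i^*$ depends only on $n_e^{<q_k}(S^*)$ and $n_e^{q_k}(S^*)$, and both quantities ignore any player $j$ with $p(j) > q_k$. The same is true if $i$ deviates to some $S_i' \in \mathcal{S}_i$, since the deviation can change at most $n_e^{q_k}$ at resources in $S_i^* \triangle S_i'$, but never $n_e^{<q_k}$. Hence the cost landscape seen by $i$ in the full game coincides with the cost landscape of the level-$k$ subgame in which $(S_i^*)_{i \in N_k}$ was declared an equilibrium. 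Consequently no $i \in N_k$ has a profitable deviation, and since this holds for every $k$, the profile $S^*$ is a pure Nash equilibrium.

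The main obstacle is precisely the monotonicity-in-priority observation used in the stitching step: one must justify that the equilibrium established for $N_k$ is not invalidated once later classes are added. This hinges on the fact that \eqref{EQBV} uses $n_e^{<p(i)}$ and $n_e^{p(i)}$, both of which disregard players of strictly lower priority. Once this asymmetry is isolated, each per-level subgame is a classical congestion game, and Rosenthal's potential function completes the argument.
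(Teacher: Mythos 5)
Your proof is correct and follows essentially the same route the paper attributes to Bil\`o and Vinci: fix the higher-priority classes, observe that each level-$k$ subgame is an ordinary (Rosenthal) congestion game because the delay \eqref{EQBV} ignores lower-priority players, and stitch the per-level equilibria together by induction. This is also precisely the decomposition the paper later generalizes in Lemma \ref{LEMarbit} and Theorem \ref{THMgeneral}.
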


\full{
We should remark that 
Bil\`o and Vinci \cite{BV23tcs} further conducted an elaborated analysis on 
the price of anarchy and the price of stability of the pure Nash equilibria of this model, 
which might be a main contribution of their paper. 
}


\section{Our Model}
\label{SECourmodel}

We first point out that 
the model of Bil\`o and Vinci \cite{BV23tcs} described in Section \ref{SECbv} partially 
answers to the open question of Ackermann et al.\ \cite{AGMRV08}. 
Indeed, 
the delay \eqref{EQBV} of a player $i \in N_e(S)$ is finitely affected by 
the more prioritized players in $N_e(S)$. 
Meanwhile, 
compared to the model of Ackermann et al.\ \cite{AGMRV08}, 
the delay \eqref{EQBV} is specific in that it is a particular affine function 
of $n_e^{<p(i)}(S)$ and $n_e^{p(i)}(S)$, and 
the priorities of the resources are consistent. 
Below we resolve these points by providing a common generalization of the two models, 
which provides a full answer to the open question in \cite{AGMRV08}. 

Our model is represented 
\full{by a tuple $$
(N, E, (\mathcal{S}_i)_{i\in N}, (p_e)_{e\in E}, (d_e)_{e\in E}), 
$$}%
\conf{as $
(N, E, (\mathcal{S}_i)_{i\in N}, (p_e)_{e\in E}, (d_e)_{e\in E}), 
$ }%
which is often abbreviated as $(N, E, (\mathcal{S}_i), (p_e), (d_e))$. 
\full{Again, 
$N$ and $E$ denote the sets of players and resources, respectively, 
each player $i\in N$ has her strategy space $\mathcal{S}_i\subseteq 2^E$, 
and 
each resource $e\in E$ has a priority function $p_e \colon N \to \ZZ_{++}$. 

}%
Let $S = (S_1, \ldots, S_n)$ be a strategy profile, 
$i\in N$, and $e\in S_i$. 
Reflecting the delay function \eqref{EQBV}, 
\full{our delay function $d_e\colon \ZZ_+ \times \ZZ_{++} \to \RR_+$ ($e\in E$) is a bivariate function with }%
\conf{our delay function $d_e\colon \ZZ_+ \times \ZZ_{++} \to \RR_+$ is a bivariate function with }
variables $n_e^{<p_e(i)}(S)$ and $n_e^{p_e(i)}(S)$. 
Namely, 
the delay imposed on $i$ by $e$ is 
\full{described as }
\begin{align}
\label{EQdelay}
    d_{e}\left( n_{e}^{<p_{e}(i)}(S), n_{e}^{p_{e}(i)}(S) \right). 
\end{align}

We assume that each delay function $d_e$ ($e\in E$) has the following properties: 
\begin{alignat}{2}
\label{EQmonotonex}
&d_e(x,y) \le d_e(x',y)& \quad  &(\mbox{if $x<x'$}), \\
\label{EQmonotoney}
&d_e(x,y) \le d_e(x,y')& \quad  &(\mbox{if $y<y'$}), \\
\label{EQplusone}
&d_e(x,y) \le d_e(x+y-1,1)& \quad &(\mbox{for each $x\in \ZZ_+$ and $y\in \ZZ_{++}$}).
\end{alignat}
Property \eqref{EQmonotonex} and \eqref{EQmonotoney}  
mean that the delay function $d_e$ is nondecreasing with respect to 
$n_{e}^{<p_{e}(i)}(S)$ and 
$n_{e}^{p_{e}(i)}(S)$, 
respectively. 
\full{These properties reflect the monotonicity of the delay functions in the previous models. }
Property \eqref{EQplusone} means that 
the cost on $i$ increases 
if the $n_e^{p_e(i)}(S)-1$ players in $N_e(S)$ with the same priority as $i$ 
are replaced by the same number of more prioritized players. 
This property captures the characteristic of the models of \cite{AGMRV08,BV23tcs} 
that 
prioritized players produce more delays than those with the same priority. 

We refer to our model as a \emph{priority-based congestion game with inconsistent priorities}, 
or \emph{priority-based congestion game} for short. 
If the resources have the same priority function, 
then the game is referred to as a \emph{priority-based congestion game with consistent priorities}.

A priority-based affine congestion game 
$(N, E, (\mathcal{S}_i)_{i\in N}, p, (\alpha_e, \beta_e)_{e\in E})$ with consistent priority \cite{BV23tcs} 
is represented as a priority-based congestion game
with consistent priorities $p$ and 
delay function $d_e\colon \ZZ_+ \times \ZZ_{++} \to \RR_+$ ($e\in E$) defined as in 
\eqref{EQBV}, 
namely 
\begin{align}
\label{EQreduceBV}
    d_{e}\left( n_{e}^{<p_{e}(i)}(S), n_{e}^{p_{e}(i)}(S) \right) = 
    \alpha_e \left(n_e^{<p(i)}(S) + 
    \frac{n_e^{p(i)}(S)+1}{2}
    \right) + \beta_e. 
\end{align}
It is not difficult to see that the delay function $d_e$ in \eqref{EQreduceBV} satisfies 
the properties \eqref{EQmonotonex}--\eqref{EQplusone}. 

A congestion game with inconsistent priorities \cite{AGMRV08} is also a special case of a priority-based congestion game. 
Given a congestion game 
$(N,E, (\mathcal{S}_i)_{i\in N}, (p_e)_{e\in E}, (d_e)_{e\in E})$ with priorities, 
define a delay function $d_e'\colon \ZZ_+\times \ZZ_{++}\to \RR$ of 
a priority-based congestion game by 
\begin{align}
\label{EQreduceA}
    d_e'(x,y) = 
    \begin{cases}
        +\infty  & (x \ge 1), \\
        d_e(y)  & (x =0).
    \end{cases}
\end{align}
Again, 
the delay function $d_e'$ in \eqref{EQreduceA} satisfies 
the properties \eqref{EQmonotonex}--\eqref{EQplusone} 
if $d_e$ is a nondecreasing function. 
The properties \eqref{EQmonotonex} and \eqref{EQmonotoney} are directly derived. 
The property \eqref{EQplusone} follows from the fact that 
$d_e'(x+y-1,1) \neq + \infty$ only if 
$(x,y)=(0,1)$, 
and in that case 
both $d_e'(x,y)$ and $d_e'(x+y-1,1)$ is equal to $d_e'(0,1)= d_e(1)$.

\section{Priority-Based Singleton Congestion Games}
\label{SECsingleton}

\full{In this section, 
we present some theorems on pure Nash equilibria 
in singleton games in our model. 
}

\subsection{Consistent Priorities}
\label{SECCP}

\full{
In this subsection, 
we present 
a theorem on pure Nash equilibria in priority-based player-specific singleton congestion games 
with consistent priorities
(Theorem \ref{THMpbsingletonCP}). 
This theorem is not only an extension of Theorems \ref{THMsingleton} and \ref{THMsingletonCP}, 
which concern pure Nash equilibria in non-player-specific singleton congestion games, 
but also implies the existence of pure Nash equilibria 
in player-specific congestion games with consistent priorities (Corollary \ref{COR}), 
which is missing in the literature. 
}

\conf{
In a singleton game, 
a strategy $\{e\}$ is simply denoted by $e$. 
The following theorem is not only an extension of Theorems \ref{THMsingleton} and \ref{THMsingletonCP}, 
which concern pure Nash equilibria in non-player-specific singleton congestion games, 
but also implies the existence of pure Nash equilibria 
in player-specific congestion games with consistent priorities (Corollary \ref{COR}), 
which is missing in the literature. 
}

Hereafter, 
some theorems are marked with $(\star)$, meaning that their proofs appear in Appendix. 

\begin{theorem}[$\star$]
\label{THMpbsingletonCP}
In a priority-based player-specific singleton congestion game $G=(N, E, (\mathcal{S}_i), p, (d_{i,e}))$ 
with consistent priorities, 
there exists a sequence of polynomial number of better responses 
starting from an arbitrary strategy profile and 
reaching a pure Nash equilibrium. 
\end{theorem}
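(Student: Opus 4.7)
The plan is to process priority classes from highest to lowest, applying Theorem \ref{THMpssingleton} within each class. Let $q_1 < q_2 < \cdots < q_k$ enumerate the distinct values taken by $p$ (recall that smaller $p$-values denote higher priority), and let $N_q := \{i \in N : p(i) = q\}$. Under consistent priorities, the delay $d_{i,e}(n_e^{<p(i)}(S), n_e^{p(i)}(S))$ imposed on a player $i \in N_{q_\ell}$ depends only on the placements of players in $\bigcup_{j \le \ell} N_{q_j}$, and is completely independent of the choices of strictly less prioritized players.

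Starting from an arbitrary strategy profile, I would process the classes in the order $N_{q_1}, N_{q_2}, \ldots, N_{q_k}$. At the start of stage $\ell$, the strategies of all higher-priority classes are frozen. For each resource $e$, let $c_e := n_e^{<q_\ell}(S)$ denote the (now-fixed) number of strictly more prioritized players on $e$, and for $i \in N_{q_\ell}$ and $e \in \mathcal{S}_i$ define the reduced delay $\tilde d_{i,e}(y) := d_{i,e}(c_e, y)$. By property \eqref{EQmonotoney}, each $\tilde d_{i,e}$ is nondecreasing, so $(N_{q_\ell}, E, (\mathcal{S}_i)_{i \in N_{q_\ell}}, (\tilde d_{i,e}))$ is a player-specific singleton congestion game in the classical sense. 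Applying Theorem \ref{THMpssingleton} to it produces a sequence of polynomially many better responses reaching a pure Nash equilibrium among $N_{q_\ell}$. Each such move strictly decreases the mover's cost in the original game $G$ as well, since the background counts $c_e$ are unchanged, so it is a legitimate better response in $G$.

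After completing all $k$ stages, the resulting profile is a pure Nash equilibrium of $G$: players in $N_{q_\ell}$ are in best-response configuration given the frozen higher-priority background together with the stage-$\ell$ equilibrium within their own class, and since moves by less prioritized players in later stages alter no count that enters into an $N_{q_\ell}$-player's cost, this best-response condition is preserved to the end. Summing the polynomial bound of Theorem \ref{THMpssingleton} over the at most $n$ stages yields a polynomial total number of better responses. The main delicate point is the stage-by-stage preservation of equilibrium, which genuinely uses consistency of priorities: with inconsistent priorities there is no single global partition of players into priority classes to drive the induction, so a move by a player less prioritized on one resource may be more prioritized on another and thereby disturb a previously stabilized class, which is why this argument does not extend to the inconsistent setting treated separately later in the paper.
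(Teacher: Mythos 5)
Your proposal is correct and follows essentially the same route as the paper's own proof: process the priority classes from most to least prioritized, observe that with the higher classes frozen the restricted game is a classical player-specific singleton congestion game with delays $d_{i,e}(n_e^{<q_\ell}(S),\cdot)$, and invoke Theorem \ref{THMpssingleton} within each class. The only cosmetic difference is that you spell out explicitly why later stages cannot disturb earlier classes, which the paper leaves implicit in its induction.
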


\full{
The following corollary is a direct consequence of Theorem \ref{THMpbsingletonCP}. 
}

\begin{corollary}
\label{COR}
In a player-specific singleton congestion game with consistent priorities, 
there exists a sequences of polynomial number of better responses 
starting from an arbitrary strategy profile and 
reaching a pure Nash equilibrium. 
\end{corollary}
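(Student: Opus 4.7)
The plan is to adapt the priority-class decomposition that underlies Theorem \ref{THMsingletonCP}, replacing the appeal to Theorem \ref{THMsingleton} by Theorem \ref{THMpssingleton} so that player-specific delay functions are handled. Specifically, I would partition the players by priority class, $N_k = \{i \in N : p(i) = k\}$ for $k = 1, \ldots, K$, and process these classes in increasing order of $k$ (i.e., from most prioritized to least), letting each class converge to an internal equilibrium before moving to the next.

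The key observation, which makes consistent priorities essential here, is that in a singleton game the delay \eqref{EQdelay} on player $i \in N_e(S)$ depends only on players in $N_e(S)$ whose priority is at most $p(i)$, since the arguments $n_e^{<p(i)}(S)$ and $n_e^{p(i)}(S)$ both ignore less prioritized players. Thus a better response by a class-$k$ player never changes the cost of any class-$j$ player with $j < k$. Furthermore, once the assignment of all classes with higher priority than $k$ is fixed, say with $n_e^{<k}$ players of priority $< k$ occupying each resource $e$, the subgame played among the class-$k$ players becomes an ordinary player-specific singleton congestion game on resource set $E$: each class-$k$ player $i$ on resource $e$ with $y$ other class-$k$ players present incurs delay $\tilde d_{i,e}(y) := d_{i,e}(n_e^{<k}, y)$, a univariate delay function that is nondecreasing in $y$ by \eqref{EQmonotoney}.

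Concretely, starting from the arbitrary initial profile $S^0$, I would first restrict attention to $N_1$ and apply Theorem \ref{THMpssingleton} to the induced player-specific singleton congestion game on $N_1$, producing a polynomial-length sequence of best (hence better) responses inside $N_1$ that brings its configuration to an equilibrium $S^1$. Then, freezing $S^1$ on $N_1$, I would do the same for $N_2$ with the univariate delays $\tilde d_{i,e}$ defined above, and so on inductively through $N_K$. Each stage terminates in a number of moves polynomial in $|N_k|$ and $|E|$ by Theorem \ref{THMpssingleton}, so the total length is polynomial in $n$ and $m$. At the end, every class-$k$ player is best-responding within her class against the fixed higher-priority configuration, and by the independence observation no higher-priority player's cost has been perturbed, so every player simultaneously lacks a profitable deviation. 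Hence the terminal profile is a pure Nash equilibrium of $G$.

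The main obstacle is confirming that no player wants to deviate \emph{across} priority classes at the end, not merely within her own class. This requires two checks: first, that moves by class-$k$ players truly leave higher-priority players' costs unchanged (immediate from the form of \eqref{EQdelay}); and second, that a class-$k$ player $i$ comparing two resources $e, e'$ sees her relevant cost on each fully captured by the univariate $\tilde d_{i,e}$, $\tilde d_{i,e'}$ used in stage $k$. The latter is exactly the statement that the frozen $n_e^{<k}$ counts are the correct arguments to feed into $d_{i,e}$, which is the case precisely because consistent priorities ensure $n_e^{<p(i)}(S) = n_e^{<k}$ depends only on the players in classes $N_1, \ldots, N_{k-1}$ on $e$. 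Neither property \eqref{EQmonotonex} nor \eqref{EQplusone} is needed for the argument, so the theorem holds under the weaker assumption \eqref{EQmonotoney} alone.
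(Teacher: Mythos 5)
Your proposal is correct and follows essentially the same route as the paper: the paper proves the corollary as a direct consequence of Theorem \ref{THMpbsingletonCP}, whose proof is exactly your priority-class decomposition — process classes in increasing priority value, freeze the higher classes, and apply Theorem \ref{THMpssingleton} to the induced player-specific singleton game with univariate delays $d_{i,e}(n_e^{<k},\cdot)$. Your closing observation that only \eqref{EQmonotoney} is needed is a fair (and accurate) aside, but it does not change the argument.
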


\begin{remark}
    Corollary \ref{COR} does not imply that 
    a pure Nash equilibrium  
    \full{in a priority-based singleton congestion games which is not player-specific }%
    is obtained from an \emph{arbitrary} sequence of best responses. 
    This is because, 
    as described in the proof for Theorem \ref{THMpbsingletonCP}, 
    the order of the players in the sequence is specified by the priority function. 
\end{remark}

\subsection{Inconsistent Priorities}

\full{In this subsection, 
we investigate priority-based singleton congestion games with inconsistent priorities. }%
We first prove the following extension of Theorem \ref{THMsingletonIP}. 

\begin{theorem}
\label{THMpbsingletonIP}
A priority-based singleton congestion game 
$(N,E, (\mathcal{S}_i), (p_{e}), (d_e))$ 
with inconsistent priorities 
is a potential game, 
and hence possesses a pure Nash equilibrium. 
\end{theorem}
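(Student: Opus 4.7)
The plan is to construct a lexicographic potential function extending the one employed by Ackermann et al.\ in the proof of Theorem \ref{THMsingletonIP}, now accommodating finite delays produced by higher-priority players. Let $Q = \max\{p_e(i) : i \in N, e \in E\}$. For each priority level $q \in \{1, \ldots, Q\}$, define the scalar
\[
\Phi_q(S) \ =\ \sum_{e \in E}\ \sum_{k=1}^{n_e^q(S)} d_e\bigl(n_e^{<q}(S),\, k\bigr),
\]
and let $\Phi(S) = (\Phi_Q(S), \Phi_{Q-1}(S), \ldots, \Phi_1(S))$, compared lexicographically with $\Phi_Q$ as the most significant coordinate. Intuitively, $\Phi_q(S)$ is a Rosenthal-type sum over the players currently placed at priority $q$, and the first argument $n_e^{<q}(S)$ of $d_e$ ``freezes'' the effect of higher-priority players at $e$ on that level.

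It will then suffice to prove that every better response by some player $i$, switching from $\{e_1\}$ to $\{e_2\}$ with $q_1 := p_{e_1}(i)$ and $q_2 := p_{e_2}(i)$, strictly lexicographically decreases $\Phi(S)$. The changes in $\Phi_q(S)$ are localised to resources $e_1$ and $e_2$: at $e_1$, $n_{e_1}^{q_1}$ drops by $1$ and $n_{e_1}^{<q}$ drops by $1$ for every $q > q_1$; symmetrically at $e_2$ via $n_{e_2}^{q_2}$ and $n_{e_2}^{<q}$ for $q > q_2$. A case analysis according to the relative positions of $q_1$ and $q_2$, together with the monotonicity properties \eqref{EQmonotonex} and \eqref{EQmonotoney}, identifies the largest priority level $q^\star$ at which a change actually occurs, and determines the sign of $\Phi_{q^\star}(S') - \Phi_{q^\star}(S)$.

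The main obstacle is the case $q_2 < q_1$, where for every $q > q_1$ the contribution at $e_1$ weakly decreases but the contribution at $e_2$ weakly increases, so the net sign is a priori unclear. Property \eqref{EQplusone} is the crucial tool here: the bound $d_e(x,y) \le d_e(x+y-1,1)$ asserts that replacing same-priority neighbours by an equal number of higher-priority ones is never advantageous, which is precisely the structural fact needed to dominate the positive contributions at $e_2$ by the negative ones at $e_1$ at the maximal changing level. Combined with the better-response inequality
\[
d_{e_2}\bigl(n_{e_2}^{<q_2}(S),\, n_{e_2}^{q_2}(S)+1\bigr) \ <\ d_{e_1}\bigl(n_{e_1}^{<q_1}(S),\, n_{e_1}^{q_1}(S)\bigr),
\]
a careful telescoping of the coordinate changes across the relevant priority levels will establish the strict lex-decrease at $q^\star$. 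A pure Nash equilibrium then exists, since any strategy profile minimising the lex-ordered $\Phi$ over the finite set of profiles admits no better response. Should the plain definition of $\Phi_q$ above turn out to be insufficient at $q^\star$, we plan to augment it by an auxiliary ``transfer'' term that reassigns $i$'s former cost at $e_1$ to the higher-priority bucket she joins at $e_2$, justified again by \eqref{EQplusone}.
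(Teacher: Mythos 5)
Your proposed potential does not work: aggregating the delays into a single scalar $\Phi_q(S)$ per priority level loses exactly the information needed to get a lexicographic decrease, under either ordering of the coordinates. Concretely, take two resources and a player $i$ at $e_1$ with $p_{e_1}(i)=1$ who moves to $e_2$ where $p_{e_2}(i)=1$, while a second player $j$ sits alone at $e_2$ with $p_{e_2}(j)=2$. The move is a better response whenever $d_{e_2}(0,1)<d_{e_1}(0,1)$, yet $\Phi_2$ changes by $d_{e_2}(1,1)-d_{e_2}(0,1)\ge 0$, which is strictly positive in general; with $\Phi_Q$ most significant the potential lexicographically \emph{increases}. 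Reversing the order (making $\Phi_1$ most significant) fails on your own "main obstacle" case $q_2<q_1$: if $i$ moves from priority level $q_1=2$ at $e_1$ to level $q_2=1$ at $e_2$, then $\Phi_1$ gains the term $d_{e_2}(0,n_{e_2}^{1}(S)+1)>0$ and loses nothing, so the most significant changing coordinate strictly increases. Your hope that \eqref{EQplusone} lets the negative contributions at $e_1$ dominate the positive ones at $e_2$ at the maximal changing level cannot be realized: \eqref{EQplusone} only relates values of one and the same delay function $d_e$, and the better-response inequality only compares the two specific values $d_{e_2}(n_{e_2}^{<q_2}(S),n_{e_2}^{q_2}(S)+1)$ and $d_{e_1}(n_{e_1}^{<q_1}(S),n_{e_1}^{q_1}(S))$, which live at levels $q_2$ and $q_1$ rather than at the higher levels where your most significant change occurs.

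The paper's proof uses a strictly finer object: each resource contributes $n_e(S)$ individual pairs $(d_e(n_e^{<q_k}(S),j),\,q_k)$, and the potential is the multiset of all $n$ pairs sorted lexicographically. A better response introduces exactly one new pair, $(d_{e'}(n_{e'}^{<q'}(S),n_{e'}^{q'}(S)+1),\,q')$, and the argument shows via \eqref{EQbr}, \eqref{EQkplus1}, and \eqref{EQplusone} that this pair is lexicographically smaller than every pair that appears in $\Phi(S)$ but not in $\Phi(S')$ --- the positive and negative changes are never summed against each other, so no cross-resource domination is needed. Your per-level Rosenthal sum does appear in the paper, but only in Lemma \ref{LEMarbit} as a potential for the single-level restricted game with \emph{consistent} priorities and all other levels frozen; it does not lift to a global potential for inconsistent priorities in the way you propose. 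To repair your argument you would essentially have to disaggregate $\Phi_q$ back into its individual summands and compare sorted lists, which is the paper's construction.
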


\begin{proof}
For each strategy profile $S=(e_1,\ldots, e_n)$, 
define its potential $\Phi(S) \in (\RR_+ \times \ZZ_{++})^n$ as follows. 
Let 
$e\in E$ be a resource, 
and let 
$Q_e(S)=\{ q_1,..., q_{k^*}\}$ be a set of integers such that 
$Q_e(S)=\{ q \colon n_e^ {q}(S) > 0\}$  and $ q_1 < \cdots <  q_{k^*}$. 
The resource $e\in E$ contributes the following 
$n_e(S)$ vectors in $\RR_+ \times \ZZ_{++}$
to $\Phi(S)$: 
\full{
\begin{align}
&{}(d_e(0,1),\,  q_1), \ldots, (d_e(0,n_e^{ q_1}(S)),\,  q_1), \notag\\
&{}(d_e(n_e^{ q_1}(S),1),\,  q_2), \ldots, (d_e(n_e^{ q_1}(S),n_e^{q_2}(S)),\,  q_2), \notag\\
&{}\ldots, \notag\\
&{}\left(d_e\left(n_e^{< q_k}(S), 1 \right),\,  q_k\right), \ldots, 
\left( d_e\left(n_e^{< q_k}(S), n_e^{ q_k}(S)\right),\,  q_k\right), \notag\\
&{}\ldots, \notag\\
&{}\left(d_e\left(n_e^{< q_{k^*}}(S), 1 \right),\,  q_{k^*}\right), \ldots, 
\left( d_e\left(n_e^{<{ q_{k^*}}}(S), n_e^{ q_{k^*}}(S)\right),\,  q_{k^*}\right). 
\label{EQpotentialvectors}
\end{align}
}%
\conf{
\begin{align}
&{}(d_e(0,1),\,  q_1), \ldots, (d_e(0,n_e^{ q_1}(S)),\,  q_1), 
(d_e(n_e^{ q_1}(S),1),\,  q_2), \ldots, (d_e(n_e^{ q_1}(S),n_e^{q_2}(S)),\,  q_2), \notag\\
&{}\ldots, 
(d_e(n_e^{< q_k}(S), 1 ),\,  q_k), \ldots, 
( d_e(n_e^{< q_k}(S), n_e^{ q_k}(S)),\,  q_k), \notag\\
&{}\ldots, 
(d_e(n_e^{< q_{k^*}}(S), 1 ),\,  q_{k^*}), \ldots, 
( d_e(n_e^{<{ q_{k^*}}}(S), n_e^{ q_{k^*}}(S)),\,  q_{k^*}). 
\label{EQpotentialvectors}
\end{align}
}

For two vectors $(x,y),(x',y')\in \RR_+ \times \ZZ_{++}$, 
we define a lexicographic order $(x,y) \preceqlex (x',y')$ if 
\begin{align*}
&{}x < x', \quad \mbox{or} \quad
x=x'  \mbox{ and }  y\leq y'. 
\end{align*}
The strict relation 
$(x,y) \preclex (x',y')$ means that 
$(x,y) \preceqlex (x',y')$ and $(x,y) \neq (x',y')$ hold. 

The potential $\Phi(S)$ is obtained by ordering the $n$ vectors 
contributed by all resources 
in the lexicographically nondecreasing order. 
We remark that 
the order in \eqref{EQpotentialvectors} 
is lexicographically nondecreasing, 
which can be derived from 
\eqref{EQmonotonex}--\eqref{EQplusone} as follows. 
It follows from \eqref{EQmonotoney} that
\begin{align}
d_e(
n_e^{< q_k}(S), y
) 
\le 
d_e(
n_e^{< q_k}(S), y+1
) 
\quad 
\mbox{($k=1,\ldots, k^*$, $y=1,\ldots,  q_{k-1}$)}, 
\end{align}
and 
from \eqref{EQmonotonex} and \eqref{EQplusone} that 
\begin{align}
\label{EQkplus1}
\full{
d_e\left(
n_e^{< q_k}(S), n_e^{ q_k}(S)
\right) 
\le 
d_e\left(
n_e^{< q_{k+1}}(S) -1,1
\right) 
\le 
d_e\left(
n_e^{< q_{k+1}}(S),1\right) \quad \mbox{($k=1,\ldots, k^*-1$)}. 
}
\conf{d_e(
n_e^{< q_k}(S), n_e^{ q_k}(S)
) 
\le 
d_e(
n_e^{< q_{k+1}}(S) -1,1
) 
\le 
d_e(
n_e^{< q_{k+1}}(S),1) \ \mbox{($k=1,\ldots, k^*-1$)}. 
}
\end{align}

We then define a lexicographic order over the potentials.  
For strategy profiles $S$ and $S'$, 
where 
\begin{align*}
    \Phi(S) = ((x_1,y_1),\ldots, (x_n,y_n)), \quad 
    \Phi(S') = ((x'_1,y'_1),\ldots, (x'_n,y'_n)), 
\end{align*}
define $\Phi(S') \preceqlex \Phi(S)$ if 
there exists an integer $\ell$ with $1\le \ell \le n$ such that 
\full{\begin{align*}
    &{}\mbox{$(x'_{\ell'},y'_{\ell'}) = (x_{\ell'},y_{\ell'})$ for each $\ell' < \ell$, and  
    $(x'_{\ell},y'_{\ell}) \preclex (x_{\ell},y_{\ell})$}.
\end{align*}
}%
\conf{$(x'_{\ell'},y'_{\ell'}) = (x_{\ell'},y_{\ell'})$ for each $\ell' < \ell$ and  
    $(x'_{\ell},y'_{\ell}) \preclex (x_{\ell},y_{\ell})$. }%
The strict relation 
$\Phi(S') \preclex \Phi(S)$ means that 
$\Phi(S') \preceqlex \Phi(S)$ and $\Phi(S') \neq \Phi(S)$ hold. 

\full{
Suppose that a player $i$ has a better response 
in a strategy profile $S$, 
which changes her strategy from $e$ to $e'$. 
Let $S'=(S_{-i}, e')$. 
Below we show that 
$\Phi(S') \preclex \Phi(S)$, 
which completes the proof. 
}%
\conf{Suppose that $i\in N$ has a better response 
in a strategy profile $S$, 
changing her strategy from $e$ to $e'$. 
Let $S'=(S_{-i}, e')$. 
Below we show 
$\Phi(S') \preclex \Phi(S)$, 
completing the proof. 
}

Let $p_e(i) =  q$ 
and $p_{e'}(i) =  q'$. 
Since the delay imposed on $i$ becomes smaller 
due to the better response, 
it holds that 
\begin{align}
\label{EQbr}
d_{e'}(
    n_{e'}^{< q'}(S), n_{e'}^{ q'}(S)+1 
)
< 
d_e(
n_e^{< q}(S), n_e^{ q}(S)
). 
\end{align}
Note that $e'$ contributes a vector 
\begin{align}
\label{EQnewvector}
( 
d_{e'}(
    n_{e'}^{< q'}(S), n_{e'}^{ q'}(S)+1 
),\, 
 q' ) 
\end{align}
to $\Phi(S')$ but not to $\Phi(S)$. 
To prove $\Phi(S') \preclex \Phi(S)$, 
it suffices to 
show that 
a vector belonging to $\Phi(S)$ but not to $\Phi(S')$ is 
lexcographically larger than the vector \eqref{EQnewvector}.

First, 
consider the vectors in $\Phi(S)$ contributed by $e$. 
Let $Q_e(S)=\{q_1,\ldots, q_{k^*}\}$, 
where $q_1 < \cdots < q_{k^*}$. 
The better response of $i$ changes 
the vectors in $\Phi(S)$ whose second component is larger than $ q$, 
because the first argument of the delay function $d_e$ decreases by one. 
If $q=q_{k^*}$, 
then those vectors do not exist and thus we are done. 
Suppose that $q=q_k$ for some $k<k^*$. 
Among those vectors, 
the lexicographically smallest one is 
\full{
$$
\left( 
    d_e\left(
        n_e^{<  q_{k+1}}(S), 1
    \right),\, 
     q_{k+1}
\right). 
$$}%
\conf{$
( 
    d_e(
        n_e^{<  q_{k+1}}(S), 1
    ),\, 
     q_{k+1}
)
$.}
Recall \eqref{EQkplus1}, 
saying that 
\full{$$
d_e\left(
n_e^{< q_k}(S), n_e^{ q_k}(S)
\right) 
\le 
d_e\left(
n_e^{< q_{k+1}}(S),1
\right) ,
$$}%
\conf{$
d_e(
n_e^{< q_k}(S), n_e^{ q_k}(S)
) 
\le 
d_e(
n_e^{< q_{k+1}}(S),1
)$, }%
and thus 
\full{$$
d_{e'}\left(
    n_{e'}^{< q'}(S), n_{e'}^{ q'}(S)+1 
\right)
< 
d_e\left(
n_e^{< q_{k+1}}(S),1
\right) 
$$}%
\conf{$
d_{e'}(
    n_{e'}^{< q'}(S), n_{e'}^{ q'}(S)+1 
)
< 
d_e(
n_e^{< q_{k+1}}(S),1
) 
$ }%
follows from \eqref{EQbr}. 
Hence, 
we conclude that 
\full{$$
\left( 
d_{e'}\left(
    n_{e'}^{< q'}(S), n_{e'}^{ q'}(S)+1 
\right),\,  q' \right) 
\preclex
\left( 
    d_e\left(
            n_e^{< q_{k+1}}(S),1
    \right),\, 
     q_{k+1}
\right). 
$$
}%
\conf{$
( 
d_{e'}(
    n_{e'}^{< q'}(S), n_{e'}^{ q'}(S)+1 
),\,  q' ) 
\preclex
( 
    d_e(
            n_e^{< q_{k+1}}(S),1
    ),\, 
     q_{k+1}
)$. 
}%

Next, 
consider the vectors in $\Phi(S)$ contributed by $e'$. 
Without loss of generality, 
suppose that 
there exists a positive integer $q''$ 
such that 
$q'' \in Q_{e'}(S)$
and $q'' > q'$. 
Let $q''$ be the smallest integer satisfying these conditions. 
The lexicographically smallest vector in $\Phi(S)$ contributed by $e'$ and changed by the better response of $i$ is 
\full{$$
\left(
    d_{e'}\left(
        n_{e'}^{<  q''}(S) , 1 
    \right),\,
     q''
\right). 
$$}%
\conf{$
(
    d_{e'}(
        n_{e'}^{<  q''}(S) , 1 
    ),\,
     q''
)$. }%
It follows from the property \eqref{EQplusone} that 
\full{$$
d_{e'}\left( 
    n_{e'}^{< q'}(S), n_{e'}^{ q'}(S)+1
\right)
\le
    d_{e'}\left(
        n_{e'}^{<  q''}(S) , 1 
    \right),
$$}%
\conf{$
d_{e'}( 
    n_{e'}^{< q'}(S), n_{e'}^{ q'}(S)+1
)
\le
    d_{e'}(
        n_{e'}^{<  q''}(S) , 1 
    )$, }%
and thus 
\full{$$
\left( 
d_{e'}\left(
    n_{e'}^{< q'}(S), n_{e'}^{ q'}(S)+1 
\right),\,  q' \right) 
\preclex 
\left(
    d_{e'}\left(
        n_{e'}^{<  q''}(S) , 1 
    \right),\,
     q''
\right), 
$$}%
\conf{$
( 
d_{e'}(
    n_{e'}^{< q'}(S), n_{e'}^{ q'}(S)+1 
),\,  q' ) 
\preclex 
(
    d_{e'}(
        n_{e'}^{<  q''}(S) , 1 
    ),\,
     q''
)$, }%
completing the proof. 
\end{proof}

\full{
We next show the following theorem, 
which corresponds to Theorem \ref{THMpssingletonIP} 
but 
does not include a polynomial bound on the number of strategy changes. 
}
\conf{The next theorem  corresponds to Theorem \ref{THMpssingletonIP}, 
while it does not include a polynomial bound on the number of strategy changes. 
}

\begin{theorem}[$\star$]
\label{THMpspbsingletonIP}
A priority-based player-specific singleton congestion game 
with inconsistent priorities 
possesses a pure Nash equilibrium, 
which can be computed with a finite number of strategy changes. 
\end{theorem}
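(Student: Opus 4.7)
The plan is to prove existence by constructing a PNE via an iterative best-response dynamic, extending the lexicographic-potential argument of Theorem \ref{THMpbsingletonIP} to the player-specific setting. Because, once delay functions are player-specific, an arbitrary better response may fail to reduce the natural potential, the argument must pin down a specific order in which moves are performed.

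Concretely, I would define, for each strategy profile $S$, a potential $\Phi(S) \in (\RR_+ \times \ZZ_{++})^n$ obtained by listing, for each player $i$ with $S_i = e$, the pair $(d_{i,e}(n_{e}^{<p_{e}(i)}(S), n_{e}^{p_{e}(i)}(S)), p_{e}(i))$ and sorting the $n$ pairs lexicographically in nonincreasing order. Unlike the potential in the proof of Theorem \ref{THMpbsingletonIP}, which contributes $n_e(S)$ vectors per resource derived from a common $d_e$, this potential contributes exactly one vector per player; this is forced by the player-specific nature of $d_{i,e}$, since we can no longer replay the staged construction of \eqref{EQpotentialvectors} uniformly at a resource. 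The algorithm then selects, at each step, a player $i$ with a better response whose entry in $\Phi(S)$ is lex-maximum among all such players, and lets $i$ best-respond, with ties broken by priority on the destination resource and then by player index. Since $\Phi$ takes values in a finite well-ordered set, termination occurs in finitely many strategy changes once a strict-decrease claim is established, and the terminal profile is a PNE by definition.

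The main obstacle is the strict-decrease claim. The moving player $i$'s own vector drops strictly by the better-response condition, and vectors of other players at her old resource only weakly decrease by \eqref{EQmonotonex} and \eqref{EQmonotoney}. The delicate case is players $j$ at the destination $e'$ with $p_{e'}(j) \ge p_{e'}(i)$, whose vectors weakly rise: with player-specific delays, such a rise might in principle produce a new lex-max vector exceeding the one we removed. Handling this will rely on the selection rule, so that every non-mover's prior vector was dominated by $i$'s prior vector, together with property \eqref{EQplusone}, which bounds $j$'s new delay (now in an enlarged equal-priority class) by $j$'s delay in a configuration where the additional member is strictly higher priority instead; that latter quantity can in turn be controlled by the fact that $j$, as a non-mover, had no profitable deviation available. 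Should this chain of inequalities fail in a corner case, a fallback is to refine the choice rule so that the pair $(i, e')$ is selected jointly to lex-minimize the resulting $\Phi$, which is always feasible because both the player set and each strategy space are finite.
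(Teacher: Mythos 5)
Your approach diverges fundamentally from the paper's, and it contains a gap that I do not think can be repaired along the lines you sketch. The crux is the strict-decrease claim for your player-wise sorted potential. When $i$ moves from $e$ to $e'$, a player $j\in N_{e'}(S)$ with $p_{e'}(j)\ge p_{e'}(i)$ sees her cost rise, and because the delays are player-specific the new value $d_{j,e'}(\cdot,\cdot)$ need bear no relation whatsoever to $i$'s old cost $d_{i,e}(\cdot,\cdot)$: the functions $d_{j,e'}$ and $d_{i,e}$ are unrelated. Your selection rule only guarantees that $i$'s vector is lex-maximum among players \emph{having a better response}; a player $j$ at $e'$ with no better response may already carry a vector lex-larger than $i$'s, and when her vector rises further the sorted tuple increases lexicographically at a position ahead of $i$'s, so $\Phi$ goes up. Property \eqref{EQplusone} only compares values of the same function $d_{j,e'}$ at shifted arguments, and ``$j$ has no profitable deviation'' bounds $j$'s cost from above only by her costs at \emph{other} resources, not by $i$'s old cost; neither yields the inequality you need. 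The fallback of picking the move that lex-minimizes the resulting $\Phi$ does not rescue termination: minimizing over a set of moves that all increase $\Phi$ still increases $\Phi$. This is not a corner case --- player-specific singleton games are in general not potential games, which is exactly why Theorem \ref{THMpssingleton} asserts only the existence of \emph{some} convergent sequence rather than convergence of all ordered dynamics.

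The paper's proof avoids the difficulty by not working with full strategy profiles at all. It runs an insertion algorithm on partial \emph{states}: players are added one at a time, each entering a resource of minimum cost given the current state; if her arrival gives an incumbent at that resource a better response, incumbents are \emph{evicted} (one incumbent of equal priority in one case, or all affected incumbents of strictly lower priority in the other) and returned to the pool of unassigned players. The invariant maintained is that every currently assigned player has no better response. Termination follows from a two-part potential on states: the collection of per-resource congestion profiles $\phi_e(q)=n_e^q(S_{k'})$, sorted lexicographically, which never decreases and strictly increases except in the equal-priority eviction case, paired with the sum of tolerances $\sum_{i}\tol(i,S_{k'})$, which strictly increases in that remaining case. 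If you want to salvage your plan, you should abandon better-response dynamics on complete profiles and adopt this deferred-acceptance-style framework with eviction.
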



\section{Generalized Correlated Two-Sided Markets with Ties}
\label{SECg-corr}

In this section, 
we introduce the model of \emph{generalized correlated two-sided markets with ties}, 
which 
generalizes correlated two-sided markets with ties described in Section \ref{SECcorrelated}. 
We show that this model is a special class of priority-based player-specific congestion games with inconsistent priorities, 
and 
it includes priority-based congestion games with inconsistent priorities. 
This is in contrast to the situation of correlated two-sided markets with ties, 
in which it is unclear whether correlated two-sided markets with ties include 
congestion games with inconsistent priorities.  
We then prove that a generalized correlated two-sided market with ties is a potential game, 
which extends Theorem \ref{THMcorrelated}. 

\full{
\subsection{Model}
}

\full{
A generalized correlated two-sided market with ties is described by a tuple 
$$
(N,E, (\mathcal{S}_i)_{i\in N},(c_{i,e})_{i\in N, e\in E}, (d_e)_{e\in E}  ).
$$
Again, 
$N$ and $E$ denote the sets of the players and resources, respectively. 
For each player $i\in N$ and each resource $e\in E$, 
a nonnegative real number $c_{i,e}\in \RR_+$ is associated, 
which implies the preferences of $i$ and $e$, 
and 
are reflected in the delay function $d_e$ of $e$ in the following way. 
}

\conf{A generalized correlated two-sided market with ties is described as $
(N,E, (\mathcal{S}_i)_{i\in N},(c_{i,e})_{i\in N, e\in E}, (d_e)_{e\in E}  )$.
For each pair $(i,e) \in N\times E$, 
a cost $c_{i,e}\in \RR_+$ is associated, 
which implies the preferences of $i$ and $e$, 
and 
are reflected in the delay function $d_e$ of $e$ in the following way. 
}

Let $S =(S_1,\ldots, S_n)$ be a strategy profile 
and $e\in E$ be a resource. 
In the same way as \eqref{EQeq} and \eqref{EQl}, 
for a nonnegative number $ q \in \RR_+$, 
define 
\full{
$n_e^ q(S), n_e^{< q}(S) \in \ZZ_+$ by 
\begin{align*}
    n_e^ q(S)    \   {}&{}=\ \left|\{i \in N_e(S) \colon c_{i,e} =  q\}\right|, \\
    n_e^{< q}(S)  \   {}&{}=\ \left|\{i \in N_e(S) \colon c_{i,e} <  q \} \right|.
\end{align*}
}%
\conf{$n_e^ q(S) =|\{i \in N_e(S) \colon c_{i,e} =  q\}|$ and  
    $n_e^{< q}(S) = |\{i \in N_e(S) \colon c_{i,e} <  q \} |$. }%
Note that $n_e^{c_{i,e}}(S) >0$ 
if $e\in S_i$. 
The delay function $d_e$ is a trivariate function $d_e\colon \RR_+ \times \ZZ_+ \times \ZZ_{++} \to \RR_+$. 
The cost imposed by $e$ on a player $i\in N_e(S)$ is 
\full{$$
d_e\left(c_{i,e},\, n_e^{<c_{i,e}}(S),\, n_e^{c_{i,e}}(S)\right). 
$$}%
\conf{$
d_e(c_{i,e},\, n_e^{<c_{i,e}}(S),\, n_e^{c_{i,e}}(S))$. }%

Here, 
the delay functions $d_e$ ($e \in E$) have the following properties: 
\begin{align}
\label{EQmonotonec3}
&d_e(c,x,y) \le d_e(c',x,y)& \quad  &(\mbox{if $c<c'$}), \\
\label{EQmonotonex3}
&d_e(c,x,y) \le d_e(c,x',y)& \quad  &(\mbox{if $x<x'$}), \\
\label{EQmonotoney3}
&d_e(c,x,y) \le d_e(c,x,y')& \quad  &(\mbox{if $y<y'$}), \\
\label{EQplusone3}
&d_e(c,x,y) \le d(c,x+y-1,1)& \quad &(\mbox{for each $x\in \ZZ_+$ and $y\in \ZZ_{++}$}).
\end{align}
The properties \eqref{EQmonotonec3}--\eqref{EQmonotoney3} represent the monotonicity of $d_e$, 
while \eqref{EQmonotonex3}--\eqref{EQplusone3} corresponds to 
the properties \eqref{EQmonotonex}--\eqref{EQplusone} of the delay functions in priority-based congestion games. 
We also remark that $d_e(c,0,1)$ is not necessarily equal to $c$, 
whereas $d_e(c,1)=c$ in correlated two-sided markets with ties. 

\full{
\subsection{Relation to Other Models}
}
\full{
A correlated two-sided market with ties $(N,E,(\mathcal{S}_i), (c_{i,e}), (d_e))$ is represented as a 
generalized correlated two-sided market with ties $(N,E,(\mathcal{S}_i), (c_{i,e}), (d_e'))$ 
by defining the trivariate function $d_e'\colon \RR_+\times\ZZ_{+}\times \ZZ_{++}\to \RR_+$ by 
\begin{align*}
    d_e'(c,x,y) = 
    \begin{cases}
        d_e(c,y)    & \mbox{if $x=0$}, \\
        +\infty     & \mbox{if $x\ge 1$} 
    \end{cases}
\end{align*}
for each resource $e\in E$. 
}

\conf{A correlated two-sided market $(N,E,(\mathcal{S}_i), (c_{i,e}), (d_e))$ with ties reduces to a 
generalized correlated two-sided market $(N,E,(\mathcal{S}_i), (c_{i,e}), (d_e'))$ with ties 
in which $d_e'$ is defined by 
\begin{align*}
    d_e'(c,x,y) = 
    \begin{cases}
        d_e(c,y)    & \mbox{if $x=0$}, \\
        +\infty     & \mbox{if $x\ge 1$} 
    \end{cases} 
    \quad \mbox{for each resource $e\in E$.}
\end{align*}
}

The following propositions show that 
generalized correlated two-sided markets with ties 
lie between 
priority-based congestion games with inconsistent priorities 
and 
priority-based player-specific congestion games with inconsistent priorities.

\begin{proposition}[$\star$]
\label{PROPnonps}
A priority-based congestion games with inconsistent priorities 
is represented as a generalized correlated two-sided market with ties. 
\end{proposition}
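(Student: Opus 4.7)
The plan is to encode the priority function $p_e$ of each resource $e$ directly into the cost $c_{i,e}$ of the generalized two-sided market, and then define the trivariate delay function so that it ignores its first argument and simply reproduces the bivariate delay function of the given priority-based congestion game. Concretely, given a priority-based congestion game $(N, E, (\mathcal{S}_i), (p_e), (d_e))$ with inconsistent priorities, I would construct the generalized correlated two-sided market $(N, E, (\mathcal{S}_i), (c_{i,e}), (d_e'))$ by setting $c_{i,e} := p_e(i) \in \ZZ_{++} \subseteq \RR_+$ for every $i \in N$ and $e \in E$, and defining $d_e' \colon \RR_+ \times \ZZ_+ \times \ZZ_{++} \to \RR_+$ by $d_e'(c, x, y) := d_e(x, y)$.

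The first step is to verify that $d_e'$ satisfies the required properties \eqref{EQmonotonec3}--\eqref{EQplusone3}. Property \eqref{EQmonotonec3} holds with equality, since $d_e'$ is constant in its first argument. The remaining properties \eqref{EQmonotonex3}, \eqref{EQmonotoney3}, and \eqref{EQplusone3} match exactly the properties \eqref{EQmonotonex}, \eqref{EQmonotoney}, and \eqref{EQplusone} assumed on $d_e$, and hence follow immediately from the definition of $d_e'$.

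The second step is to verify that the two games are strategically equivalent by showing that, in every strategy profile $S$, the cost imposed on each player $i$ by each resource $e \in S_i$ agrees between the two games. Since $c_{j,e} = p_e(j)$ for every $j \in N$, we have $\{ j \in N_e(S) \colon c_{j,e} = c_{i,e}\} = \{ j \in N_e(S) \colon p_e(j) = p_e(i)\}$, and analogously for the strict inequality, giving $n_e^{c_{i,e}}(S) = n_e^{p_e(i)}(S)$ and $n_e^{<c_{i,e}}(S) = n_e^{<p_e(i)}(S)$. Consequently, the delay in the market, namely $d_e'(c_{i,e}, n_e^{<c_{i,e}}(S), n_e^{c_{i,e}}(S))$, equals $d_e(n_e^{<p_e(i)}(S), n_e^{p_e(i)}(S))$, which is the delay in the original priority-based congestion game.

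Because the construction is a direct embedding, no step is genuinely difficult; the only point requiring mild attention is that the trivariate axioms \eqref{EQmonotonec3}--\eqref{EQplusone3} have been set up precisely so that they are implied by the bivariate axioms \eqref{EQmonotonex}--\eqref{EQplusone} under this embedding, which by design they are.
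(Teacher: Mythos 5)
Your construction is exactly the one in the paper: set $c_{i,e}=p_e(i)$ and $d_e'(c,x,y)=d_e(x,y)$, then check that \eqref{EQmonotonec3}--\eqref{EQplusone3} follow from \eqref{EQmonotonex}--\eqref{EQplusone}. Your additional explicit verification that the induced costs coincide in every strategy profile is a detail the paper leaves implicit, but the argument is the same.
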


\begin{proposition}[$\star$]
\label{PROPps}
A generalized correlated two-sided market with ties is represented as a 
priority-based player-specific congestion games with inconsistent priorities. 
\end{proposition}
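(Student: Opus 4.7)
The plan is to mimic the reduction from correlated two-sided markets with ties to player-specific congestion games with inconsistent priorities (the one built around \eqref{EQpriority}), but now carrying a third argument so that the richer trivariate delay function is honored.

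Given an instance $(N,E,(\mathcal{S}_i),(c_{i,e}),(d_e))$ of a generalized correlated two-sided market with ties, I would first construct, for each resource $e\in E$, a priority function $p_e\colon N\to\ZZ_{++}$ satisfying
\begin{align*}
p_e(i) < p_e(j) \quad\Longleftrightarrow\quad c_{i,e} < c_{j,e} \qquad (i,j\in N),
\end{align*}
exactly as in \eqref{EQpriority}. Such a $p_e$ exists (e.g.\ rank the distinct values of $c_{\cdot,e}$ from smallest to largest and assign the same rank to tied players). The key property is that this map induces the same equivalence classes on $N$ as the costs do, so in particular $p_e(i)=p_e(j)$ iff $c_{i,e}=c_{j,e}$.

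Next I would define, for each pair $(i,e)\in N\times E$, a player-specific bivariate delay function $d'_{i,e}\colon \ZZ_+\times\ZZ_{++}\to\RR_+$ by
\begin{align*}
d'_{i,e}(x,y) \;=\; d_e(c_{i,e},\,x,\,y).
\end{align*}
The tuple $(N,E,(\mathcal{S}_i),(p_e),(d'_{i,e}))$ is then the candidate priority-based player-specific congestion game with inconsistent priorities. The equivalence of the two instances reduces to checking two things for every strategy profile $S$ and every player $i\in N_e(S)$: (a) the counts used by the two models agree, and (b) the derived delay functions $d'_{i,e}$ inherit properties \eqref{EQmonotonex}--\eqref{EQplusone}.

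For (a), because $p_e$ respects both the strict order and the ties of the costs $c_{\cdot,e}$, we have $n_e^{<p_e(i)}(S)=n_e^{<c_{i,e}}(S)$ and $n_e^{p_e(i)}(S)=n_e^{c_{i,e}}(S)$, so the cost imposed on $i$ by $e$ in the constructed game equals $d'_{i,e}(n_e^{<p_e(i)}(S),n_e^{p_e(i)}(S)) = d_e(c_{i,e},n_e^{<c_{i,e}}(S),n_e^{c_{i,e}}(S))$, which is exactly the cost prescribed by the generalized correlated market. For (b), properties \eqref{EQmonotonex}, \eqref{EQmonotoney} and \eqref{EQplusone} for $d'_{i,e}$ follow by specializing \eqref{EQmonotonex3}, \eqref{EQmonotoney3} and \eqref{EQplusone3} of $d_e$ to the fixed first coordinate $c=c_{i,e}$ (property \eqref{EQmonotonec3} is not needed here, as $c_{i,e}$ is constant once $i$ and $e$ are fixed). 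There is no real obstacle to the argument; the only point to be careful about is handing ties among the $c_{i,e}$ correctly when defining $p_e$, which is precisely why the biconditional in \eqref{EQpriority} is used rather than a one-sided inequality.
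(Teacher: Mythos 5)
Your proof is correct and follows essentially the same reduction as the paper: construct $p_e$ via \eqref{EQpriority} and absorb the first argument of the trivariate delay into a player-specific bivariate delay function, then check \eqref{EQmonotonex}--\eqref{EQplusone} by specializing \eqref{EQmonotonec3}--\eqref{EQplusone3}. The only difference is that the paper fixes the first argument to $p_e(i)$ rather than $c_{i,e}$; your choice of $c_{i,e}$ reproduces the original delays exactly and is, if anything, the more careful variant.
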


\full{
\subsection{Pure Nash Equilibria and Potential}
}

From Proposition \ref{PROPps} and Theorem \ref{THMpspbsingletonIP}, 
it follows that 
a generalized correlated singleton two-sided market has a pure Nash equilibrium 
and it can be computed with a finite number of strategy changes. 
What is more, 
the proof for Theorem \ref{THMpbsingletonIP} applies to 
a generalized correlated singleton two-sided market, 
and hence it is indeed a potential game. 

\begin{theorem}[$\star$]
\label{THMtwosidedsingleton}
    A generalized correlated two-sided singleton market 
    $(N,E, (\mathcal{S}_i), (c_{i,e}), (d_e))$
    with ties 
    is a potential game, and hence possesses a pure Nash equilibrium. 
\end{theorem}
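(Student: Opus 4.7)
The plan is to adapt the potential function construction from the proof of Theorem \ref{THMpbsingletonIP} to the generalized two-sided setting, with the cost $c_{i,e}$ playing the role formerly played by the priority $p_e(i)$. For each strategy profile $S$ I define the potential $\Phi(S)$ as a sorted multiset of $n$ vectors in $\RR_+\times \RR_+$. For each resource $e\in E$, let $C_e(S) = \{c_1,\ldots, c_{k^*}\}$ with $c_1 < \cdots < c_{k^*}$ denote the distinct cost values $c_{i,e}$ appearing among the players $i\in N_e(S)$. Then $e$ contributes, for each $k = 1, \ldots, k^*$ and each $j = 1, \ldots, n_e^{c_k}(S)$, the vector $(d_e(c_k, n_e^{<c_k}(S), j), c_k)$. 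The potential $\Phi(S)$ is obtained by collecting all such vectors over $e\in E$ and sorting them in the lexicographic order $\preceqlex$, exactly as in Theorem \ref{THMpbsingletonIP}. The order $\preceqlex$ on $\RR_+\times\RR_+$ is the obvious extension of the one used there.

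The first step is to check that, within each resource $e$, the list above is already lex-nondecreasing. Within a fixed cost class $c_k$ this is immediate from \eqref{EQmonotoney3}. For the transition from $c_k$ to $c_{k+1}$, combining \eqref{EQplusone3}, \eqref{EQmonotonex3}, and \eqref{EQmonotonec3} gives
\begin{align*}
    d_e(c_k, n_e^{<c_k}(S), n_e^{c_k}(S)) \le d_e(c_k, n_e^{<c_{k+1}}(S) - 1, 1) \le d_e(c_k, n_e^{<c_{k+1}}(S), 1) \le d_e(c_{k+1}, n_e^{<c_{k+1}}(S), 1),
\end{align*}
which, together with $c_k < c_{k+1}$, yields the desired lex inequality between the last vector of the $c_k$-block and the first of the $c_{k+1}$-block.

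The main step is to verify that any better response strictly decreases $\Phi$. Suppose player $i$ switches from $e$ to $e'$ in $S$; write $c = c_{i,e}$, $c' = c_{i,e'}$, and $S' = (S_{-i}, e')$. The better-response condition is $d_{e'}(c', n_{e'}^{<c'}(S), n_{e'}^{c'}(S)+1) < d_e(c, n_e^{<c}(S), n_e^{c}(S))$. Following the case analysis in the proof of Theorem \ref{THMpbsingletonIP} verbatim, I compare the new vector $(d_{e'}(c', n_{e'}^{<c'}(S), n_{e'}^{c'}(S)+1), c')$ appearing in $\Phi(S')$ against the smallest vectors that disappear from $\Phi(S)$: either one contributed by $e$ at the next higher cost level $c_{k+1} > c$, or one contributed by $e'$ at the next higher cost level $c'' > c'$. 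Both subcases reduce---via the same chaining of \eqref{EQplusone3}, \eqref{EQmonotonex3}, and \eqref{EQmonotonec3} used above---to the better-response strict inequality, yielding $\Phi(S') \preclex \Phi(S)$. I do not foresee a real obstacle: the argument transcribes the proof of Theorem \ref{THMpbsingletonIP} almost verbatim, with the cost-monotonicity axiom \eqref{EQmonotonec3} supplying the single extra ingredient needed to bridge adjacent cost classes within a resource. Finiteness of the strategy-profile set then gives a pure Nash equilibrium.
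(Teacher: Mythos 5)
Your proposal is correct and matches the paper's own proof essentially verbatim: the same per-resource blocks of vectors $(d_e(c_k, n_e^{<c_k}(S), j), c_k)$ sorted lexicographically, the same chain $d_e(c,x,y) \le d_e(c,x+y-1,1) \le d_e(c',x+y,1)$ to order adjacent cost classes, and the same two-case comparison of the new vector against the smallest disappearing vectors from $e$ and $e'$. No substantive difference to report.
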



\section{Extension Beyond Singleton Games}
\label{SECbeyond}

\full{
In this section, 
we discuss extensions of the above results on priority-based singleton congestion games 
into larger classes with respect to the strategy spaces. 
We present extensions of Theorems \ref{THMpbsingletonCP}, \ref{THMpbsingletonIP}, \ref{THMpspbsingletonIP}, 
and \ref{THMtwosidedsingleton} into matroid games, 
followed by an investigation of priority-based congestion games with consistent priorities 
without any assumption on the strategy spaces of the players. 
}

\full{
\subsection{Matroid Games}
}

\conf{We first discuss matroid games. }%
The following is a fundamental property of matroids, 
which 
is essential to the extension of 
our arguments for singleton games to matroid games. 

\begin{lemma}[see, e.g.,\ \cite{Mur03,Sch03}]
\label{LEMmatroid}
    Let $(E,\mathcal{S})$ be a matroid, 
    $S\in \mathcal{S}$ be a base, 
    and 
    $w_e \in \RR$ be a weight for each $e\in E$. 
    If there exists a base $S'\in \mathcal{S}$ 
    such that $\sum_{e\in S'}w_e < \sum_{e\in S}w_e$, 
    then there exists an element $e\in S$ and $e'\in E\setminus S$ 
    such that 
    $(S \setminus \{e\})\cup \{e'\} \in \mathcal{S}$ 
    and 
    $w_{e'} < w_e$. 
\end{lemma}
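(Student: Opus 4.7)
My plan is to derive the lemma from the symmetric (bijective) exchange property of matroids, a classical strengthening of the basis exchange axiom \eqref{EQaxiom}. The symmetric exchange property asserts that for any two bases $S, S' \in \mathcal{S}$ there exists a bijection $\phi \colon S \setminus S' \to S' \setminus S$ such that $(S \setminus \{e\}) \cup \{\phi(e)\} \in \mathcal{S}$ for every $e \in S \setminus S'$. This is standard (see, e.g., \cite{Sch03}) and can be established by a Hall-type matching argument on the bipartite exchange graph whose edges are exactly the swap pairs allowed by \eqref{EQaxiom}.

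Granting such a bijection $\phi$, the elements of $S \cap S'$ contribute identically to $\sum_{e \in S} w_e$ and $\sum_{e \in S'} w_e$, so a one-line calculation yields
$$\sum_{e \in S'} w_e \ - \ \sum_{e \in S} w_e \ = \ \sum_{e \in S \setminus S'} \bigl( w_{\phi(e)} - w_e \bigr).$$
The hypothesis makes the left-hand side strictly negative, so at least one summand on the right must be strictly negative. Choosing any $e \in S \setminus S'$ with $w_{\phi(e)} < w_e$ and setting $e' := \phi(e)$ gives $(S \setminus \{e\}) \cup \{e'\} \in \mathcal{S}$ together with $w_{e'} < w_e$, which is exactly the conclusion of the lemma.

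The main obstacle I foresee is the symmetric exchange property itself: axiom \eqref{EQaxiom} as quoted in the paper is the one-sided version and does not immediately supply a bijection between $S \setminus S'$ and $S' \setminus S$. A self-contained derivation therefore requires either invoking a Hall-type argument on the bipartite exchange graph (the modern textbook route) or proceeding by induction on $|S \setminus S'|$ while carefully choosing which element to swap at each step so that the resulting base remains close enough to $S'$ for the induction hypothesis to apply. Since the lemma is already attributed to \cite{Mur03,Sch03}, the cleanest presentation is to cite the symmetric exchange theorem from those references and then carry out the short weight computation displayed above.
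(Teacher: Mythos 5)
The paper offers no proof of this lemma at all---it is quoted as a known fact with a pointer to \cite{Mur03,Sch03}---so there is no in-paper argument to compare against; I can only assess your derivation on its own terms, and it is correct. The weight computation is fine: since $\phi$ is a bijection from $S\setminus S'$ onto $S'\setminus S$ and the elements of $S\cap S'$ cancel, $\sum_{e\in S'}w_e-\sum_{e\in S}w_e=\sum_{e\in S\setminus S'}(w_{\phi(e)}-w_e)$, a strictly negative sum forces a strictly negative term, and the resulting pair $e\in S\setminus S'\subseteq S$, $e'=\phi(e)\in S'\setminus S\subseteq E\setminus S$ satisfies exactly the conclusion. (The degenerate case $S\setminus S'=\emptyset$ cannot occur under the hypothesis, since equal-cardinality bases with $S\subseteq S'$ coincide and then the two weights are equal.) You are right to flag that the entire load is carried by the bijective exchange property, which is genuinely stronger than the one-sided axiom \eqref{EQaxiom} quoted in the paper and requires its own proof (Brualdi's theorem, via a Hall-condition argument on the bipartite exchange graph using fundamental circuits); since that theorem appears in the very references the lemma is attributed to, citing it and then doing the one-line telescoping is a legitimate and clean way to close the argument. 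An alternative, equally standard route that avoids the bijection is to argue directly about a minimum-weight base adjacent to $S$ (or by induction on $|S\setminus S'|$ using \eqref{EQaxiom} alone), but your version is shorter once the exchange theorem is granted.
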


It follows from Lemma \ref{LEMmatroid} that 
we can implement an arbitrary better response of a player in a matroid game 
as a lazy better response. 
On the basis of this fact, 
the proofs for Theorems \ref{THMpbsingletonCP}, \ref{THMpbsingletonIP}, \ref{THMpspbsingletonIP}, 
and \ref{THMtwosidedsingleton} can be adapted to matroid games. 

\begin{theorem}
\label{THMpspbmatroidCP}
In a priority-based player-specific matroid congestion game with consistent priorities, 
there exists a sequences of polynomial number of better responses 
starting from an arbitrary strategy profile and 
reaching a pure Nash equilibrium. 
\end{theorem}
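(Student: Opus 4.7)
The plan is to lift the strategy of Theorem \ref{THMpbsingletonCP} from singleton games to matroid games, using Theorem \ref{THMpsmatroid} as the matroid-level building block. Since the priorities are consistent, write $p = p_e$ for every resource $e \in E$, and partition the players into priority classes $N_1, \ldots, N_K$ with $N_k = \{i \in N \colon p(i) = k\}$ and $K \le n$. Starting from the given initial strategy profile, I would process the classes in increasing order of the index $k$, i.e.,\ the most prioritized players first.

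For the inductive step, suppose that after processing levels $1, \ldots, k-1$ the strategies of all players in $N_1 \cup \cdots \cup N_{k-1}$ have been fixed, yielding constants $n_e^{<k}$ on each resource $e \in E$ that will not change subsequently. By the consistency of priorities, the delay experienced by any $i \in N_k$ on a resource $e$ equals $d_{i,e}(n_e^{<k}, n_e^{k}(S))$, which depends on $N_k$'s choices only through $n_e^k(S)$ and is independent of the strategies of the still-unprocessed classes $N_{k+1}, \ldots, N_K$. Defining $d'_{i,e}(y) := d_{i,e}(n_e^{<k}, y)$ yields a player-specific delay function that is nondecreasing in $y$ by \eqref{EQmonotoney}, turning the subgame on $N_k$ into an ordinary player-specific matroid congestion game. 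Theorem \ref{THMpsmatroid} then supplies a sequence of polynomially many better responses within $N_k$ reaching a pure Nash equilibrium of this subgame, and these are genuine better responses in the original priority-based game because they operate with the same values of $n_e^{<k}$ and $n_e^k$.

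Finally, I would observe that once level $k$ is equilibrated, subsequent processing of $N_{k+1}, \ldots, N_K$ leaves both $n_e^{<k}$ and $n_e^k$ on every resource unchanged, so no $i \in N_k$ ever gains a profitable deviation later. After all $K \le n$ levels have been handled, every player is locally optimal in her matroid and we have a pure Nash equilibrium of the full game; the total number of better responses is a sum of $K$ polynomial terms, hence polynomial. The main obstacle I expect is verifying the clean decoupling between priority levels, namely that the delay functions on $N_k$'s subgame are fully determined by the choices of $N_1 \cup \cdots \cup N_k$ and independent of $N_{k+1} \cup \cdots \cup N_K$. This uses consistency of priorities in an essential way, while property \eqref{EQplusone} plays no role in the argument. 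Once this decoupling is established the reduction to Theorem \ref{THMpsmatroid} is immediate and the polynomial bound follows at once.
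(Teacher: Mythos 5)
Your proposal is correct and follows essentially the same route as the paper: the paper proves Theorem \ref{THMpbsingletonCP} by equilibrating priority classes in increasing order, reducing each class to an ordinary player-specific game via $d'_{i,e}(y) = d_{i,e}(n_e^{<k},y)$, and states that the matroid version is obtained by the same adaptation, with Theorem \ref{THMpsmatroid} replacing Theorem \ref{THMpssingleton} as the building block. Your explicit verification of the decoupling between priority levels, and your remark that property \eqref{EQplusone} is not needed here, match the paper's (largely implicit) argument.
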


\begin{theorem}
\label{THMpbmatroidIP}
A priority-based matroid congestion game with inconsistent priorities 
is a potential game, 
and hence possesses a pure Nash equilibrium. 
\end{theorem}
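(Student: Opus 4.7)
The plan is to combine the lexicographic potential from the proof of Theorem \ref{THMpbsingletonIP} with the matroid exchange property of Lemma \ref{LEMmatroid}, following the template of Theorem \ref{THMmatroidIP} from \cite{AGMRV08}. The natural interpretation, consistent with Theorem \ref{THMmatroidIP}, is to establish that the game is a potential game with respect to lazy better responses, which then suffices to guarantee a pure Nash equilibrium.

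First, I define $\Phi(S)$ by exactly the same formula as in the singleton case: for each resource $e\in E$ with priority classes $Q_e(S)=\{q_1<\cdots <q_{k^*}\}$, $e$ contributes the $n_e(S)$ vectors listed in \eqref{EQpotentialvectors}, and $\Phi(S)$ is the lexicographically nondecreasing rearrangement of all contributions of all resources. The only structural difference from the singleton setting is that the total length of $\Phi(S)$ is now $\sum_{i\in N}|S_i|$ rather than $n$; since matroid bases of $(E,\mathcal{S}_i)$ have a common cardinality, this length is constant across strategy profiles.

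Second, I argue that every lazy better response strictly decreases $\Phi$ in the order $\preclex$. A lazy better response by player $i$ swaps a single resource $e\in S_i$ for $e'\notin S_i$, with strictly decreasing cost for $i$. For any $f\in S_i\cap S_i'$ the occupancy of $f$ is unchanged, and for any $f\notin S_i\cup S_i'$ player $i$ does not touch $f$; hence only the contributions of $e$ and $e'$ to $\Phi$ are affected. The change in the multiset of contributions is therefore precisely the one analyzed in the proof of Theorem \ref{THMpbsingletonIP}, and the same case analysis---using the monotonicity properties \eqref{EQmonotonex}--\eqref{EQplusone} together with the strict-improvement inequality \eqref{EQbr}---shows that the newly inserted vector at $e'$ is lexicographically strictly smaller than every vector that either disappears from $e$ or is shifted upward at $e'$. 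Consequently $\Phi(S') \preclex \Phi(S)$.

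Finally, to conclude existence of a pure Nash equilibrium, I invoke Lemma \ref{LEMmatroid}. Given $S_{-i}$, the cost that player $i$ incurs from any strategy $T\in \mathcal{S}_i$ is $\sum_{e\in T} w_e$ with the fixed element-weight $w_e:=d_e\!\left(n_e^{<p_e(i)}(S_{-i}),\, n_e^{p_e(i)}(S_{-i})+1\right)$ on the matroid $(E,\mathcal{S}_i)$. Thus if some $S_i'\in \mathcal{S}_i$ yields strictly smaller cost than $S_i$, Lemma \ref{LEMmatroid} supplies a one-element swap $(S_i\setminus\{e\})\cup\{e'\}\in \mathcal{S}_i$ that also strictly decreases the cost, i.e., a lazy better response exists. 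Iterating lazy better responses strictly lexicographically decreases $\Phi$; since the range of $\Phi$ is finite, this process terminates, and at termination no lazy better response exists, whence by Lemma \ref{LEMmatroid} no better response at all exists, producing a pure Nash equilibrium. The main obstacle I anticipate is the bookkeeping in the second step: verifying that a single swap by $i$ truly leaves contributions of resources outside $\{e,e'\}$ intact, so that the singleton-case analysis transfers verbatim. Since priorities $p_e$ are intrinsic to resources and only $n_e^{p_e(i)}$ and $n_{e'}^{p_{e'}(i)}$ change, this reduction is routine once stated explicitly.
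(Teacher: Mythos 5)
Your proposal is correct and takes essentially the same route as the paper, which proves this theorem only by remarking that Lemma \ref{LEMmatroid} allows any better response in a matroid game to be implemented as a lazy better response, so that the lexicographic potential argument of Theorem \ref{THMpbsingletonIP} carries over to single-element swaps; your write-up simply makes that adaptation explicit (including the correct observation that a single swap changes only the contributions of $e$ and $e'$, and that the swap weights $w_e$ are fixed given $S_{-i}$ so Lemma \ref{LEMmatroid} applies). Your reading of the conclusion as ``potential game with respect to lazy better responses'' is the intended one, consistent with Theorems \ref{THMmatroidIP} and \ref{THMcorrelatedmatroid}.
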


\begin{theorem}
\label{THMpspbmatroidIP}
A priority-based player-specific matroid congestion game 
with inconsistent priority 
possesses a pure Nash equilibrium, 
which can be computed with a finite number of strategy changes. 
\end{theorem}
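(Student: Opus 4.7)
The plan is to adapt the proof of Theorem \ref{THMpspbsingletonIP} to matroid games by decomposing each strategy change into single-element swaps, justified by Lemma \ref{LEMmatroid}. The key observation is that, once the strategies $S_{-i}$ of the other players are fixed, the delay on player $i$ from including resource $e$ in her strategy, namely $d_{i,e}(n_e^{<p_e(i)}(S_{-i},S_i),\, n_e^{p_e(i)}(S_{-i},S_i))$, depends only on $e$ and $S_{-i}$ and not on the other elements of $S_i$. Hence $i$'s total cost is $\sum_{e\in S_i} w^i_e$ for weights $w^i_e$ determined by $S_{-i}$, and Lemma \ref{LEMmatroid} applies directly: whenever $i$ has a better response, she has a lazy better response that swaps a single element.

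First, I would formalize the above to conclude that any better response of a player $i$ in the matroid game can be simulated by a sequence of at most $|S_i|$ lazy better responses. In particular, a pure Nash equilibrium of the matroid game is characterized as a strategy profile at which no lazy better response is available, so it suffices to exhibit a finite sequence of lazy better responses that terminates at such a profile.

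Next, I would run the algorithm underlying Theorem \ref{THMpspbsingletonIP}, replacing each singleton-style resource change by a lazy better response in the matroid game. The singleton algorithm orchestrates moves along a priority-based invariant; since our weights $w^i_e$ and the priorities are attached to individual elements, the same invariant can be tracked element-wise, and each single-element swap affects it in the same manner as a resource change in the singleton setting. Finiteness of the overall number of strategy changes then follows from the finiteness guaranteed for the singleton algorithm, multiplied by the bounded number of swaps needed to realize each conceptual move.

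The main obstacle is verifying that the progress invariant of the singleton algorithm remains intact throughout the intermediate lazy-better-response steps used to implement a single singleton-style move. This is where the combination of player-specific delays and inconsistent priorities makes the analysis delicate: while executing one player's lazy sequence, other players' costs may fluctuate and one must ensure they do not become eligible for moves that would undo the invariant. I would handle this by carrying each player's lazy sequence to completion before any other player acts, and by verifying that this discipline remains compatible with the player-ordering prescribed by the proof of Theorem \ref{THMpspbsingletonIP}; the monotonicity properties \eqref{EQmonotonex}--\eqref{EQplusone} of $d_{i,e}$ are precisely what is needed to guarantee that single-element swaps interact with the priority-based invariant in the same way as full resource changes in the singleton case.
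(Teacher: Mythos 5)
Your proposal follows essentially the same route as the paper, which likewise proves Theorem \ref{THMpspbmatroidIP} by invoking Lemma \ref{LEMmatroid} to implement any better response as a lazy better response (a sequence of single-element improving swaps) and then rerunning the algorithm of Theorem \ref{THMpspbsingletonIP}. Your explicit justification that player $i$'s cost is separable as $\sum_{e\in S_i}w^i_e$ with weights determined by $S_{-i}$ is exactly the (implicit) reason Lemma \ref{LEMmatroid} applies, so the argument matches the paper's.
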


\begin{theorem}
\label{THMtwosidedmatroid}
    A generalized correlated two-sided matroid market with ties 
    is a potential game, and hence possesses a pure Nash equilibrium. 
\end{theorem}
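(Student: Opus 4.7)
The plan is to extend the lexicographic potential function from the singleton analysis (Theorem~\ref{THMpbsingletonIP}, adapted to the two-sided setting in Theorem~\ref{THMtwosidedsingleton}) to matroid strategy spaces, and to use Lemma~\ref{LEMmatroid} as the bridge that reduces an arbitrary better response to a single-resource swap.

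First, for each strategy profile $S=(S_1,\dots,S_n)$ of a generalized correlated two-sided matroid market with ties, I would define a potential $\Phi(S)$ as an ordered tuple of vectors in some lexicographically comparable set, contributing one vector per pair $(i,e)$ satisfying $e\in S_i$. The vectors for a fixed resource $e$ are constructed in the spirit of~\eqref{EQpotentialvectors}: the players in $N_e(S)$ are grouped by their cost $c_{i,e}$ from smallest to largest, and within each cost value $c$ the entries $d_e(c, n_e^{<c}(S), 1), \dots, d_e(c, n_e^{<c}(S), n_e^{c}(S))$ are listed in this order, each tagged with an auxiliary component such as $c$ so that lexicographic comparison is well-defined. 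The global $\Phi(S)$ is then the merge of the per-resource sequences into one lexicographically nondecreasing sequence; the properties \eqref{EQmonotonec3}--\eqref{EQplusone3} guarantee that the construction is internally consistent, mirroring the chain of inequalities used in the proof of Theorem~\ref{THMpbsingletonIP}.

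Next I would examine the effect of a lazy better response: a player $i$ replaces a single resource $e\in S_i$ by $e'\in E\setminus S_i$, obtaining a new base $S_i'$ with strictly smaller total cost for her. Only the vectors associated with $e$ and $e'$ are affected, and the situation becomes local and essentially identical to the singleton case. Combining the better-response inequality of $i$ with \eqref{EQmonotonex3} and \eqref{EQplusone3}, the same calculation as in the proof of Theorem~\ref{THMpbsingletonIP} shows that the new vector contributed by $e'$ corresponding to $i$'s position after the swap is strictly lexicographically smaller than at least one vector that disappears from $\Phi(S)$ — either among the vectors of $e$ (whose counts shift downward once $i$ leaves) or among those of $e'$ (for players with cost strictly larger than $c_{i,e'}$). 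Hence $\Phi(S') \preclex \Phi(S)$ whenever $S'$ is obtained from $S$ by a lazy better response.

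Finally, I would invoke Lemma~\ref{LEMmatroid} with weights given by the delay that each resource $e\in E$ would impose on $i$ at the fixed profile $S_{-i}$: whenever $i$ has any better response, she also has a lazy better response consisting of a single-element matroid swap. Iterating such swaps strictly lex-decreases $\Phi$ by the previous step, and since the set of strategy profiles is finite, the process must terminate at a pure Nash equilibrium, establishing that the game is a potential game with respect to lazy better responses. The main obstacle is verifying that the local comparison used in the singleton case continues to control the merged global sequence $\Phi(S)$ after a single-resource swap; this is handled by the observation that a swap affects only the per-resource lists at $e$ and $e'$, so the lex-decrease established pointwise there propagates to $\Phi$, exactly as in the transition from Theorem~\ref{THMpbsingletonIP} to Theorem~\ref{THMpbmatroidIP}.
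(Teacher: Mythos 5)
Your proposal is correct and follows essentially the same route as the paper: the paper proves this theorem only by remarking that Lemma~\ref{LEMmatroid} lets any better response be implemented as a sequence of single-element swaps (lazy better responses), each of which affects only the per-resource vector lists at $e$ and $e'$, so the lexicographic potential argument of Theorem~\ref{THMtwosidedsingleton} carries over verbatim, yielding a potential game with respect to lazy better responses. One small caution: to conclude $\Phi(S')\preclex\Phi(S)$ you need the new vector at $e'$ to be lexicographically smaller than \emph{every} vector that disappears from $\Phi(S)$ (as the singleton proof actually establishes), not merely ``at least one'' of them as your phrasing suggests.
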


\full{\subsection{Arbitrary Strategy Spaces}}
\conf{We next deal with arbitrary strategy spaces. }%
For a priority-based congestion game $(N, E, (\mathcal{S}_i)_{i\in N}, p, (d_e)_{e\in E})$  with consistent priorities, 
let $N^ q$ denote the set of the players with priority-function value $ q$, 
namely 
$N^ q=\{i\in N \colon p(i) =  q\}$.

\begin{lemma}[$\star$]
\label{LEMarbit}
Let $G=(N, E, (\mathcal{S}_i)_{i\in N}, p, (d_e)_{e\in E})$ 
be a priority-based congestion game with consistent priorities. 
Let $S=(S_1,\ldots, S_n)$ be its strategy profile and 
let $ q \in \ZZ_+$. 
Fix the strategy of each player $j\in N\setminus N^ q$ to $S_j$, 
and let $G^q$ denote the game restricted to the players in $N^ q$. 
Then, 
the game $G^ q$ is a potential game 
with potential function 
\full{\begin{align*}
    \Phi(S^ q) = \sum_{e\in E} \sum_{k=1}^{n_e(S^ q)} d_e(n_e^{< q}(S),k) \quad (S^q=(S_i)_{i\in N^q}). 
\end{align*}}%
\conf{$\Phi(S^ q) = \sum_{e\in E} \sum_{k=1}^{n_e(S^ q)} d_e(n_e^{< q}(S),k) \ (S^q=(S_i)_{i\in N^q})$.}
\end{lemma}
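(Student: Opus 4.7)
The plan is to observe that restricting attention to the players in $N^q$ collapses $G^q$ to a classical Rosenthal congestion game, after which the claimed $\Phi$ is nothing but Rosenthal's potential for that reduced game.

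First, I would note that every player $j \in N \setminus N^q$ has $p(j) \neq q$ and her strategy is frozen at $S_j$. Consequently, the quantity $n_e^{<q}(S)$, which counts only players of strictly smaller priority value than $q$ and which are therefore all outside $N^q$, is a constant as the profile $S^q = (S_i)_{i \in N^q}$ varies. For each $e \in E$, I would introduce the auxiliary univariate delay function $\tilde d_e \colon \ZZ_{++} \to \RR_+$ by $\tilde d_e(k) = d_e(n_e^{<q}(S), k)$. For any $i \in N^q$ with $e \in S_i$, the delay that $e$ imposes on $i$ equals $d_e(n_e^{<p(i)}(S), n_e^{p(i)}(S)) = d_e(n_e^{<q}(S), n_e^{q}(S)) = \tilde d_e(n_e(S^q))$, the last equality because $n_e^q(S) = n_e(S^q)$ once the players outside $N^q$ are frozen. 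Hence $G^q$ is, as an incentive structure on $N^q$, a classical congestion game with delay functions $\tilde d_e$.

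Second, I would invoke Rosenthal's construction and verify directly that
$$\Phi(S^q) = \sum_{e\in E} \sum_{k=1}^{n_e(S^q)} \tilde d_e(k) = \sum_{e\in E} \sum_{k=1}^{n_e(S^q)} d_e(n_e^{<q}(S), k)$$
is a potential function of this classical game. For an $N^q$-player $i$ changing her strategy from $S_i$ to $S_i'$, resources in $S_i \cap S_i'$ affect neither side of the potential identity; a resource $e \in S_i \setminus S_i'$ subtracts $\tilde d_e(n_e(S^q))$ from both $\Phi$ and from $i$'s cost $\gamma_i$; and a resource $e \in S_i' \setminus S_i$ adds $\tilde d_e(n_e(S^q)+1)$ to both. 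Summing, $\Phi(S^q_{-i}, S_i') - \Phi(S^q) = \gamma_i(S^q_{-i}, S_i') - \gamma_i(S^q)$, which is the required potential identity.

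This is a routine argument once the reduction is in place, so I do not anticipate a real obstacle beyond notational care. The only subtle point is confirming that $n_e^{<q}(S)$ is genuinely frozen under moves in $G^q$, which is immediate because players in $N^q$ have priority exactly $q$ and so contribute to $n_e^{q}$, never to $n_e^{<q}$. Once that observation is in hand, properties \eqref{EQmonotonex}--\eqref{EQplusone} of the delay functions are not needed here, since the claim is purely about exhibiting a potential rather than showing any comparison between delay values.
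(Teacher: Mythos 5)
Your proposal is correct and follows essentially the same route as the paper: the paper's proof is exactly the direct verification that $\Phi(S^q_{-i},S_i')-\Phi(S^q)=\sum_{e\in S_i'\setminus S_i}d_e(n_e^{<q}(S),n_e(S^q)+1)-\sum_{e\in S_i\setminus S_i'}d_e(n_e^{<q}(S),n_e(S^q))=\gamma_i(S^q_{-i},S_i')-\gamma_i(S^q)$, which is the Rosenthal-potential computation you carry out after freezing $n_e^{<q}(S)$. Your additional remarks---that $n_e^{<q}(S)$ is invariant under moves of players in $N^q$ and that properties \eqref{EQmonotonex}--\eqref{EQplusone} are not needed here---are accurate and merely make explicit what the paper leaves implicit.
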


It directly follows from Lemma \ref{LEMarbit} 
that 
a pure Nash equilibrium of a priority-based congestion game $G$ with consistent priorities 
can be constructed 
by combining 
pure Nash equilibria $S^q$ of a game $G^q$ for each priority-function value $q$, 
where $G^q$ is defined by fixing the strategies of the players in $N^{q'}$ 
to form a pure Nash equilibrium of $G^{q'}$ for each $q' < q $.

\begin{theorem}
\label{THMgeneral}
A priority-based congestion game with consistent priorities possesses a pure Nash equilibrium. 
\end{theorem}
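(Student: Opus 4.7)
The plan is to prove the theorem by induction on the priority levels, processing them from most to least prioritized and appealing to Lemma \ref{LEMarbit} at each step. Let $q_1 < q_2 < \cdots < q_K$ enumerate the distinct values taken by the common priority function $p$, so that $N$ is partitioned into $N^{q_1}, N^{q_2}, \ldots, N^{q_K}$. I will construct a strategy profile $S^*=(S_1^*,\ldots,S_n^*)$ one priority class at a time, and then verify that the resulting profile is a pure Nash equilibrium of the original game $G$.

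At step $k$ of the induction, I will assume that strategies $S_j^*$ have already been chosen for all players $j$ with $p(j) \in \{q_1,\ldots,q_{k-1}\}$. I would then fix those strategies and consider the restricted game $G^{q_k}$ on the players in $N^{q_k}$ as defined in Lemma \ref{LEMarbit}. By that lemma, $G^{q_k}$ is a potential game with the stated potential $\Phi$, hence admits a pure Nash equilibrium $S^{q_k}=(S_i^*)_{i\in N^{q_k}}$ (for example, a profile minimizing $\Phi$). Fixing these choices extends the partial profile to priorities up to $q_k$, and the induction proceeds until all priority classes are handled.

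The main step is verifying that the combined profile $S^*$ is indeed a pure Nash equilibrium of $G$ itself, not just of each $G^{q_k}$ in isolation. The key observation is that for any player $i\in N^{q_k}$ and any resource $e$, the delay $d_e(n_e^{<q_k}(S^*), n_e^{q_k}(S^*))$ she receives from $e$ depends only on the strategies of players whose priority value is at most $q_k$, since players with priority value strictly greater than $q_k$ contribute to neither $n_e^{<q_k}$ nor $n_e^{q_k}$. Consequently, any unilateral deviation $S_i' \in \mathcal{S}_i$ available to $i$ in $G$ is also an available deviation in $G^{q_k}$, and produces the same cost for her in both games (because the resources in $S_i'$ see the same counts of $<q_k$-priority players and $q_k$-priority players, given the fixed strategies of the higher-priority classes and of $N^{q_k}\setminus\{i\}$). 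Since $S^{q_k}$ is a Nash equilibrium of $G^{q_k}$, no such deviation can strictly decrease $i$'s cost, so $S^*$ is a pure Nash equilibrium of $G$.

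The main obstacle, which is really the only nontrivial point, is the verification that deviations in $G$ reduce to deviations in the corresponding $G^{q_k}$. This relies entirely on the structural fact that, under consistent priorities, lower-priority players do not enter the delay expression of higher-priority players; once this is made explicit using the definition \eqref{EQdelay} together with the fact that $p_e = p$ for all $e$, the argument is transparent. No appeal to the monotonicity properties \eqref{EQmonotonex}--\eqref{EQplusone} is needed in this final step beyond what is already used inside Lemma \ref{LEMarbit}.
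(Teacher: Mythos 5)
Your proposal is correct and follows essentially the same route as the paper: the paper likewise builds the equilibrium one priority class at a time, invoking Lemma \ref{LEMarbit} to get a pure Nash equilibrium of each restricted game $G^{q_k}$ and combining them, relying on the fact that lower-priority players do not enter the delay expression of higher-priority players. You merely make explicit the final verification step that the paper states as ``directly follows.''
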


\section{Conclusion}

We have presented a common generalization of the models of congestion games by 
Ackermann et al.\ \cite{AGMRV08} 
and Bil\`{o} and Vinci \cite{BV23tcs}. 
This generalization gives a positive and full answer to the open question posed 
by Ackermann et al.\ \cite{AGMRV08}. 
We then proved some theorems on the existence of pure Nash equilibria, 
extending those in \cite{AGMRV08} and \cite{BV23tcs}. 

Once the existence of pure Nash equilibria is established, 
a possible direction of future work is to 
design an efficient algorithm for finding a pure Nash equilibrium in our model. 
Analyses on the 
price of anarchy and the price of stability in our model  
is also of interest, 
as is intensively done for the model of Bil\`{o} and Vinci \cite{BV23tcs}.



%
%
%
%
%


 \bibliographystyle{abbrv}
 \bibliography{refs}

\appendix

\section{Omitted Proof}
\label{APPproof}

\subsection{Proofs from Section \ref{SECsingleton}}

\full{In a singleton game, 
a strategy $\{e\}$ is simply denoted by $e$. 
}%
We use a term \emph{state} to refer to a collection of the strategies of some of the players in $N$,  
and 
let $N(S)$ denote the set of the players contributing a state $S$. 
In other words, 
a strategy profile is a special case of a state $S$ for which $N(S)=N$. 
For a state $S$ and a resource $e\in E$, 
let $N_e(S)$ denote the set of players choosing $e$ as the strategy, 
and let $|N_e(S)| = n_e(S)$.

\begin{proof}[Proof of Theorem \ref{THMpbsingletonCP}]
Let $\{ q_1, q_2,\ldots,  q_{k}\}$ denote 
the set of the priority-function values of all players, 
i.e.,\
\begin{align*}
    \mbox{$\{ q_1, q_2,\ldots,  q_{k}\}=\{p(i)\colon i\in N\}$, where $q_1 < q_2 <\cdots <  q_{k}$.}
\end{align*}
For each $k' =1,2,\ldots,{k}$, 
define  
$N^{k'} \subseteq N$ by 
$N^{k'} = \{i\in N \colon p(i) =  q_{k'}\}$. 

Let $S=(e_1,\ldots, e_n)$ be an arbitrary strategy profile of $G$, 
and 
let $S^{k'}$ be a state of $G$ consisting of the strategies of the players in $N^{k'}$ in $S$ 
for each ${k'} =1,2,\ldots, {k}$. 
We prove the theorem by induction on $k'$. 
First, 
define a player-specific singleton congestion game 
$$G^1= (N^1,E,(\mathcal{S}_i)_{i\in N^1}, (d_{i,e}')_{i\in N^1,e\in E} )$$ 
in which 
the delay function $d'_{i,e}\colon \ZZ_{++}\to \RR_+$ ($i\in N^1$, $e\in E$) is defined by 
\begin{align*}
    d_{i,e}' (y) = d_{i,e}(0  ,y) \quad (y\in \ZZ_{++}). 
\end{align*}
It then follows from Theorem \ref{THMpssingleton} that 
$G^1$ has a pure Nash equilibrium $\hat{S}^1$, 
which is attained by a polynomial number of best responses from $S^1$. 

Now 
let $k'\in \{1,\ldots, k-1\}$ and 
suppose that we have a state $\hat{S}^{k'}$ of the players in $\bigcup_{\ell=1}^{k'} N^\ell$ 
in which no player has an incentive to change her strategy. 
Then 
construct a player-specific singleton congestion game 
$$G^{{k'}+1}=(N^{{k'}+1},E,(\mathcal{S}_i)_{i\in N^{{k'}+1}}, (d'_{i,e})_{i\in N^{k'+1},e\in E})$$ 
in which 
\begin{align*}
d_{i,e}' (y) = d_{i,e}(n_e(\hat{S}^{k'})  ,y) \quad (y\in \ZZ_{++})
\end{align*}
for each $e\in E$. 
It again follows from Theorem \ref{THMpssingleton} that 
the game $G^{{k'}+1}$ has a pure Nash equilibrium and 
it is attained by a polynomial number of best responses from 
an arbitrary strategy profile. 

By induction, 
we have proved that a pure Nash equilibrium of a player-specific priority-based singleton congestion game 
can be attained through a polynomial number of best responses from an arbitrary strategy profile. 
\end{proof}

\begin{proof}[Proof of Theorem \ref{THMpspbsingletonIP}]
We prove this theorem by presenting an algorithm for computing a pure Nash equilibrium 
of a priority-based player-specific singleton congestion game 
$
(N, E, (\mathcal{S}_i), (p_e), (d_{i,e}))
$. 
The algorithm constructs a sequence $S_0,S_1,\ldots, S_k$ of states 
in which $N(S_0)=\emptyset$, $N(S_k)=N$, 
and 
\begin{align}
\label{EQnoincentive}
\mbox{each player in $N(S_{k'})$ has no incentive to change her strategy}
\end{align}
for each ${k'}=0,1,\ldots, k$,  
implying that $S_k$ is a pure Nash equilibrium. 

It is clear that \eqref{EQnoincentive} is satisfied for ${k'} =0$. 
Below we show how to construct $S_{{k'}+1}$ from $S_{k'}$ 
under an assumption that $S_{k'}$ satisfies \eqref{EQnoincentive} and $N(S_{k'})\subsetneq N$.

Take a player $i\in N \setminus N(S_{k'})$, 
and let $i$ choose a resource $e\in E$ 
imposing the minimum cost on $i$ 
if $i$ is added to $N_e(S_{k'})$. 
We construct the new state $S_{{k'}+1}$ by changing 
the strategy of each player $j \in N_e(S_{k'})$ in the following way. 
The other players do not change their strategies. 

For the players in $N_e(S_{k'})$, 
we have the following cases A and B. 
\begin{description}
\item[Case A.]
    No player in $N_e(S_{k'})$ comes to have 
    a better response 
    when $i$ is added to $N_e(S_{k'})$. 
\item[Case B.]
        Some players in $N_e(S_{k'})$ comes to have a better response 
        when $i$ is added to $N_e(S_{k'})$. 
\end{description}

In Case A, 
we do not change the strategies of the players in $N_e(S_{k'})$.  
In Case B, 
if a player $j\in N_e(S_{k'})$ 
comes to have a better response, 
it must hold that $p_e(j)\ge p_e(i)$. 
We further separate Case B into the following two cases. 
\begin{description}
\item[Case B1.]
    There exists a player $j\in N_e(S_{k'})$ having a better response 
    and satisfying $p_e(j) = p_e(i)$. 
\item[Case B2.]
    Every player $j\in N_e(S_{k'})$ having a better response 
    satisfies
    $p_e(j) > p_e(i)$. 
\end{description}
In each case, 
the strategies are changed as follows. 
\begin{description}
    \item[Case B1.] 
        Only one player $j\in N_e(S_{k'})$ having a better response and $p_e(j)=p_e(i)$ 
        changes her strategy by 
        discarding her strategy. 
        Namely, $j \not \in N(S_{k'+1})$. 
        The other players do not change their strategies. 
    \item[Case B2.]
        Every player $j\in N_e(S_{k'})$ having a better response discards her strategy. 
\end{description}

We have now constructed the new state $S_{k'+1}$. 
It is straightforward to see that the state $S_{k'+1}$ satisfies \eqref{EQnoincentive}. 
We complete the proof by showing that this algorithm terminates within a finite number of strategy changes.

For a resource $e\in E$,  
let $ q^*_e = \max\{ p_e(i)\colon i\in N \}$. 
For each state $S_{k'}$ appearing in the algorithm, 
define its potential 
$\Phi(S_{k'}) \in \left(\bigtimes_{e\in E} \ZZ_+^{ q_e^*}\right) \times \ZZ_{++}$
in the following manner. 
For each resource $e\in E$, 
define a 
vector $\phi_e \in \ZZ_+^{ q_e^*}$ 
by 
\begin{align*}
    \phi_e( q) = n_e^ q(S_{k'}) \quad ( q = 1,2,\ldots,  q_e^*), 
\end{align*}
which is a contribution of $e$ to the first component of $\Phi(S_{k'})$. 
The first component of $\Phi(S_{k'})$ is constructed by 
ordering the vectors $\phi_e$ ($e\in E$) in the lexicographically nondecreasing order.

For a resource $e\in E$ and a player $i\in N_e(S_{k'})$, 
define $\tol(i,S_{k'}) \in \ZZ_{++}$ as 
the maximum number $y\in \ZZ_{++}$ such that 
$e$ is an optimal strategy for $i$ if 
$i$ shares $e$ with $y$ players having the same priority as $e$, 
i.e.,\ 
\begin{align*}
        d_{i,e}( n_e^{<p_e(i)}(S_{k'}), y ) \le 
        d_{i,e'}( n_{e'}^{<p_{e'}(i)}(S_{k'}), n_{e'}^{p_{e'}(i)}(S_{k'})+1 ) 
\end{align*}
        for each $e'$ with $e'\neq e$ and $\{e'\}\in \mathcal{S}_i$. 
        Note that $i$ herself is counted in $y$, 
and hence $\tol(i,S_{k'})\ge 1$ for each $i\in N(S_{k'})$. 
Now the second component of the potential $\Phi(S_{k'})$ is defined as 
$\sum_{i\in N(S_{k'})} \tol(i,S_{k'})$. 

We prove that the potential $\Phi(S_{k'})$ increases lexicographically monotonically during the algorithm. 
Let a state $S_{k'+1}$ is constructed from $S_{k'}$ and the involvement of  
a player $i\in N \setminus N(S_{k'})$ 
choosing a resource $e\in E$. 
It is straightforward to see that 
$\phi_{e'}$ is unchanged for each $e' \in E \setminus \{e\}$. 
Consider how the vector $\phi_{e}$ changes. 

\subparagraph*{Case A.}
The unique change of $\phi_{e}$ is that 
$\phi_e(p_e(i))$ increases by one, 
implying that the first component of $\Phi(S_{k'+1})$ is lexicographically larger than 
that of $\Phi(S_{k'})$.

\subparagraph*{Case B1.}
Let $j^*\in N_e(S_{k'})$ denote the unique player who discard her strategy. 
Recall that $p_e(j^*) = p_e(i)$. 
It follows that $\phi_{e}$ is unchanged, 
and hence the first component of $\Phi(S_{k'+1})$ is the same as that of $\Phi(S_{k'})$. 
The second component of $\Phi(S_{{k'}+1})$ is strictly larger than that of $\Phi(S_{k'})$, 
because 
\begin{align*}
    {}&{}N(S_{{k'}+1}) = (N(S_{k'})\cup \{i\})\setminus\{j^*\}, \\
    {}&{}\tol(j,S_{{k'}+1}) = \tol(j,S_{{k'}}) \quad \mbox{for each $j\in N(S_{k'}) \setminus \{i,j^*\}$}, \\
    {}&{}\tol(i,S_{{k'}+1}) \ge n_e^{p_e(i)}(S_{k'})+1, \\
    {}&{}\tol(j^*,S_{{k'}}) = n_e^{p_e(i)}(S_{k'}).
\end{align*}

\subparagraph*{Case B2.} 
    It holds that 
    $n_e^ q(S_{{k'}+1}) = n_e^ q(S_{k'})$ for each $q < p_e(i)$ 
    and 
    $n_e^{p_e(i)}(S_{{k'}+1}) = n_e^{p_e(i)}(S_{k'})+1$. 
    Thus, 
    the first component of $\Phi$  lexicographically increases. 
\end{proof}

\subsection{Proofs from Section \ref{SECg-corr}}

\begin{proof}[Proof of Proposition \ref{PROPnonps}]
Given a priority-based congestion game $(N,E,(\mathcal{S}_i),(p_e), (d_e) )$ with inconsistent priorities, 
construct a generalized correlated two-sided market $(N,E,(\mathcal{S}_i),(c_{i,e}), (d_e') )$ with ties 
by defining 
\begin{alignat*}{2}
    {}&{}c_{i,e} = p_e(i) &         \quad &(i\in N, e \in E), \\
    {}&{}d_e'(c,x,y) = d_e(x,y).&   \quad &(e\in E, c\in \RR_+, x\in \ZZ_+, y\in \ZZ_{++}). 
\end{alignat*}
It is straightforward to see that 
the delay function $d_e'$ ($e\in E$) satisfy
\eqref{EQmonotonec3}--\eqref{EQplusone3} in which $d_e$ is replaced by $d_e'$, 
if the original delay function $d_e$ satisfies \eqref{EQmonotonex}--\eqref{EQplusone}. 
\end{proof}

\begin{proof}[Proof of Proposition \ref{PROPps}]
Given 
a generalized correlated two-sided market $(N,E,(\mathcal{S}_i),(c_{i,e}), (d_e) )$ with ties, 
construct 
a priority-based player-specific congestion game $(N,E,(\mathcal{S}_i),(p_e), (d_{i,e}') )$ with inconsistent priorities 
as follows. 
For each resource $e\in E$, 
construct its priority function $p_e\colon N \to \ZZ_+$ 
in the same way as in \eqref{EQpriority}, 
and 
define its delay function $d_{i,e}'\colon \ZZ_+\times\ZZ_{++}\to \RR_+$ specific to a player $i\in N$ by 
\begin{alignat*}{2}
    {}&{}d_{i,e}'(x,y) = d_e(p_e(i),x,y) \quad (x\in \ZZ_+, y\in \ZZ_{++}). 
\end{alignat*}
It is straightforward to see that 
the delay function $d_{i,e}'$ ($i\in N$, $e\in E$) satisfies 
\eqref{EQmonotonex}--\eqref{EQplusone} 
in which $d_e$ is replaced by $d_{i,e}'$, 
if the original delay function $d_e$ satisfies \eqref{EQmonotonec3}--\eqref{EQplusone3}.  
\end{proof}

\begin{proof}[Proof of Theorem \ref{THMtwosidedsingleton}]
For each strategy profile $S=(e_1,\ldots, e_n)$, 
define its potential $\Phi(S) \in (\RR_+ \times \ZZ_{++})^n$ as follows. 
Let $e\in E$ be a resource, 
and 
let $Q_e(S)=\{ q_1,..., q_{k^*}\}$ be a set of integers such that 
$Q_e(S)=\{q\colon n_e^q(S) > 0\}$ and $ q_1 < \cdots <  q_{k^*}$. 
The resource $e\in E$ contributes the following 
$n_e(S)$ vectors in $\RR_+ \times \ZZ_{++}$
to $\Phi(S)$:
\begin{align*}
&{}(d_e( q_1,0,1),\,  q_1 ), \ldots, (d_e( q_1,0,n_e(S, q_1)),\,  q_1), \\
&{}(d_e( q_2, n_e^{q_1}(S),1),\,  q_2), \ldots, 
(d_e ( q_2,n_e^{q_1}(S),n_e^{q_2}(S) ),\,  q_2 ), \\
&{}\ldots, \\
&{}(d_e( q_k, n_e^{<q_k}(S), 1 ),\,  q_k), \ldots, 
( d_e( q_k,n_e^{<q_k}(S), n_e^{q_k}(S)),\,  q_k), \\
&{}\ldots, \\
&{}(d_e( q_{k^*},n_e^{<q_{k^*}}(S), 1 ),\,  q_{k^*}), \ldots, 
( d_e( q_{k^*},n_e^{<q_{k^*}}(S), n_e^{q_{k^*}}(S)),\,  q_{k^*}). 
\end{align*}

The potential $\Phi(S)$ is obtained by 
ordering the $n$ vectors contributed by all resources in the lexicographically nondecreasing order. 
We can observe that the order of the $n_e(S)$ vectors shown above is lexicographically nondecreasing 
in the following way. 
It follows from the property \eqref{EQmonotoney3} that 
\begin{align*}
{}&{}d_e( q_k, n_e^{<q_{k}}(S), y ) \le 
d_e( q_k,n_e^{<q_{k}}(S), y+1 ) 
\end{align*}
for each $k=1,\ldots, k^*$ and 
for each $y = q_1,\ldots, q_{k^ -1}$. 
It further follows 
from the properties \eqref{EQmonotonec3}, \eqref{EQmonotonex3} and \eqref{EQplusone3} that 
\begin{align}
\label{EQnextcxy}
d_e (c,x, y )
\le d_e (c, x+y-1,1 ) 
\le d_e (c', x+y,1 ) 
\end{align}
if $c < c'$, 
and in particular 
\begin{align}
\label{EQnext3}
d_e( q_k,n_e^{<q_k}(S), n_e^{q_k}(S))
\le d_e( q_{k+1}, n_e^{<q_{k+1}}(S),1)
\quad 
\mbox{for each $k = 1,\ldots, k^* -1$.}
\end{align}

Suppose that a player $i$ has a better response 
in a strategy profile $S$, 
which changes her strategy from $e$ to $e'$, 
and let $S'=(S_{-i}, e')$. 
Below we show that $\Phi(S') \preclex \Phi(S)$, 
which completes the proof. 

Let $c_{i,e} =  q$ 
and 
$c_{i,e'} =  q'$. 
Since the delay imposed on $i$ becomes smaller due to the better response, 
it holds that 
\begin{align}
\label{EQbr3}
d_{e'}(
    q', n_{e'}^{<q'}(S), n_{e'}^{q'}(S) +1 
)
< 
d_e(
 q,
n_e^{<q}(S), n_e^{q}(S)
). 
\end{align}
Note that 
$e'$ contributes 
a vector 
\begin{align}
\label{EQnewvector3}
( 
d_{e'}(
    q', n_{e'}^{<q'}(S), n_{e'}^{q'}(S) +1 
)
, 
q' ) 
\end{align}
to $\Phi(S')$ but not to $\Phi(S)$. 
To prove $\Phi(S') \preclex \Phi(S)$, 
it suffices to show that 
a vector belonging to $\Phi(S)$ but not to $\Phi(S')$ is 
lexcographically larger than the vector \eqref{EQnewvector3}. 

First, 
consider the vectors in $\Phi(S)$ contributed by $e$. 
Let $Q_e(S)=\{q_1,\ldots, q_{k^*}\}$, 
where $q_1 < \cdots < q_{k^*}$. 
Due to the better response of $i$, 
the vectors in $\Phi(S)$ whose second component is larger than $ q$ changes, 
because the second argument of the delay function $d_e$ decreases by one. 
If 
$q=q_{k^*}$, 
then 
those vectors do not exist 
and thus we are done. 
Suppose that 
$q=q_k$ for some $k<k^*$. 
Among those vectors, 
the lexicographically smallest one is 
$$
( 
    d_e(
         q_{k+1}, n_e^{<q_{k+1}}(S), 1
    ),
     q_{k+1}
). 
$$
It follows from 
\eqref{EQnext3} and \eqref{EQbr3} that 
$$
d_{e'}(
    q', n_{e'}^{<q'}(S), n_{e'}^{q'}(S) +1 
)
< 
d_e(
 q,
n_e^{<q}(S), n_e^{q}(S)
)
\le 
d_e(
 q_{k+1},n_e^{<q_{k+1}}(S),1
) . 
$$
Hence, 
it holds that 
$$
( 
d_{e'}(
    q', n_{e'}^{<q'}(S), n_{e'}^{q'}(S) +1 
)
, 
q' ) 
\preclex
( 
d_e(
 q_{k+1},n_e^{<q_{k+1}}(S),1
),
     q_{k+1}
). 
$$

Next, 
consider the vectors in $\Phi(S)$ contributed by $e'$. 
Without loss of generality, 
suppose that 
there exists a positive integer $q''$ such that 
$q''\in Q_{e'}(S)$ and $q'' > q'$. 
Let $q''$ be the smallest integer satisfying these conditions. 
The lexicographically smallest vector in $\Phi(S)$ contributed by $e'$ and changed by the better response of $i$ is 
$$
(
    d_{e'}(
        q'', n_{e'}^{<q''}(S) , 1 
    ),
    q''
). 
$$
It then follows from 
the properties \eqref{EQmonotonec3} and \eqref{EQplusone3} 
that 
$$
    d_{e'}(
        q', n_{e'}^{<q'}(S) , n_{e'}^{q'}(S)+1 
    )
\le 
    d_{e'}(
        q'', n_{e'}^{<q'}(S) , n_{e'}^{q'}(S)+1 
    )
\le
d_{e'}(
    q'', n_{e'}^{< q''}(S) , 1 
)
$$
and thus 
$$
( 
    d_{e'}( 
        q', n_{e'}^{<q'}(S), n_{e'}^{q'}(S)+1 
    ), 
q' ) 
\preclex 
(
    d_{e'}(
        q'', n_{e'}^{< q''}(S) , 1 
    ),
    q''
), 
$$
completing the proof. 
\end{proof}

\subsection{Proof from Section \ref{SECbeyond}}

\begin{proof}[Proof of Lemma \ref{LEMarbit}]
    Let $i$ be a player in $N^ q$, 
    and $S_i\in \mathcal{S}_i$ be an arbitrary strategy of $i$. 
    For each $S_i'\in \mathcal{S}_i$, 
    it holds that 
    \begin{align*}
        \Phi(S^ q_{-i},S_i') - \Phi(S^ q)
        {}&{}= \sum_{e\in S_i' \setminus S_i} d_e(n_e^{< q}(S), n_e(S^ q)+1) - 
        \sum_{e\in S_i\setminus S_i'}d_e(n_e^{< q}(S), n_e(S^ q)) \\
        {}&{} = \gamma_i(S^ q_{-i},S_i') - \gamma_i(S^ q), 
    \end{align*}
    and hence the function $\Phi$ is a potential function of $G^ q$. 
\end{proof}

\end{document}